\documentclass[12pt, a4paper]{article}
\usepackage{siunitx,booktabs}
\usepackage{amsmath,amsthm,amsfonts,amscd,amssymb} 
\usepackage{verbatim}      	    
\usepackage{makeidx}       	    
\usepackage{url}
\usepackage{rotating}	
\usepackage{appendix}
\usepackage{endnotes}	
\usepackage{setspace}
\usepackage{tabu}
\onehalfspacing
\usepackage[left=1.25in, right=1.25in, top=1.25in, bottom=1.25in]{geometry}%
\usepackage{booktabs}
\usepackage{tabularx}
\usepackage{graphicx}
\usepackage{scalefnt}
\usepackage{float}
\usepackage{titlesec}
\usepackage{caption}
\usepackage{siunitx}
\usepackage{subfig}
\usepackage{longtable}
\usepackage{siunitx}
\usepackage{enumitem}

\usepackage{dcolumn}
\newcolumntype{d}[1]{D{.}{.}{#1}}


\floatstyle{ruled}
\newfloat{algorithm}{tbp}{loa}
\providecommand{\algorithmname}{Algorithm}
\floatname{algorithm}{\protect\algorithmname}

\usepackage[round]{natbib}
\bibliographystyle{chicago}
\usepackage[hyperindex,breaklinks]{hyperref}
\hypersetup{colorlinks=true,       
   linkcolor=red,
    citecolor=blue,        
    filecolor=magenta,      
    urlcolor=cyan           
}

\titleformat*{\section}{\large\bfseries\filcenter}%
\titleformat*{\subsection}{\normalsize\bfseries}
\titleformat*{\subsubsection}{\normalsize\itshape}

\titlelabel{\thetitle.\quad}

\newtheorem{proposition}{Proposition}
\newtheorem{lemma}{Lemma}
\newtheorem{theorem}{Theorem}
\newtheorem{assumption}{Assumption}

\usepackage{xr}  

\begin{document}
\title{Efficient Likelihood-based Estimation via Annealing for Dynamic Structural Macrofinance Models}
\author{\normalsize {Andras Fulop}\thanks{ESSEC Business School, Paris-Singapore. Email: fulop@essec.edu.}   \and
            \normalsize {Jeremy Heng}\thanks{ESSEC Business School, Paris-Singapore. Email: heng@essec.edu.} \and
           \normalsize  {Junye Li}\thanks{Fudan University. Email: li\_junye@fudan.edu.cn}}
\date{\small{First version: January 2021; This version: December 2021}}
\maketitle

\begin{abstract}
Most solved dynamic structural macrofinance models are non-linear and/or non-Gaussian state-space models with high-dimensional and complex structures. We propose an annealed controlled sequential Monte Carlo method that delivers numerically stable and low variance estimators of the likelihood function. The method relies on an annealing procedure to gradually introduce information from observations and constructs globally optimal proposal distributions by solving associated optimal control problems that yield zero variance likelihood estimators. To perform parameter inference, we develop a new adaptive SMC$^2$ algorithm that employs likelihood estimators from annealed controlled sequential Monte Carlo. We provide a theoretical stability analysis that elucidates the advantages of our methodology and asymptotic results concerning the consistency and convergence rates of our SMC$^2$ estimators. We illustrate the strengths of our proposed methodology by estimating two popular macrofinance models: a non-linear new Keynesian dynamic stochastic general equilibrium model and a non-linear non-Gaussian consumption-based long-run risk model.

\noindent \textbf{Keywords}: Sequential Monte Carlo, Particle Filters, Approximate Dynamic Programming, Annealing, SMC$^2$, DSGE, Long-Run Risk.\\
\end{abstract}

\clearpage

\section{Introduction}

In macroeconomics and finance, dynamic structural models provide a convenient theoretical framework to explain economic fluctuations, government policies, and asset prices movements.\footnote{For example, dynamic stochastic general equilibrium (DSGE) models, spanning from real business cycle models \citep{rbc1982} to new Keynesian models \citep*[see, e.g.,][]{christiano2005, smets2003}, are commonly used to explain and predict comovements of aggregate macroeconomic fundamentals over the business cycle; consumption-based asset pricing models seek to relate agents' consumption to asset prices and explore fundamental determinants of asset prices \citep{lucas1978}.} A common feature of these models is that agents' decision rules are derived from assumptions of preferences and technologies by solving intertemporal optimization problems. In empirical applications, these models need to be numerically solved and estimated on real macroeconomic and financial data. For a long time, the literature has been following an approximate approach that these models are first log-linearized and cast into approximate linear state-space models and are then estimated using either likelihood-based methods or moment-based methods.\footnote{The estimation of structural models in macroeconomics and finance follow very different directions. In macroeconomics, the log-linearized models are usually estimated via Bayesian MCMC methods; see, for example, \citet*{herbst2016} and \citet*{schorfheide2016chapter}. In finance, the typical practice involves formulating a set of first-order optimality conditions that map onto moment-based estimation of the key parameters of interest; see, for example, \citet*{bansal2007res}, \citet*{bansal2016jme}, and \citet*{gallant2019rfs}.}

It is now known that second-order approximation errors in linear solutions of dynamic economic models have first-order effects on likelihood estimation \citep*{fernandez2006ecta, fernandez2007estimating}, and that errors in the likelihood estimation are compounded with the sample size. In recent work, \citet*{pohl2018higher} show that log-linearization of consumption-based asset pricing models with long-run risks may result in economically significant errors when state variables are persistent. However, when models are solved using more accurate numerical methods such as high-order perturbation methods \citep*{schmitt2004} or projection methods \citep{judd1992projection}, the solved models become non-linear and/or non-Gaussian state-space models with high-dimensional and complex structures. The aim of this article is to propose an efficient econometric toolbox based on sequential Monte Carlo (SMC) methods that facilitate likelihood-based estimation when non-linear numerical methods are used to solve models.

Our contribution is threefold. First, we propose a novel methodology to efficiently estimate the likelihood function by building on state-of-the-art SMC methods, also known as particle filters. While particle filters have been applied to estimate the likelihood of dynamic economic models \citep[see, e.g.,][]{fernandez2007estimating, deJong2013Restud, herbst2019tempered, fulop2020bayesian}, the successful application of this approach in model estimation remains challenging due to the large variance of the resulting likelihood estimator. A recent work by \citet{heng2020controlled} aims to address this issue using ideas from the optimal control literature.
Their proposed methodology, termed as controlled SMC, constructs \emph{globally optimal} proposal distributions that take the {entire sequence of observations} into account using approximate dynamic programming schemes.
However, for dynamic structural macrofinance models with complex structures and highly informative observations,
controlled SMC can easily fail due to serious numerical instabilities, particularly when the initial proposal distributions are far from optimality.

We develop a new methodology, termed as annealed controlled SMC, which prevents such numerical issues from manifesting. The central idea is to rely on annealing to gradually introduce information from the observations and iteratively refine the proposal distributions as the inverse temperature increases. The original controlled SMC method of \citet{heng2020controlled} can be seen as a special case of our approach with the inverse temperature fixed at one. Our proposed methodology yields a zero-variance likelihood estimator when the optimal proposal distributions are attainable. Practical implementation requires function approximations, hence resulting in sub-optimal proposal distributions. We provide a theoretical analysis to characterize the quality of our proposal distributions, which elucidates the properties and advantages of annealed controlled SMC. Our results reveal the importance of adopting the annealing procedure. The use of annealing has been explored in online settings to construct \emph{locally optimal} proposal distributions that take the next observation into account \citep{herbst2019tempered,godsill2001improvement}. In an offline context where globally optimal proposals are desired, the use of annealing to alleviate numerical instabilities was mentioned in earlier work by \citet{scharth2016particle}, but the main and crucial difference is that their proposal learning procedure does not exploit the proposals learnt at lower inverse temperatures.

Second, to perform full Bayesian inference, we employ annealed controlled SMC within a novel adaptive SMC$^2$ algorithm that
recursively approximates a sequence of annealed posterior distributions of the latent states and model parameters. Compared to existing annealed SMC$^2$ routines such as \citet*{duan2015density} and \citet*{svensson2018learning}, our adaptive approach has two key features.
The flexibility of our framework allows us to utilize computationally inexpensive SMC methods (e.g., bootstrap particle filter) at low inverse temperatures, thereby efficiently ruling out unlikely parameters at early stages of the algorithm, and switch to annealed controlled SMC method, which is more computationally demanding but accurate, at higher inverse temperatures when the algorithm considers more promising regions of the parameter space.
Moreover, under suitable assumptions, we establish law of large numbers and central limit theorems for our estimators of posterior expectations and the model evidence as the number of parameter particles goes to infinity, for any given number of state particles larger than one.


Compared to MCMC methods or particle MCMC methods \citep*{andrieu2010particle} that sample from the posterior distribution using a single Markov chain, our algorithm shares the many advantages of SMC samplers \citep{del2006sequential,dai2020invitation}. This includes parallelism over the parameter particles, automated tuning of the inverse temperatures and proposal transitions in the parameter space, and an estimator of the model evidence that facilitiates model comparison.

Third, we apply our adaptive SMC$^2$ with annealed controlled SMC nested to estimate two popular dynamic structural models in macroeconomics and finance. The first model is a prototypical new Keynesian dynamic stochastic general equilibrium model (DSGE) that has been studied in \citet{woodford2003}, \citet{an2007bayesian} and \citet{herbst2016}; the second is a consumption-based long-run risk asset pricing model, in which consumption volatility is modelled as an autoregressive gamma process \citep{gourieroux2006autoregressive} instead of an autoregressive process that is typically adopted in the standard long-run risk model \citep*[see, e.g.,][]{bansal2004risks, bansal2012empirical}. The first application is described in the main article and the second in the Appendix for the sake of brevity.

We conduct extensive simulation studies based on the above new Keynesian DSGE model that is solved using the log-linearized method \citep{sims2002solving} or the second-order perturbation method with pruning \citep*{schmitt2004, schmitt2007optimal, fernandez2018res}. For the linearized model, whose likelihood function can be evaluated exactly using a Kalman filter, we find that annealed controlled SMC with moderate number of particles delivers a likelihood estimator with negligible variance. In contrast to the bootstrap particle filter \citep{gordon1993novel}, the performance of annealed controlled SMC is very robust to the magnitude of the standard deviation of the measurement error.
For the non-linear model, whose likelihood function is intractable, we find that the variance of annealed controlled SMC log-likelihood estimator with modest number of particles is about four orders of magnitude smaller than that of the bootstrap particle filter with very large number of particles, and that the magnitude of variance reduction holds even for very small standard deviation of the measurement error.
Furthermore, we show that the likelihood estimate from the approximate log-linearized model is much smaller than that of the desired non-linear model obtained with annealed controlled SMC.

We then employ our proposed methodology to estimate the new Keynesian DSGE model on real macroeconomic and financial data. While the log-linearized approximation of this model has been investigated in \citet{an2007bayesian} and empirically estimated in \citet{herbst2016}, the corresponding non-linear model has not been fully studied yet in the literature.\footnote{In a recent paper by \citet{aruoba2020wp}, a variant of this new Keynesian DSGE model is solved using a piecewise-linear approximation method and is estimated using a particle MCMC method.}
We estimate the resulting non-linear model under the second-order perturbation method with pruning based on three observables: quarterly per capita GDP growth rate, quarterly inflation, and quarterly annualized interest rate. To avoid the issue of having a zero lower bound on the interest rate, we focus on the pre-crisis sample, ranging from 1983:Q1 to 2007:Q4 with a total of 100 observations.

We estimate the model in two settings: when the standard deviations of the measurement errors are fixed at 20\% of the corresponding sample standard deviations, as is commonly considered in the literature, and when they are treated as free parameters to be inferred using the data.
We notice that while the estimates suggest a low degree of price rigidity when the standard deviations of the measurement errors are fixed, the opposite conclusion is reached when they are treated as free parameters. This suggests the practice of fixing the standard deviations of the measurement errors may distort model implications.
Furthermore, we find that the estimated standard deviations are very different from the fixed values and suggest that the model fits the interest rate better than the output growth rate and the inflation rate.

The remainder of the paper is organized as follows. Section \ref{sec: ssm} describes the models of interest and its state-space representation. Section \ref{sec: likelihood} introduces our annealed controlled SMC method to efficiently estimate the likelihood of these models. Section \ref{sec: parameter_inference} details our adaptive SMC$^2$ algorithm for parameter inference. Section \ref{sec:apps} presents an application on a non-linear new Keynesian DSGE model; another application on a non-linear and non-Gaussian consumption-based long-run risk asset pricing model is given in the Appendix. Finally, Section \ref{sec: conclusion} concludes the paper.
Supplementary details and proofs of all theoretical results are provided in the Appendix.

\section{State-space representation}\label{sec: ssm}

In most dynamic structural macrofinance models, the agents' decision rules are derived from assumptions of preferences and technologies by solving intertemporal optimization problems. For estimation and empirical applications, these models are first numerically solved and then cast into the framework of state-space models. For time $t=1,\ldots,T$, let $s_t\in\mathbb{S}\subseteq\mathbb{R}^d$ denote a vector of latent state variables, which may include both endogenous and exogenous variables, and $y_t\in\mathbb{Y}\subseteq\mathbb{R}^{d_y}$ denote a vector of observations. Given a vector of unknown parameters $\theta\in\Theta\subseteq\mathbb{R}^{d_{\theta}}$, solving a dynamic structural model gives the following relations
     \begin{eqnarray}\label{ssm:state_obs}
           s_0=\Phi_{\theta}^{(0)}(\varepsilon_0),\quad s_t = \Phi_{\theta}(s_{t-1}, \varepsilon_t),\quad
           y_t = \Psi_{\theta}(s_{t-1}, s_{t}, u_t),
     \end{eqnarray}
where $\Phi_{\theta}^{(0)}$, $\Phi_{\theta}$ and $\Psi_{\theta}$ are non-linear functions from the model solution,
$\varepsilon_t$ and $u_t$ represent exogenous shocks and observation noise, respectively.
When model-implied observables are deterministically related to state variables,
$u_t$ usually captures the measurement error of observations.

We view the sequence of states $(s_t)_{t=0}^T$ as a Markov chain on the state-space $\mathbb{S}$ evolving according to
\begin{equation}\label{eqn:latent_process}
	s_0 \sim \mu_{\theta},\quad s_t \sim f_{\theta}(\cdot|s_{t-1}),\quad t=1,\ldots,T,
\end{equation}
where $\mu_{\theta}(ds_0)$ denotes an initial distribution and $f_{\theta}(ds_t | s_{t-1})$ a Markov transition kernel on $\mathbb{S}$ that depend on parameters $\theta$. For most models in the literature, these measures either admit densities with respect to the Lebesgue measure on $\mathbb{R}^d$, or are supported on a lower-dimensional subspace when there are both endogenous and exogenous state variables.

The sequence of observations $(y_t)_{t=0}^T$ are assumed to be conditionally independent given the latent process $(s_t)_{t=0}^T$ and are distributed according to
\begin{equation}\label{eqn:observation_model}
	y_t \sim g_{\theta}(\cdot |s_{t-1},s_t), \quad t=1,\ldots,T,
\end{equation}
where $g_{\theta}(y_t |s_{t-1},s_t)$ denotes the observation density that is also parameter-dependent. The dependence on the latent states at the previous and current times in Equation \eqref{eqn:observation_model} is convenient when modeling asset returns data; longer time dependencies can also be accommodated in our proposed methodology. \bigskip
%
%

\section{Likelihood estimation}\label{sec: likelihood}

Given an observation sequence $y_{1:T}=(y_t)_{t=1}^T\in\mathbb{Y}^{T}$, the complete likelihood is
{\small
\begin{equation}\label{eqn:likelihood}
	p(ds_{0:T}, y_{1:T} | \theta) = p(ds_{0:T}|\theta)p(y_{1:T} | s_{0:T}, \theta) = \mu_{\theta}(ds_0)\prod_{t=1}^Tf_{\theta}(ds_t | s_{t-1}) g_{\theta}(y_t |s_{t-1},s_t).
\end{equation}}To perform parameter inference, we have to integrate out the latent process to compute the likelihood function
\begin{equation}\label{eqn:likelihood}
	p(y_{1:T} | \theta) = \int_{\mathbb{S}^{T+1}}p(ds_{0:T}, y_{1:T} | \theta),
\end{equation}
and to estimate the latent states, we need to characterize the smoothing distribution
\begin{equation}\label{eqn:smoothing_distribution}
	p(ds_{0:T} | y_{1:T}, \theta) = \frac{p(ds_{0:T},y_{1:T} | \theta)}{p(y_{1:T} | \theta)}.
\end{equation}

To facilitate computation and prevent numerical instabilities that often arise when working with dynamic structural macrofinance models,
it will be beneficial to gradually introduce the influence of the observations $y_{1:T}$. We do so by defining, for an inverse temperature $\lambda\in[0,1]$, the annealed distribution
\begin{eqnarray}\label{eqn:tempered_likelihood}
	p(ds_{0:T}, y_{1:T} | \theta, \lambda) &=& p(ds_{0:T}|\theta)p(y_{1:T} | s_{0:T}, \theta,\lambda)\notag\\
	 &=& \mu_{\theta}(ds_0)\prod_{t=1}^Tf_{\theta}(ds_t | s_{t-1}) g_{\theta}(y_t |s_{t-1},s_t)^{\lambda},
\end{eqnarray}
and the corresponding likelihood and smoothing distribution as
{\small
\begin{equation}\label{eqn:tempered_smoothing}
	p(y_{1:T} | \theta, \lambda) = \int_{\mathbb{S}^{T+1}}p(ds_{0:T}, y_{1:T} | \theta, \lambda), \quad
	p(ds_{0:T} | y_{1:T}, \theta, \lambda) = \frac{p(ds_{0:T},y_{1:T} | \theta, \lambda)}{p(y_{1:T} | \theta, \lambda)}.
\end{equation}}

By gradually increasing $\lambda$, we introduce a path of distributions between the law of the latent process
$p(ds_{0:T}|\theta)$ and the desired smoothing distribution $p(ds_{0:T} | y_{1:T}, \theta)$ in Equation \eqref{eqn:smoothing_distribution}. In the context of state-space models, similar motivations can be found in \citet*{godsill2001improvement}, \citet*{svensson2018learning}, and \citet*{herbst2019tempered}. We note that the likelihood $p(y_{1:T} | \theta, \lambda)$ in Equation \eqref{eqn:tempered_smoothing} is different but related to the tempered likelihood $p(y_{1:T} | \theta)^{\lambda}$, considered in \citet*{duan2015density}, via Jensen's inequality.

\subsection{Sequential Monte Carlo}\label{sec:SMC}

For most models of practical interest, the quantities in Equation \eqref{eqn:tempered_smoothing} are intractable and we have to rely on Monte Carlo approximations. Sequential Monte Carlo (SMC) methods, also known as particle filters, can provide state-of-the-art approximations by simulating an interacting particle system of size $N\in\mathbb{N}$ \citep*{doucet2001introduction,chopin2020introduction}. In what follows, we present a generic description of SMC that encompasses several algorithms in a common framework.

At the initial time, we sample $N$ independent states $(s_0^{(n)})_{n=1}^N$ from a proposal distribution
$q_0(ds_0|\theta,\lambda)$ on $\mathbb{S}$. The states are then assigned normalized weights $(W_0^{(n)})_{n=1}^N$ that sum
to one according to a weight function $W_0^{(n)}\propto w_0(s_0^{(n)};\theta,\lambda)$. We then sample from the weighted particle approximation $\sum_{n=1}^NW_0^{(n)}\delta_{s_0^{(n)}}(ds_0)$ to multiply states with high weights and discard states
that are unlikely. This operation, known as resampling, can be seen as sampling ancestor indexes $(a_0^{(n)} )_{n=1}^N$ from a distribution $r(\cdot|W_0^{(1)},\ldots,W_0^{(N)})$ on $\{1,\ldots,N\}^N$.
We will consider multinomial resampling where $(a_0^{(n)} )_{n=1}^N$ are independent samples
from the categorical distribution on $\{1,\ldots,N\}$ with probabilities $(W_0^{(n)})_{n=1}^N$.
For time $t=1,\ldots,T$, we move the resampled states using a proposal transition kernel
$q_t(ds_t|s_{t-1},\theta,\lambda)$ on $\mathbb{S}$, i.e. sample $s_t^{(n)}\sim q_t(\cdot|s_{t-1}^{a_{t-1}^{(n)}},\theta,\lambda)$
independently for $n=1,\ldots,N$. These samples are then assigned normalized weights $(W_t^{(n)})_{n=1}^N$
according to the weight function $W_t^{(n)}\propto w_t(s_{t-1}^{a_{t-1}^{(n)}},s_t^{(n)};\theta,\lambda)$.
If $t<T$, we perform resampling by drawing the ancestor indexes $(a_t^{(n)})_{n=1}^N$ from $r(\cdot|W_t^{(1)},\ldots,W_t^{(N)})$. We assume that the smoothing distribution $p(ds_{0:T}|y_{1:T},\theta,\lambda)$ at inverse temperature $\lambda\in[0,1]$ is absolutely continuous with respect to the law of the proposal process
\begin{equation}\label{eqn:proposal_law}
	q(ds_{0:T}|\theta,\lambda)=q_0(ds_0|\theta,\lambda)\prod_{t=1}^Tq_t(ds_t|s_{t-1},\theta,\lambda)
\end{equation}
with density written as $p(s_{0:T}|y_{1:T},\theta,\lambda)/q(s_{0:T}|\theta,\lambda)$,
and that the choice of weight functions satisfy
\begin{equation}\label{eqn:weights_requirement}
	w_0(s_0;\theta,\lambda)\prod_{t=1}^Tw_t(s_{t-1}, s_t;\theta,\lambda) = \frac{p(s_{0:T},y_{1:T}|\theta,\lambda)}{q(s_{0:T}|\theta,\lambda)}.
\end{equation}

The complexity of SMC methods is $\mathcal{O}(NT)$, and its memory requirement is also $\mathcal{O}(NT)$
if all states $(s_t^{(n)})_{t=0,n=1}^{T,N}$ and ancestor indexes $(a_t^{(n)})_{t=0,n=1}^{T-1,N}$ are stored.
The latter can be lowered to $\mathcal{O}(T+N\log N)$ using efficient implementations \citep*{jacob2015path}.
Given the output of SMC, an unbiased estimator of the likelihood $p(y_{1:T}|\theta,\lambda)$
at the inverse temperature $\lambda\in[0,1]$ is
\begin{equation}\label{eqn:smc_likelihood}
	\hat{p}(y_{1:T}|\theta,\lambda) = \left\lbrace\frac{1}{N}\sum_{n=1}^Nw_0(s_0^{(n)};\theta,\lambda)\right\rbrace
	\left\lbrace\prod_{t=1}^T\frac{1}{N}\sum_{n=1}^Nw_t(s_{t-1}^{(a_{t-1}^{(n)})}, s_t^{(n)};\theta,\lambda)\right\rbrace,
\end{equation}
and a weighted particle approximation of the corresponding smoothing distribution is given by
\begin{equation}\label{eqn:smc_smoothing_approx}
	\hat{p}(ds_{0:T}|y_{1:T},\theta,\lambda) = \sum_{n=1}^NW_T^{(n)}\delta_{s_{0:T}^{(n)}}(ds_{0:T}).
\end{equation}

In the above, each trajectory $s_{0:T}^{(n)}$ is formed by tracing the ancestral lineage of $s_T^{(n)}$, i.e.
$s_{0:T}^{(n)}=(s_t^{(l_t^{(n)})})_{t=0}^T$ with particle indexes $(l_t^{(n)})_{t=0}^T$ given by the backward recursion
$l_T^{(n)}=n$ and $l_t^{(n)} = a_t^{(l_{t+1}^{(n)})}$ for $t=T-1,\ldots,0$.
Using the particle approximation in Equation \eqref{eqn:smc_smoothing_approx}, smoothing expectations
$\int_{\mathbb{S}^{T+1}} \varphi(s_{0:T})p(ds_{0:T}|y_{1:T},\theta,\lambda)$ for any
integrable function $\varphi:\mathbb{S}^{T+1}\rightarrow\mathbb{R}$ can be approximated by the weighted average
$\sum_{n=1}^NW_T^{(n)}\varphi(s_{0:T}^{(n)})$.
Convergence properties of these estimators in the limit of the number of particles $N\rightarrow\infty$
are well-understood \citep{del2004feynman,chopin2004central}. However, a successful implementation of SMC in practice crucially relies on the choice of proposal distributions in Equation \eqref{eqn:proposal_law}.
Poor choices will require prohibitively large number of samples to obtain adequate likelihood and state estimators.

In our framework, the bootstrap particle filter (BPF) of \citet*{gordon1993novel} corresponds to having
the law of the latent process in Equation \eqref{eqn:latent_process} as proposal, i.e.
$q_0(ds_0|\theta)=\mu_{\theta}(ds_0)$, $q_t(ds_t|s_{t-1},\theta)=f_{\theta}(ds_t|s_{t-1})$ for $t=1,\ldots,T$,
and the weight functions $w_0(s_0)=1$, $w_t(s_{t-1},s_t;\theta,\lambda)=g_{\theta}(y_t|s_{t-1},s_t)^{\lambda}$ for $t=1,\ldots,T$. It is straightforward to verify that these choices satisfy Equation \eqref{eqn:weights_requirement}.
Although simple to implement, the efficiency of BPF estimators can be particularly poor in practice if the observations are informative, unless the inverse temperature $\lambda$ is small enough to limit the influence of
the observations. The fully adapted auxiliary particle filter (APF) introduced by \citet*{pitt1999filtering} and \citet*{carpenter1999improved} can give better performance by constructing proposals that take the current observation into account. This corresponds to selecting the \emph{locally optimal} proposals
$q_0(ds_0|\theta)=\mu_{\theta}(ds_0)$, $q_t(ds_t|s_{t-1},\theta,\lambda)=p(ds_t|s_{t-1},y_t,\theta,\lambda)$
for $t=1,\ldots,T$, and the weight functions $w_0(s_0)=1$,
$w_t(s_{t-1};\theta,\lambda)=p(y_t|s_{t-1},\theta,\lambda)$ for $t=1,\ldots,T$.
Exact implementation of APF is not feasible for many non-linear and/or non-Gaussian state-space models,
such as the ones to be considered in this paper, as the proposal transitions and weight functions are intractable.
Various tractable approximations of APF have been considered \citep[see, e.g.,][]{doucet2000sequential, finke2020limit}.

\subsection{Controlled sequential Monte Carlo}

In this paper, we develop a novel methodology that can significantly outperform both BPF and APF,
by constructing \emph{globally optimal} proposal distributions that take the entire observation sequence $y_{1:T}$ into account. Our approach extends the controlled SMC methodology of \citet{heng2020controlled} that in turn builds upon the works by \citet*{richard2007efficient, scharth2016particle} and \citet*{guarniero2017iterated}. 

In what follows, suppose that we have a given SMC method, defined by a specific choice of proposals $(q_t)_{t=0}^T$ and weight functions $(w_t)_{t=0}^T$ satisfying Equation \eqref{eqn:weights_requirement}. For notational ease, we will suppress their dependence on the parameters $\theta\in\Theta$ and the inverse temperature $\lambda\in[0,1]$. The key idea is to express the desired smoothing distribution $p(ds_{0:T} | y_{1:T}, \theta, \lambda) = p(ds_{0} | y_{1:T}, \theta, \lambda)
\prod_{t=1}^Tp(ds_{t} | s_{t-1}, y_{t:T}, \theta, \lambda)$ as
\begin{align}\label{eqn:rewrite_smoothing}
	p(ds_{0} | y_{1:T}, \theta, \lambda) = \frac{q_0(ds_0)\psi_0^*(s_0)}{q_0(\psi_0^*)},\quad
	p(ds_{t} | s_{t-1}, y_{t:T}, \theta, \lambda) = \frac{q_t(ds_t|s_{t-1})\psi_t^*(s_{t-1},s_t)}{q_t(\psi_t^*|s_{t-1})}
\end{align}
for $t=1,\ldots,T$, where the sequence of functions $\psi^*=(\psi_t^*)_{t=0}^T$ are defined by the backward recursion
\begin{align}\label{eqn:optimal_psi}
	\psi_T^*(s_{T-1},s_T) ~&=~ w_T(s_{T-1},s_T),\notag\\
	\psi_t^*(s_{t-1},s_t) ~&=~ w_t(s_{t-1},s_t)q_{t+1}(\psi_{t+1}^*|s_{t}),\quad t = T-1,\ldots,1,\\
	\psi_0^*(s_0) ~&=~ w_0(s_0)q_1(\psi_1^*|s_0).\notag
\end{align}
The notation $q_0(\psi_0^*)=\int_{\mathbb{S}}q_0(ds_0)\psi_0^*(s_0)$ denotes the expectation of $\psi_0^*$
with respect to the distribution $q_0$, and $q_t(\psi_t^*|s_{t-1})=\int_{\mathbb{S}}q_t(ds_t|s_{t-1})\psi_t^*(s_{t-1},s_t)$
denotes the conditional expectation of $\psi_t^*$ under the Markov transition kernel $q_t$.
In the case of the BPF, given that $\psi^*$ admits the following probabilistic interpretation
\begin{equation}
	\psi_0^*(s_0)=p(y_{1:T}|s_0,\theta,\lambda),\quad\psi_t^*(s_{t-1},s_t)=p(y_{t:T}|s_{t-1},s_t,\theta,\lambda),\quad t=1,\ldots,T,
\end{equation}
it is sometimes referred to as the backward information filter \citep*{briers2010smoothing}.

By obtaining an approximation $\psi=(\psi_t)_{t=0}^T$ of the backward recursion in Equation \eqref{eqn:optimal_psi}
using an approximate dynamic programming method that will be discussed in Section \ref{sec:ADP},
we can construct a new proposal distribution
\begin{align}\label{eqn:law_proposal}
 q^{\psi}(ds_{0:T})=q_0^{\psi}(ds_0)\prod_{t=1}^Tq_t^{\psi}(ds_t|s_{t-1})
\end{align}
by mimicking Equation \eqref{eqn:rewrite_smoothing}, i.e. define
\begin{align}\label{eqn:proposal_psi}
	q_0^{\psi}(ds_{0}) = \frac{q_0(ds_0)\psi_0(s_0)}{q_0(\psi_0)},\quad
	q_t^{\psi}(ds_{t} | s_{t-1}) = \frac{q_t(ds_t|s_{t-1})\psi_t(s_{t-1},s_t)}{q_t(\psi_t|s_{t-1})},
	\quad t=1,\ldots,T.
\end{align}
Following the terminology in \citet{heng2020controlled}, we will refer to a sequence of non-negative and bounded
functions $\psi$ as a policy and $\psi^*$ as the optimal policy. As the choice of policy is specific to the application of interest, we defer these discussions until Section \ref{sec:apps} and assume $\psi$ is such that the proposals $(q_t^{\psi})_{t=0}^T$ in Equation \eqref{eqn:proposal_psi} can be sampled from, and the expectations $q_0(\psi_0)$, $q_t(\psi_t|s_{t-1})$ for $t=1,\ldots,T$ can be evaluated. To employ these newly constructed proposals within SMC, the appropriate weight functions are given by
\begin{align}\label{eqn:weights_psi}
	w_0^{\psi}(s_0) ~&=~ \frac{q_0(\psi_0)q_1(\psi_1|s_0)}{\psi_0(s_0)},\notag\\
	w_t^{\psi}(s_{t-1},s_t) ~&=~ \frac{w_t(s_{t-1},s_t)q_{t+1}(\psi_{t+1}|s_t)}{\psi_t(s_{t-1},s_t)},\quad t=1,\ldots,T-1,\\
	w_T^{\psi}(s_{T-1},s_T) ~&=~ \frac{w_T(s_{T-1},s_T)}{\psi_T(s_{T-1},s_T)},\notag
\end{align}
which satisfy $w_0^{\psi}(s_0)\prod_{t=1}^Tw_t^{\psi}(s_{t-1}, s_t) = p(s_{0:T},y_{1:T}|\theta,\lambda)/q^{\psi}(s_{0:T}|\theta,\lambda)$.
An algorithmic description of the resulting SMC method is detailed in Algorithm \ref{alg:cSMC}.
To distinguish between the initial SMC method and the new SMC method induced by a policy,
we will refer to the former as \emph{uncontrolled} SMC and the latter as \emph{controlled} SMC.
From the output of Algorithm \ref{alg:cSMC}, we have an unbiased estimator of the likelihood $p(y_{1:T}|\theta,\lambda)$
(Step 3) and an approximate sample from the smoothing distribution $p(ds_{0:T} | y_{1:T}, \theta, \lambda)$
by sampling from the weighted particle approximation in Equation \eqref{eqn:smc_smoothing_approx}
(or equivalently selecting an ancestral lineage at Step 4).
The efficiency of these approximations will ultimately depend on how well the chosen policy $\psi$
approximates the optimal policy $\psi^*$. A more precise characterization of this relationship
will be given in Section \ref{sec:analysis1}. We note that the choice $\psi=\psi^*$ is optimal as Algorithm \ref{alg:cSMC}
would yield a zero variance likelihood estimator, for any number of particles $N$, and an exact trajectory from the smoothing distribution.
\begin{algorithm}[h]
{\footnotesize
\protect\caption{\footnotesize{Controlled sequential Monte Carlo at inverse temperature $\lambda\in[0,1]$} \label{alg:cSMC}}

\textbf{Input}: number of particles $N$ and policy $\psi=(\psi_t)_{t=0}^T$.

(1) For time $t=0$ and particle $n=1,\ldots,N$.
\ignorespacesafterend
\begin{description}[itemsep=0pt,parsep=0pt,topsep=0pt,labelindent=0.5cm]
	\item (1a) Sample state $s_{0}^{(n)}\sim q_0^{\psi}$.
	\item (1b) Compute normalized weights $W_0^{(n)}=w_0^{\psi}(s_{0}^{(n)})/\sum_{m=1}^Nw_0^{\psi}(s_{0}^{(m)})$.
\end{description}


(2) For time $t=1,\ldots,T$ and particle $n=1,\ldots,N$.

\begin{description}[itemsep=0pt,parsep=0pt,topsep=0pt,labelindent=0.5cm]
	\item (2a) Sample ancestor $a_{t-1}^{(n)}\sim r(\cdot|W_{t-1}^{(1)},\ldots,W_{t-1}^{(N)})$.

	\item (2b) Sample state $s_t^{(n)} \sim q_t^{\psi}(\cdot|s_{t-1}^{(a_{t-1}^{(n)})})$.

	\item (2c) Compute normalized weights $W_t^{(n)}=w_t^{\psi}(s_{t-1}^{(a_{t-1}^{(n)})},s_{t}^{(n)})/
	\sum_{m=1}^Nw_t^{\psi}(s_{t-1}^{(a_{t-1}^{(m)})},s_{t}^{(m)})$.
\end{description}


(3) Compute likelihood estimator $\hat{p}(y_{1:T}|\theta,\lambda)=\{\frac{1}{N}\sum_{n=1}^Nw_0^{\psi}(s_0^{(n)})\}
	\{\prod_{t=1}^T\frac{1}{N}\sum_{n=1}^Nw_t^{\psi}(s_{t-1}^{(a_{t-1}^{(n)})}, s_t^{(n)})\}$.

(4) Sample an ancestor $l_T\sim r(\cdot|W_{T}^{(1)},\ldots,W_{T}^{(N)})$ and set
ancestral lineage as $l_t = a_t^{(l_{t+1})}$ for $t=T-1,\ldots,0$.

\textbf{Output}: states $(s_t^{(n)})_{t=0,n=1}^{T,N}$, ancestors $(a_t^{(n)})_{t=0,n=1}^{T-1,N}$,
likelihood estimator $\hat{p}(y_{1:T}|\theta,\lambda)$ and trajectory $(s_t^{(l_t)})_{t=0}^T$.}
\end{algorithm}

\subsection{Policy refinement with approximate dynamic programming}\label{sec:ADP}

Suppose we have a policy $\psi=(\psi_t)_{t=0}^T$ that denotes our current approximation of the optimal policy
$\psi^*=(\psi_t^*)_{t=0}^T$ defined in Equation \eqref{eqn:optimal_psi}. If we define $\phi^*=(\phi_t^*)_{t=0}^T$ using the backward recursion
\begin{eqnarray}\label{eqn:optimal_phi}
	\phi_T^*(s_{T-1},s_T) &=& w_T^{\psi}(s_{T-1},s_T) ,\notag\\
	\phi_t^*(s_{t-1},s_t) &=& w_t^{\psi}(s_{t-1},s_t)q_{t+1}^{\psi}(\phi_{t+1}^*|s_{t}),\quad t = T-1,\ldots,1,\\
	\phi_0^*(s_0) &=& w_0^{\psi}(s_0)q_1^{\psi}(\phi_1^*|s_0),\notag
\end{eqnarray}
it can be shown that $\psi^*=\psi\cdot\phi^*=(\psi_t\cdot\phi_t^*)_{t=0}^T$, where $\psi_t\cdot\phi_t^*$ denotes
pointwise multiplication of two functions \citep[see, Proposition 1,][]{heng2020controlled}. This result shows how policy refinements can be performed and identifies $\phi^*$ as the optimal refinement of the current policy $\psi$.
The optimal refinement can be viewed as the solution of an associated Kullback--Leibler optimal control problem
and Equation \eqref{eqn:optimal_phi} is the corresponding dynamic programming recursion.
As noted by \citet{heng2020controlled}, drawing this connection allows one to exploit approximate dynamic programming (ADP) methods to approximate the optimal refinement. The following is an ADP scheme to approximate $\phi^*$ by combining function approximation and iterating the backward recursion in Equation \eqref{eqn:optimal_phi}.

Let $(s_t^{(n)})_{t=0,n=1}^{T,N}$ and $(a_t^{(n)})_{t=0,n=1}^{T-1,N}$ denote the states and ancestors from running controlled SMC with the current policy $\psi$. At the terminal time $T$, we approximate $\phi_T^*= w_T^{\psi}$ by solving the least squares problem
\begin{align}\label{eqn:least_squares_terminal}
	\phi_T=\arg\min_{f\in\mathbb{F}_T}\sum_{n=1}^N\left( \log f(s_{T-1}^{(a_{T-1}^{(n)})},s_T^{(n)}) -
	\log w_T^{\psi}(s_{T-1}^{(a_{T-1}^{(n)})},s_T^{(n)})\right)^2,
\end{align}
where $\mathbb{F}_T$ is a function class to be specified. By plugging in the approximation $\phi_T\approx\phi_T^*$
in the iterate $\phi_{T-1}^*=w_{T-1}^{\psi}q_T^{\psi}(\phi_T^*)$,
we approximate the function $\varphi_{T-1}=w_{T-1}^{\psi}q_T^{\psi}(\phi_T)$ using the least squares problem
\begin{align}\label{eqn:least_squares}
	\phi_{T-1}=\arg\min_{f\in\mathbb{F}_{T-1}}\sum_{n=1}^N\left( \log f(s_{T-2}^{(a_{T-2}^{(n)})},s_{T-1}^{(n)}) -
	\log \varphi_{T-1}(s_{T-2}^{(a_{T-2}^{(n)})},s_{T-1}^{(n)})\right)^2,
\end{align}
where $\mathbb{F}_{T-1}$ is another function class to be chosen. We then proceed in the same manner until
the initial time $0$ to obtain a sequence of functions $\phi=(\phi_t)_{t=0}^T$ approximating the optimal refinement $\phi^*$.
The refined policy is then given by the update $\psi\cdot\phi=(\psi_t\cdot\phi_t)_{t=0}^T$.

The above ADP scheme is summarized in Algorithm \ref{alg:ADP}, where we also give alternative expressions involving
the proposals $(q_t)_{t=0}^T$ and weight functions $(w_t)_{t=0}^T$ of the uncontrolled SMC,
which we will use for our numerical implementation.
The cost of running Algorithm \ref{alg:ADP} is $\mathcal{O}(T(NC_{\mathrm{evaluate}} + C_{\mathrm{approx}}))$,
where $C_{\mathrm{evaluate}}(d)$ is the cost of evaluating one of the weight functions in Equation \eqref{eqn:weights_psi},
and $C_{\mathrm{approx}}(N,d)$ is the cost of each least squares approximation.
In the case of linear least squares, $C_{\mathrm{approx}}$ would be linear in $N$.
By selecting parametric function classes $(\mathbb{F}_t)_{t=0}^T$ that depend on coefficients $(\beta_t)_{t=0}^T$,
only the estimated coefficients parameterizing the policies have to be stored in practice.
We will also choose function classes that are closed under multiplication, so that
the refined policy $\psi\cdot\phi$ also lie in the same function classes,
with coefficients that can be easily updated.
The richness of these function classes will determine the quality of $\psi\cdot\phi$
as an approximation of the optimal policy $\psi^*$.
\begin{algorithm}[h]
\protect\caption{\footnotesize{Approximate dynamic programming at inverse temperature $\lambda\in[0,1]$} \label{alg:ADP}}
{\footnotesize
\textbf{Input}: current policy $\psi=(\psi_t)_{t=0}^T$ and output of controlled SMC (Algorithm \ref{alg:cSMC}).

(1) For time $t=T,\ldots,1$,

\begin{description}[itemsep=0pt,parsep=0pt,topsep=0pt,labelindent=0.5cm]

	\item (1a) If $t=T$, for particle $n=1,\ldots,N$, set
$ \varphi_T(s_{T-1}^{(a_{T-1}^{(n)})},s_T^{(n)}) = w_T^{\psi}(s_{T-1}^{(a_{T-1}^{(n)})},s_T^{(n)})
= \frac{w_T(s_{T-1}^{(a_{T-1}^{(n)})},s_T^{(n)})}{\psi_T(s_{T-1}^{(a_{T-1}^{(n)})},s_T^{(n)})}$.

	\item (1b) If $t<T$, for particle $n=1,\ldots,N$, set
$$\varphi_t(s_{t-1}^{(a_{t-1}^{(n)})},s_t^{(n)}) =
w_t^{\psi}(s_{t-1}^{(a_{t-1}^{(n)})},s_t^{(n)})q_{t+1}^{\psi}(\phi_{t+1}|s_t^{(n)}) =
\frac{w_t(s_{t-1}^{(a_{t-1}^{(n)})},s_t^{(n)})q_{t+1}(\psi_{t+1}\cdot\phi_{t+1}|s_t^{(n)})}
{\psi_t(s_{t-1}^{(a_{t-1}^{(n)})},s_t^{(n)})}.$$

	\item (1c) Fit the function
$ \phi_t = \arg\min_{f\in\mathbb{F}_t}\sum_{n=1}^N\left(\log f(s_{t-1}^{(a_{t-1}^{(n)})},s_t^{(n)})
-\log\varphi_t(s_{t-1}^{(a_{t-1}^{(n)})},s_t^{(n)}) \right)^2.$

\end{description}


(2) For time $t=0$.

\begin{description}[itemsep=0pt,parsep=0pt,topsep=0pt,labelindent=0.5cm]
	\item (2a) For particle $n=1,\ldots,N$, set
	$\varphi_0(s_0^{(n)})= w_0^{\psi}(s_0^{(n)})q_1^{\psi}(\phi_1|s_0^{(n)}) =
	q_0(\psi_0)q_1(\psi_1\cdot\phi_1|s_0^{(n)})/\psi_0(s_0^{(n)}).$

	\item (2b) Fit the function
	$ \phi_0 = \arg\min_{f\in\mathbb{F}_0}\sum_{n=1}^N\left(\log f(s_0^{(n)})
	-\log\varphi_0(s_0^{(n)}) \right)^2.$

\end{description}

\textbf{Output}: refined policy $\psi\cdot\phi=(\psi_t\cdot\phi_t)_{t=0}^T$.}
\end{algorithm}

We can then run a controlled SMC (Algorithm \ref{alg:cSMC}) with proposals defined by the refined policy $\psi\cdot\phi$ to obtain better likelihood and state estimates. Using this SMC output, one could consider another round of policy refinement
with ADP (Algorithm \ref{alg:ADP}) to produce more efficient SMC estimates. This iterative procedure that alternates between ADP and SMC, with the inverse temperature fixed at the desired level of $\lambda=1$,
is studied by \citet{heng2020controlled}. However, for complex state-space models with strong non-linearities and highly informative observations, typically of dynamic structural macrofinance models, this approach can easily fail when the initial policy is far from optimality as the states used in the ADP approximation would be in the tails of the smoothing distribution in Equation \eqref{eqn:smoothing_distribution}. To prevent such numerical instabilities from manifesting, we propose a novel iterative procedure in the following section that performs policy refinement as the inverse temperature $\lambda$ gradually increases.

\subsection{Annealed controlled sequential Monte Carlo}\label{sec:ACSMC}

Let $(\lambda_i)_{i=0}^I$ denote an increasing inverse temperature schedule with $\lambda_0=0$
and $\lambda_I\leq 1$ that will be pre-specified. Although the inverse temperature of $\lambda=1$ is desired
to estimate the quantities in Equations \eqref{eqn:likelihood} and \eqref{eqn:smoothing_distribution},
approximating the intermediate quantities in Equation \eqref{eqn:tempered_smoothing} for $0\leq\lambda<1$ is also
useful when we consider parameter inference in Section \ref{sec: parameter_inference}.

We begin by running the uncontrolled SMC at inverse temperature $\lambda_0=0$, defined by
proposals $(q_t)_{t=0}^T$ and weight functions $(w_t)_{t=0}^T$,
and initializing the policy $\psi^{(0)}=(\psi_t^{(0)})_{t=0}^T$ as constant one functions, i.e. $\psi_t^{(0)}=1$ for $t=0,\ldots,T$.
In the case of the BPF, this simply generates $N$ trajectories
from the latent process of Equation \eqref{eqn:latent_process} and initializes the policy at optimality for $\lambda_0=0$.
Subsequently, for iteration $i=1,\ldots,I$, we construct new proposals for the next inverse temperature $\lambda_i$ by
refining the policy $\psi^{(i-1)}$ from the previous inverse temperature $\lambda_{i-1}$ using ADP.
More precisely, we employ Algorithm \ref{alg:ADP} using $\psi^{(i-1)}$ as the current policy and the previous SMC output
at $\lambda_{i-1}$. Writing $\psi^{(i)}$ as the refined policy, we then run a controlled SMC (Algorithm \ref{alg:cSMC})
at inverse temperature $\lambda_i$ with new proposals defined by $\psi^{(i)}$.
We provide an algorithmic summary of the above methodology in Algorithm \ref{alg:ac-SMC},
which we refer to as annealed controlled SMC (AC-SMC).
\begin{algorithm}[h]
\protect\caption{\footnotesize{Annealed controlled sequential Monte Carlo }\label{alg:ac-SMC}}
{\footnotesize
\textbf{Input}: number of particles $N$ and inverse temperature schedule $(\lambda_i)_{i=0}^I$.

(1) For iteration $i=0$.

\begin{description}[itemsep=0pt,parsep=0pt,topsep=0pt,labelindent=0.5cm]
	\item (1a) Set policy $\psi^{(0)}$ as constant one functions.

	\item (1b) Run uncontrolled SMC at inverse temperature $\lambda_0$ with proposals $(q_t)_{t=0}^T$ and weights $(w_t)_{t=0}^T$.

\end{description}

(2) For iteration $i=1,\ldots,I$.

\begin{description}[itemsep=0pt,parsep=0pt,topsep=0pt,labelindent=0.5cm]

	\item (2a) Run ADP (Algorithm \ref{alg:ADP}) at inverse temperature $\lambda_i$ with $\psi^{(i-1)}$ as current policy
	and previous SMC output to obtain refined policy $\psi^{(i)}$.

	\item (2b) Run controlled SMC (Algorithm \ref{alg:cSMC}) at inverse temperature $\lambda_i$ with policy $\psi^{(i)}$.

\end{description}
\textbf{Output}: policy $\psi^{(I)}$ and controlled SMC output at inverse temperature $\lambda_I$.}
\end{algorithm}

We now explain the rationale behind our proposed methodology and how it differs from existing works.
Suppose $\psi^{(i-1)}$ is a good approximation of the optimal policy at $\lambda_{i-1}$, which is the case at initialization.
The states generated by the resulting controlled SMC at $\lambda_{i-1}$ (Step 2b) will be in regions of high probability mass
under the smoothing distribution $p(ds_{0:T} | y_{1:T}, \theta, \lambda_{i-1})$.
If the inverse temperature increase is small, these states should also be in high probability regions under
$p(ds_{0:T} | y_{1:T}, \theta, \lambda_{i})$ and therefore are good support points to learn the refined policy
$\psi^{(i)}$ using the ADP algorithm (Step 2a). The optimal refinement of $\psi^{(i-1)}$
is given by Equation \eqref{eqn:optimal_phi} and can be rewritten as
\begin{align}\label{eqn:rewrite_optimal_phi}
	\phi_T^*(s_{T-1},s_T;\theta,\lambda_i) ~&= w_T^{\psi^{(i-1)}}(s_{T-1},s_T;\theta,\lambda_{i-1})
	\frac{w_T(s_{T-1},s_T;\theta,\lambda_i)}{w_T(s_{T-1},s_T;\theta,\lambda_{i-1})} ,\\
	\phi_t^*(s_{t-1},s_t;\theta,\lambda_i) ~&=~ w_t^{\psi^{(i-1)}}(s_{t-1},s_t;\theta,\lambda_{i-1})
	\frac{w_t(s_{t-1},s_t;\theta,\lambda_{i})}{w_t(s_{t-1},s_t;\theta,\lambda_{i-1})}
	q_{t+1}^{\psi^{(i-1)}}(\phi_{t+1}^*|s_{t},\theta),\notag\\
	\phi_0^*(s_0;\theta,\lambda_i) ~&=~ w_0^{\psi^{(i-1)}}(s_0;\theta)
	q_1^{\psi^{(i-1)}}(\phi_1^*|s_0,\theta),\notag
\end{align}
for $t = T-1,\ldots,1$. In the preceding equations, we reintroduce parameter and temperature dependence for clarity
and assume that the choice of initial proposals $(q_t)_{t=0}^T$ are not temperature dependent.

The weight functions $w_0^{\psi^{(i-1)}}(s_0;\theta)$ and $w_t^{\psi^{(i-1)}}(s_{t-1},s_t;\theta,\lambda_{i-1})$
for $t=1,\ldots,T$ are determined by the quality of the previous ADP approximation and is equal to the
residual of the least squares approximation at time $t=0,1,\ldots,T$ in the logarithmic scale.
These weight functions should therefore be close to constant functions as we have assumed
near optimality of $\psi^{(i-1)}$. The ratio of weights $w_t(s_{t-1},s_t;\theta,\lambda_{i})/w_t(s_{t-1},s_t;\theta,\lambda_{i-1})$, 	
which accounts for the increase in inverse temperature, reduces to $g_{\theta}(y_t|s_{t-1},s_t)^{\lambda_i-\lambda_{i-1}}$
in the case of the BPF. If the inverse temperature increment is small, we can expect these ratios to be
well-behaved with small fluctuations. By an inductive argument, the conditional expectation
$q_{t}^{\psi^{(i-1)}}(\phi_{t}^*|s_{t-1},\theta)$ should also have the same behaviour.
Hence we can expect the optimal refinement of $\psi^{(i-1)}$ to be well-approximated by simple functions.
Assuming that the chosen function classes are rich enough and the number of samples $N$ is sufficiently large
to obtain good least squares estimation, a good approximation of the optimal policy at $\lambda_{i}$ will then
ensure stability of the next iteration.

In contrast to the policy refinement procedure in \citet{heng2020controlled}, which is based on repeated least squares fitting
of residuals with the inverse temperature fixed at $\lambda=1$, our methodology can be seen as an extension
that allows one to incorporate changes in temperature. The use of annealing to alleviate numerical instabilities was mentioned in earlier work by \citet{scharth2016particle}, but the main and crucial difference is that their policy learning procedure does not change across iterations to exploit the policies learnt at lower inverse temperatures.

In practical implementations of AC-SMC (Algorithm \ref{alg:ac-SMC}), we can monitor the performance of controlled SMC
at each inverse temperature (Step 2b) to evaluate the quality of the ADP approximation (Step 2a).
Using the above-mentioned relationship between the weight functions of controlled SMC and the residuals in ADP,
we can inspect the variance of the SMC weights $(W_t^{(n)})_{n=1}^N$, using for example the effective sample size
criterion \citep*{kong1994sequential}, defined at each time $t=0,1,\ldots,T$ as $1/\sum_{n=1}^N(W_t^{(n)})^2$.
Using the output of Algorithm \ref{alg:ac-SMC}, we obtain an unbiased estimator of the likelihood $p(y_{1:T}|\theta,\lambda_I)$,
and an approximate sample from the smoothing distribution
$p(ds_{0:T}|y_{1:T},\theta,\lambda_I)$.

\subsection{Analysis of annealed controlled sequential Monte Carlo}\label{sec:analysis1}
We analyze the performance of AC-SMC at each iteration, by supposing that we have a current policy $\psi$,
and an approximation $\phi$ of the optimal refinement of $\psi$,
obtained using ADP (Algorithm \ref{alg:ADP}) at inverse temperature $\lambda\in[0,1]$.
To measure the effect policy refinement has on controlled SMC (Algorithm \ref{alg:cSMC}), we consider the
Kullback--Leibler (KL) divergence from the proposal distribution $q^{\psi\cdot\phi}(ds_{0:T})$, defined in
Equations \eqref{eqn:law_proposal} and \eqref{eqn:proposal_psi}, to
the smoothing distribution $p(ds_{0:T}|y_{1:T},\theta,\lambda)$ in Equation \eqref{eqn:tempered_smoothing}, defined as
{\small
\begin{align}
\mathrm{KL}\left(p(ds_{0:T}|y_{1:T},\theta,\lambda)  ~|~ q^{\psi\cdot\phi}(ds_{0:T}) \right)=\int_{\mathbb{S}^{T+1}}\log\left(\frac{p(s_{0:T}|y_{1:T},\theta,\lambda )}{q^{\psi\cdot\phi}(s_{0:T})}\right){p(s_{0:T}|y_{1:T},\theta,\lambda )}ds_{0:T}.	
\end{align}}This choice of KL divergence is motivated by the fact that it characterizes the quality of our proposal distribution in the context of importance sampling \citep{chatterjee2018sample}, i.e., small KL divergence is both sufficient and necessary for
good importance sampling approximations.

We first introduce some notation. We denote by $\mu_t^*$ and $\mu_t^{\psi\cdot\phi}$ the marginal distributions
of the smoothing and proposal distributions at time $t=0,\ldots,T$, respectively.
Let $(q_t^*)_{t=0}^T$ denote the optimal proposals under the optimal policy $\psi^*$,
and define for $t=1,\ldots,T$ the joint smoothing distribution
$(\mu_{t-1}^*\times q_t^*)(ds_{t-1},ds_t)=\mu_{t-1}^*(ds_{t-1})q_t^*(ds_t|s_{t-1})$
on $\mathbb{S}\times\mathbb{S}$.
Note that the above quantities depend on the parameters $\theta\in\Theta$ and inverse temperature $\lambda\in[0,1]$, even though these dependencies are not made explicit for notational simplicity.
Our first result provides a decomposition of the KL divergence,
in terms of the logarithmic differences between
$\phi^*$ and $\phi$ under the marginal distributions
$\xi_0^*=q_0^*$, $\xi_0^{\psi\cdot\phi}=q_0^{\psi\cdot\phi}$, and
$\xi_t^*=\mu_{t-1}^*\times q_t^*$,
$\xi_t^{\psi\cdot\phi}=\mu_{t-1}^*\times q_t^{\psi\cdot\phi}$ for $t=1,\ldots,T$.

In what follows, all proofs are provided in the Appendix \ref{sec:proofsanneal}.

\begin{proposition}\label{prop:KL_decomp}
For any current policy $\psi = (\psi_t)_{t=0}^T$ and an approximation $\phi = (\phi_t)_{t=0}$ of $\phi^* = (\phi_t^*)_{t=0}$, the optimal refinement of $\psi$, the
KL divergence from $q^{\psi\cdot\phi}(ds_{0:T})$ to $p(ds_{0:T}|y_{1:T},\theta,\lambda)$ satisfies
{\small
\begin{align}\label{eqn:KL_decomposition}
	&\mathrm{KL}\left(p(ds_{0:T}|y_{1:T},\theta,\lambda)  ~|~ q^{\psi\cdot\phi}(ds_{0:T}) \right)
	\leq \sum_{t=0}^T\xi_t^*(\log(\phi_t^*/\phi_t)) + \xi_t^{\psi\cdot\phi}(\log(\phi_t/\phi_t^*)).
\end{align}}
\end{proposition}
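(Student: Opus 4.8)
The plan is to exploit the defining property of the optimal policy, namely that $\psi^{*}=\psi\cdot\phi^{*}$ (see Equation~\eqref{eqn:optimal_phi}) turns the controlled proposal into the exact smoothing law, so that $p(ds_{0:T}|y_{1:T},\theta,\lambda)$ factorizes through the optimal proposals $(q_t^{*})_{t=0}^{T}$ of Equation~\eqref{eqn:rewrite_smoothing} in exactly the same way that $q^{\psi\cdot\phi}(ds_{0:T})$ factorizes through $(q_t^{\psi\cdot\phi})_{t=0}^{T}$ in Equation~\eqref{eqn:proposal_psi}. First I would write the Radon--Nikodym derivative of the two laws as a product of conditional-kernel ratios,
\[
\frac{p(s_{0:T}|y_{1:T},\theta,\lambda)}{q^{\psi\cdot\phi}(s_{0:T})}
= \frac{q_0^{*}(s_0)}{q_0^{\psi\cdot\phi}(s_0)}\prod_{t=1}^{T}\frac{q_t^{*}(s_t|s_{t-1})}{q_t^{\psi\cdot\phi}(s_t|s_{t-1})},
\]
which is valid with no residual normalization since both laws are genuine probability measures built from normalized kernels.

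Next, using Equation~\eqref{eqn:proposal_psi} together with $\psi_t^{*}=\psi_t\cdot\phi_t^{*}$, the policy $\psi_t$ cancels in each ratio and every factor splits into a pointwise part and a normalizing-constant part,
\[
\log\frac{q_t^{*}(s_t|s_{t-1})}{q_t^{\psi\cdot\phi}(s_t|s_{t-1})}
= \log\frac{\phi_t^{*}(s_{t-1},s_t)}{\phi_t(s_{t-1},s_t)}
+ \log\frac{q_t(\psi_t\cdot\phi_t|s_{t-1})}{q_t(\psi_t\cdot\phi_t^{*}|s_{t-1})},
\]
with the analogous split at $t=0$. Integrating the logarithm of the density ratio against the smoothing distribution, whose marginal over $(s_{t-1},s_t)$ is precisely $\xi_t^{*}=\mu_{t-1}^{*}\times q_t^{*}$ (and $\xi_0^{*}=q_0^{*}$), the pointwise parts produce exactly $\sum_{t=0}^{T}\xi_t^{*}(\log(\phi_t^{*}/\phi_t))$, the first sum in the claimed bound. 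The normalizing-constant parts depend only on $s_{t-1}$ and therefore integrate against $\mu_{t-1}^{*}$, leaving a residual that must be controlled.

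The key step, and the only place an inequality enters, is bounding each residual term. Fixing $s_{t-1}$ and writing $a=\psi_t\cdot\phi_t$, $b=\psi_t\cdot\phi_t^{*}=\psi_t^{*}$, and $\nu=q_t(\cdot|s_{t-1})$, I would consider the tilted measures $\tilde{a}=a\nu/\nu(a)=q_t^{\psi\cdot\phi}(\cdot|s_{t-1})$ and $\tilde{b}=b\nu/\nu(b)$; Gibbs' inequality gives $\mathrm{KL}(\tilde{a}\,\|\,\tilde{b})\geq 0$, and expanding this divergence yields $\tilde{a}(\log(a/b))-\log(\nu(a)/\nu(b))\geq 0$, that is, since $a/b=\phi_t/\phi_t^{*}$,
\[
\log\frac{q_t(\psi_t\cdot\phi_t|s_{t-1})}{q_t(\psi_t^{*}|s_{t-1})}
\leq q_t^{\psi\cdot\phi}\!\left(\log(\phi_t/\phi_t^{*})\,\big|\,s_{t-1}\right).
\]
Integrating over $s_{t-1}\sim\mu_{t-1}^{*}$ and using $\xi_t^{\psi\cdot\phi}=\mu_{t-1}^{*}\times q_t^{\psi\cdot\phi}$ bounds each residual by $\xi_t^{\psi\cdot\phi}(\log(\phi_t/\phi_t^{*}))$; summing over $t$ and combining with the first sum delivers the stated inequality. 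I expect the main obstacle to be bookkeeping rather than analytic depth: one must correctly identify that the relevant marginals of the smoothing law are exactly $\xi_t^{*}$, and then recognize the normalizing-constant residuals as a nonnegative KL divergence between the tilted measures, which is what converts the exact density decomposition into the clean upper bound of the proposition.
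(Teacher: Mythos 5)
Your proposal is correct and follows essentially the same route as the paper's proof: the same factorization of the density ratio into pointwise terms $\log(\phi_t^*/\phi_t)$ plus normalizing-constant residuals, the same identification of the smoothing marginals with $\xi_t^*$, and the same key inequality — your Gibbs'-inequality argument on the tilted measures is exactly the log-sum inequality the paper invokes, just derived from the non-negativity of $\mathrm{KL}(\tilde{a}\,\|\,\tilde{b})$ rather than stated directly. No gaps.
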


Next, we will introduce some assumptions which are needed to derive upper bounds
of the terms in Equation \eqref{eqn:KL_decomposition}.
The states $(s_t^{(n)})_{t=0,n=1}^{T,N}$ and ancestors $(a_t^{(n)})_{t=0,n=1}^{T-1,N}$ from
controlled SMC with current policy $\psi$ define the following empirical measures
{\small
\begin{align}
	\nu_0^{\psi,N}(ds_0)=\frac{1}{N}\sum_{n=1}^N\delta_{s_0^{(n)}}(ds_0),\quad
	\nu_t^{\psi,N}(ds_{t-1},ds_t)=\frac{1}{N}\sum_{n=1}^N\delta_{(s_{t-1}^{a_{t-1}^{(n)}},s_t^{(n)})}(ds_{t-1},ds_t),
\end{align}}for $t=1,\ldots,T$, that are used in the least squares approximations in Equations \eqref{eqn:least_squares_terminal} and \eqref{eqn:least_squares}.
As the number of particles $N\rightarrow\infty$, these measures $(v_t^{\psi,N})_{t=0}^T$ are consistent approximations of $(v_t^{\psi})_{t=0}^T$, defined recursively as
\begin{align}
	\nu_0^{\psi}(ds_0) ~&=~ q_0^{\psi}(ds_0),\quad
	\nu_1^{\psi}(ds_0,ds_1) ~=~ \frac{\nu_0^{\psi}(ds_0)w_0^{\psi}(s_0)}
	{\nu_0^{\psi}(w_0^{\psi})}q_1^{\psi}(ds_1|s_{0}),\\
	\nu_t^{\psi}(ds_{t-1},ds_t) ~&=~ \frac{\int_{\mathbb{S}}\nu_{t-1}^{\psi}(ds_{t-2},ds_{t-1})
	w_{t-1}^{\psi}(s_{t-2},s_{t-1})}{\nu_{t-1}^{\psi}(w_{t-1}^{\psi})}q_t^{\psi}(ds_t|s_{t-1}),\quad t=2,\ldots,T.\notag
\end{align}
For each time $t=0,\ldots,T$, we define the $L^2$-norm $\| \varphi \|_{L^2(\nu_t^{\psi})}= \nu_t^{\psi}(\varphi^2)^{1/2}$ and the $L^2(\nu_t^{\psi})$ space\footnote{With equivalent classes defined by functions that agree $\nu_t^{\psi}$-almost everywhere.}
as the set of measurable functions $\varphi$ with $\| \varphi \|_{L^2(\nu_t^{\psi})}<\infty$.
To study ADP (Algorithm \ref{alg:ADP}) in the infinite particle regime, we define $L^2$-projections under
the function class $\mathbb{F}_t$ and the distribution $\nu_t^{\psi}$
\begin{align}\label{eqn:L2_projection}
	P_t^{\psi}\varphi = \arg\min_{f\in\mathbb{F}_t}\|\log f - \log \varphi\|_{L^2(\nu_t^{\psi})}^2,
\end{align}
for $\log \varphi\in L^2(\nu_t^{\psi})$. The following assumption concerns our choice of function classes
$(\mathbb{F}_t)_{t=0}^T$ within ADP.

\begin{assumption}\label{ass:function_classes}
	The function classes $(\mathbb{F}_t)_{t=0}^T$ satisfy:
	\begin{enumerate}[label=(\roman*)]
		\item $\log\mathbb{F}_t$ is a closed linear subspace of $L^2(\nu_t^{\psi})$ for $t=0,\ldots,T$;
		
		\item $\sup_{f\in \mathbb{G}_t^{\psi}}\|\log P_t^{\psi}f - \log f\|_{L^2(\nu_t^{\psi})}
		\leq e_t^{\psi}<\infty$ for $t=0,\ldots,T$, where $\mathbb{G}_t^{\psi}=\{w_t^{\psi}q_{t+1}^{\psi}(\varphi)
		: \varphi\in\mathbb{F}_{t+1}\}$ for $ t=0,\ldots,T-1$ and $\mathbb{G}_T^{\psi}=\{w_T^{\psi}\}$.
	\end{enumerate}
\end{assumption}
Assumption \ref{ass:function_classes}$(i)$ ensures the existence of a unique projection in Equation \eqref{eqn:L2_projection},
which can be relaxed by letting $L^2$-projections denote the set of minimizers.
The residual errors $(e_t^{\psi})_{t=0}^T$ in Assumption \ref{ass:function_classes}$(ii)$ describes
the flexibility of the chosen function classes for our purpose of learning policies.
The next assumption pertains to the relationships between the distributions
$\xi_t^{*}$, $\xi_t^{\psi\cdot\phi}$ and $\nu_t^{\psi}$.
For distributions $\mu$ and $\nu$ defined on a common measurable space, we will write $\mu\ll\nu$ if $\mu$ is
absolutely continuous with respect to $\nu$.
\begin{assumption}\label{ass:marginals}
	There exist positive constants $(C_t)_{t=0}^T$ and $(M_t)_{t=0}^T$ such that:
	\begin{enumerate}[label=(\roman*)]
		\item $\xi_t^{*}\ll\nu_t^{\psi}$ with density satisfying $\xi_t^{*}/\nu_t^{\psi}\leq C_t$ for $t=0,\ldots,T$;
		
		\item $\xi_t^{\psi\cdot\phi}\ll\xi_t^{*}$
		with density satisfying $\xi_t^{\psi\cdot\phi}/\xi_t^{*}\leq M_t$ for $t=0,\ldots,T$.
		
	\end{enumerate}
\end{assumption}

Assumption \ref{ass:marginals}$(i)$ requires the current policy $\psi$ to induce a reasonably good approximation
of the smoothing marginals.
As such a condition is unlikely to be satisfied when $\psi$ is given by constant one functions and
the inverse temperature $\lambda=1$, this motivates the use of AC-SMC which allows policy refinement
as $\lambda$ gradually increases.
Assumption \ref{ass:marginals}$(ii)$ is a condition on the quality of the proposals
$(q_t^{\psi\cdot\phi})_{t=0}^T$ under the refined policy $\psi\cdot\phi$.
We now state our main result, which gives recursive bounds of the logarithmic differences
appearing in the KL upper bound in Equation \eqref{eqn:KL_decomposition}.

\begin{theorem}\label{thm:policy_learning}
Under Assumptions \ref{ass:function_classes} and \ref{ass:marginals},
$\varepsilon_t^{*}=\xi_t^*(\log(\phi_t^*/\phi_t))$ and $\varepsilon_t^{\psi\cdot\phi}=\xi_t^{\psi\cdot\phi}(\log(\phi_t/\phi_t^*))$ for $t=0,\ldots,T$ satisfy
the backward recursions
\begin{align}\label{eqn:error_recursion}
	\varepsilon_t^{*} \leq \varepsilon_{t+1}^{*} + C_t e_t^{\psi},\quad
	\varepsilon_{t}^{\psi\cdot\phi} \leq M_t\varepsilon_{t+1}^{\psi\cdot\phi} + C_t M_t e_t^{\psi} ,
	\quad t=0,\ldots,T-1,
\end{align}
with $\varepsilon_T^{*}\leq C_Te_T^{\psi}$ and $\varepsilon_t^{\psi\cdot\phi}\leq C_TM_Te_T^{\psi}$ at
the terminal time $T$.
\end{theorem}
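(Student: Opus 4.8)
The plan is to prove the two backward recursions termwise in the time index $t$, working at the population (infinite-particle) level where the least-squares fits of Algorithm~\ref{alg:ADP} coincide with the $L^2(\nu_t^\psi)$-projections $\phi_t = P_t^\psi \varphi_t$ of Equation~\eqref{eqn:L2_projection}. The starting point at each $t<T$ is the identity $\phi_t^* = w_t^\psi\, q_{t+1}^\psi(\phi_{t+1}^*)$ from Equation~\eqref{eqn:optimal_phi}, together with the fact that ADP fits $\phi_t$ to $\varphi_t = w_t^\psi\, q_{t+1}^\psi(\phi_{t+1})$; these two targets differ only through $\phi_{t+1}^*$ versus $\phi_{t+1}$ inside the conditional expectation, while the weight factor $w_t^\psi$ cancels in their ratio. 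Accordingly I would split both logarithmic differences as
\[
\log(\phi_t^*/\phi_t) = \underbrace{\log(\phi_t^*/\varphi_t)}_{\text{propagation}} + \underbrace{\log(\varphi_t/\phi_t)}_{\text{projection residual}},
\]
and symmetrically for $\log(\phi_t/\phi_t^*)$, bounding the two pieces separately.

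For the projection-residual piece I would use that $\varphi_t = w_t^\psi q_{t+1}^\psi(\phi_{t+1}) \in \mathbb{G}_t^\psi$, so Assumption~\ref{ass:function_classes}$(ii)$ gives $\|\log\phi_t - \log\varphi_t\|_{L^2(\nu_t^\psi)} \le e_t^\psi$. Transferring this bound from $\nu_t^\psi$ to $\xi_t^*$ costs a factor $C_t$ by Assumption~\ref{ass:marginals}$(i)$, and to $\xi_t^{\psi\cdot\phi}$ a factor $C_t M_t$ by chaining Assumptions~\ref{ass:marginals}$(i)$ and $(ii)$; a Cauchy--Schwarz step converts the resulting $L^1$ bound into the stated $C_t e_t^\psi$ and $C_t M_t e_t^\psi$ contributions. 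The terminal time serves as the base case: there $\phi_T^* = w_T^\psi = \varphi_T$, the propagation piece vanishes, and only the projection residual remains, yielding $\varepsilon_T^* \le C_T e_T^\psi$ and $\varepsilon_T^{\psi\cdot\phi} \le C_T M_T e_T^\psi$.

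For the propagation piece the key observation is that $\log(\phi_t^*/\varphi_t) = \log\{q_{t+1}^\psi(\phi_{t+1}^*\mid s_t)/q_{t+1}^\psi(\phi_{t+1}\mid s_t)\}$ depends on $s_t$ only, hence integrates against the $s_t$-marginals of $\xi_t^*$ and $\xi_t^{\psi\cdot\phi}$. Recalling from Equation~\eqref{eqn:proposal_psi} (applied to $\psi^* = \psi\cdot\phi^*$ and to $\psi\cdot\phi$) that tilting $q_{t+1}^\psi$ by $\phi_{t+1}^*$ reproduces the optimal proposal $q_{t+1}^*$ and tilting by $\phi_{t+1}$ reproduces $q_{t+1}^{\psi\cdot\phi}$, the ratio of conditional expectations can be rewritten as an average of $\phi_{t+1}^*/\phi_{t+1}$ under a tilted kernel, and concavity of the logarithm bounds it pointwise by $q_{t+1}^*(\log(\phi_{t+1}^*/\phi_{t+1})\mid s_t)$ in the $\varepsilon^*$ case and by $q_{t+1}^{\psi\cdot\phi}(\log(\phi_{t+1}/\phi_{t+1}^*)\mid s_t)$ in the $\varepsilon^{\psi\cdot\phi}$ case. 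For $\varepsilon_t^*$ this closes cleanly: the $s_t$-marginal of $\xi_t^* = \mu_{t-1}^*\times q_t^*$ is exactly $\mu_t^*$, so integrating the pointwise bound reproduces $\xi_{t+1}^*(\log(\phi_{t+1}^*/\phi_{t+1})) = \varepsilon_{t+1}^*$ and gives $\varepsilon_t^* \le \varepsilon_{t+1}^* + C_t e_t^\psi$.

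The main obstacle is the corresponding step for $\varepsilon_t^{\psi\cdot\phi}$, where the $s_t$-marginal of $\xi_t^{\psi\cdot\phi} = \mu_{t-1}^*\times q_t^{\psi\cdot\phi}$ is the pushforward of $\mu_{t-1}^*$ through $q_t^{\psi\cdot\phi}$ and does not coincide with $\mu_t^*$, since $q_t^{\psi\cdot\phi}\neq q_t^*$. To land on $M_t\varepsilon_{t+1}^{\psi\cdot\phi} = M_t\,\xi_{t+1}^{\psi\cdot\phi}(\log(\phi_{t+1}/\phi_{t+1}^*))$ I would marginalize the joint density bound of Assumption~\ref{ass:marginals}$(ii)$ to show this $s_t$-marginal is dominated by $M_t\mu_t^*$, and then transport the pointwise Jensen bound forward along $q_{t+1}^{\psi\cdot\phi}$. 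The delicate point, which I expect to require the most care, is the interaction of this change of measure with the sign of the integrand: pulling the constant $M_t$ out of the integral against the dominated marginal is immediate only when the propagated quantity $q_{t+1}^{\psi\cdot\phi}(\log(\phi_{t+1}/\phi_{t+1}^*)\mid s_t)$ is nonnegative, so the argument hinges on establishing the requisite nonnegativity (or otherwise arranging the estimate so that $M_t$ multiplies the manifestly nonnegative next-stage error), after which $\varepsilon_t^{\psi\cdot\phi} \le M_t\varepsilon_{t+1}^{\psi\cdot\phi} + C_t M_t e_t^\psi$ follows. Feeding both recursions into Proposition~\ref{prop:KL_decomp} then delivers the KL bound that motivates the result.
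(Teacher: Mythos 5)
Your proposal is correct and follows essentially the same route as the paper's proof: the same splitting of $\log(\phi_t^*/\phi_t)$ through the intermediate quantity $\tilde{\phi}_t=w_t^{\psi}q_{t+1}^{\psi}(\phi_{t+1})$, the log-sum/Jensen inequality for the propagation term, and the change of measure combined with the $L^2$-projection bound of Assumption \ref{ass:function_classes}$(ii)$ for the residual term. The sign issue you flag in the $\varepsilon^{\psi\cdot\phi}$ recursion is a genuine subtlety, but it is equally left implicit in the paper, which applies the density bound $\xi_t^{\psi\cdot\phi}/\xi_t^{*}\leq M_t$ to the integrand $q_{t+1}^{\psi\cdot\phi}(\log(\phi_{t+1}/\phi_{t+1}^*))$ without verifying its nonnegativity.
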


Equation \eqref{eqn:error_recursion} shows how the residual errors $(e_t^{\psi})_{t=0}^T$
in the $N=\infty$ limit propagate backward in time. These errors can be small if the function classes
are sufficiently rich, in which case, Theorem \ref{thm:policy_learning} and Proposition \ref{prop:KL_decomp}
would imply good performance of AC-SMC.


\section{Parameter inference}\label{sec: parameter_inference}

In this section, we consider the problem of parameter and state inference in the Bayesian framework.
Let $p(d\theta)=p(\theta)d\theta$ denote our prior distribution on the parameter space $\Theta$.
We develop a new methodology to approximate the posterior distribution of the parameters and latent states
\begin{equation}\label{eqn:posterior_parameters_states}
	p(d\theta, ds_{0:T} | y_{1:T}) = p(d\theta|y_{1:T}) p(ds_{0:T} |y_{1:T}, \theta)
	= \frac{p(d\theta)p(y_{1:T} | \theta)}{p(y_{1:T})}p(ds_{0:T} |y_{1:T}, \theta),
\end{equation}
and the model evidence $p(y_{1:T})=\int_{\Theta}p(d\theta)p(y_{1:T} | \theta)$. Our approach is to
employ an SMC sampler \citep{del2006sequential} to sequentially approximate distributions and their normalizing constants along the path
\begin{eqnarray}\label{eqn:path_parameters_states}
	p(d\theta, ds_{0:T} | y_{1:T},\lambda) &=& p(d\theta|y_{1:T},\lambda) p(ds_{0:T} |y_{1:T}, \theta,\lambda)\notag\\
	& =&
	\frac{p(d\theta)p(y_{1:T} | \theta, \lambda)}{p(y_{1:T}|\lambda)}p(ds_{0:T} |y_{1:T}, \theta,\lambda)
\end{eqnarray}
for $\lambda\in[0,1]$, in which a nested AC-SMC method (Algorithm \ref{alg:ac-SMC}) is exploited to approximate the likelihood
$p(y_{1:T}|\theta,\lambda)$ and smoothing distribution $p(ds_{0:T} |y_{1:T}, \theta,\lambda)$. As the inverse temperature $\lambda$ increases from zero to one, Equation \eqref{eqn:path_parameters_states} bridges between the prior distribution
$p(d\theta)p(ds_{0:T}|\theta)$ and the posterior distribution $p(d\theta, ds_{0:T} | y_{1:T})$ in Equation
\eqref{eqn:posterior_parameters_states}, and the normalization constant $p(y_{1:T}|\lambda)=\int_{\Theta}p(d\theta)p(y_{1:T} | \theta,\lambda)$ goes from one to the model evidence $p(y_{1:T})$.

Our approach falls in the class of SMC$^2$ methods developed by \citet*{fulop2013efficient}, \citet*{chopin2013smc2}, and \citet{duan2015density}, but differs in important aspects. In Section \ref{subsec:smc2}, we provide a description of our adaptive SMC$^2$ algorithm and discuss how it differs from and relates to the existing literature. Theoretical justifications of the algorithm are presented in Section \ref{subsec:consistency}.

\subsection{Adaptive SMC$^2$}\label{subsec:smc2}

Our proposed SMC$^2$ methodology, detailed in Algorithm \ref{alg:SMC-sq}, builds on AC-SMC in Algorithm \ref{alg:ac-SMC} and a conditional  implementation of controlled SMC (Algorithm \ref{alg:condSMC} in Appendix \ref{appendix:cond_smc}).
\begin{algorithm}[h]
\protect\caption{\footnotesize{Adaptive sequential Monte Carlo$^2$} \label{alg:SMC-sq}}
{\footnotesize
\textbf{Input}: number of parameter particles $P$, state particles $N$ and ESS threshold $\kappa_{\mathrm{ESS}}$.

(1) For iteration $i=0$.

\begin{description}[itemsep=0pt,parsep=0pt,topsep=0pt,labelindent=0.5cm]

	\item (1a) Set inverse temperature as $\lambda_0=0$.

	\item (1b) For particle $p=1,\ldots,P$, sample a parameter $\theta^{(p)}\sim p(d\theta)$ from the prior and a trajectory
	$s_{0:T}^{(p)}=(s_{t}^{(p)})_{t=0}^T\sim p(ds_{0:T}|\theta^{(p)})$ from the latent process.
	

\end{description}

(2) For iteration $i=1,2,\ldots$, if $\mathrm{ESS}\left(\Omega_i^{(1)}(1),\ldots,\Omega_i^{(P)}(1)\right) \geq \kappa_{\mathrm{ESS}} P$, set
      next inverse temperature as $\lambda_i = 1$, else determine $\lambda_i$ as the value of $\lambda\in(\lambda_{i-1},1)$
	that solves $\mathrm{ESS}\left(\Omega_i^{(1)}(\lambda),\ldots,\Omega_i^{(P)}(\lambda)\right) = \kappa_{\mathrm{ESS}} P$ using bisection method.

(3) For particle $p=1,\ldots,P$.
	
	\begin{description}[itemsep=0pt,parsep=0pt,topsep=0pt,labelindent=0.5cm]
		\item (3a) Compute unnormalized weights $\omega_i^{(p)}= \omega_i^{(p)}(\lambda_i)$ and
		normalized weights $\Omega_i^{(p)}= \Omega_i^{(p)}(\lambda_i)$.

		\item (3b) Sample ancestor $\alpha_{i}^{(p)}\sim r(\cdot|\Omega_{i}^{(1)},\ldots,\Omega_{i}^{(P)})$, and update parameter as $\theta^{(p)}=\theta^{(\alpha_{i}^{(p)})}$ and trajectory as $s_{0:T}^{(p)}=s_{0:T}^{(\alpha_{i}^{(p)})}$.
		
		\item (3c) Learn policy $\psi^{(p)}$ for parameter $\theta^{(p)}$ at inverse temperature $\lambda_i$.

		\item (3d) Run conditional SMC (Algorithm \ref{alg:condSMC} in the Appendix) with policy $\psi^{(p)}$ and reference trajectory $s_{0:T}^{(p)}$
		at parameter $\theta^{(p)}$ and inverse temperature $\lambda_i$ to update likelihood estimator
		$\hat{p}(y_{1:T}|\theta^{(p)},\lambda_i)$ and trajectory $s_{0:T}^{(p)}$.
		
       \end{description}
		
(4) Construct proposal transition kernel $h_i$ based on current set of parameters $(\theta^{(p)})_{p=1}^P$. For $k = 1,\ldots, K$ PMMH moves.
		\begin{description}[itemsep=0pt,parsep=0pt,topsep=0pt,labelindent=0.5cm]
			\item (4a) Sample a parameter proposal $\theta^{(*)}\sim h_i(\cdot |\theta^{(p)})$.
			\item (4b) Learn policy $\psi^{(*)}$ for proposed parameter $\theta^{(*)}$ at inverse temperature $\lambda_i$.
			\item (4c) Run controlled SMC (Algorithm \ref{alg:cSMC}) with policy $\psi^{(*)}$ at proposed parameter $\theta^{(*)}$
			and inverse temperature $\lambda_i$ to obtain likelihood estimator
			$\hat{p}(y_{1:T}|\theta^{(*)},\lambda_i)$ and trajectory $s_{0:T}^{(*)}$.
		
			\item (4d) Update parameter $\theta^{(p)}=\theta^{(*)}$, trajectory $s_{0:T}^{(p)}=s_{0:T}^{(*)}$ and
			likelihood estimator $\hat{p}(y_{1:T}|\theta^{(p)},\lambda_i) = \hat{p}(y_{1:T}|\theta^{(*)},\lambda_i)$ with probability
			$\alpha(\theta^{(*)}|\theta^{(p)},\lambda_i)$ in Equation \eqref{eqn:PMMH_acceptprob}.		
			
		\end{description}

(5) If $\lambda_i=1$, set number of iterations as $I=i$ and terminate iteration.\\
(6) Compute the model evidence estimator $\hat{p}(y_{1:T})$ as in Equation \eqref{pevidencehat}.
	
\textbf{Output}: parameters $(\theta^{(p)})_{p=1}^P$, trajectories $(s_{0:T}^{(p)})_{p=1}^P$ and model evidence estimator $\hat{p}(y_{1:T})$.}
\end{algorithm}

In Step 1, we initialize the algorithm at inverse temperature $\lambda_0=0$ by sampling $P$ parameters
and state trajectories $(\theta^{(p)},s_{0:T}^{(p)})_{p=1}^P$ from the prior distribution $p(d\theta, ds_{0:T} | y_{1:T},\lambda_0)=p(d\theta)p(ds_{0:T}|\theta)$. Subsequently, for iteration $i\geq 1$, we determine the next inverse temperature $\lambda_i\in(0,1]$ in Step 2, so that the previous bridging distribution $p(d\theta, ds_{0:T} | y_{1:T},\lambda_{i-1})$ provides a good importance sampling approximation of the next bridging distribution $p(d\theta, ds_{0:T} | y_{1:T},\lambda_i)$.
In this context, we measure the quality of importance sampling approximations using the effective sample size (ESS) criterion
\begin{align}\label{eqn:adapt_weights}
\mathrm{ESS}(\Omega_i^{(1)}(\lambda),\ldots,\Omega_i^{(P)}(\lambda))=1 {\Big/} \sum_{p=1}^P\Omega_i^{(p)}(\lambda)^2,
\end{align}
which is based on the unnormalized weights
\begin{align}\label{eqn:adapt_weights}
	\omega_i^{(p)}(\lambda)
	= \omega(\theta^{(p)},s_{0:T}^{(p)}|\lambda_{i-1},\lambda)
	= \prod_{t=1}^Tg_{\theta^{(p)}}(y_t|s_{t-1}^{(p)},s_t^{(p)})^{\lambda-\lambda_{i-1}}
\end{align}
and the normalized weights $\Omega_i^{(p)}(\lambda)=\omega_i^{(p)}(\lambda)/\sum_{j=1}^P\omega_i^{(j)}(\lambda)$.
The ESS lies between $1$ to $P$: the lower bound is attained when one sample has all the normalized weight,
while the upper bound is achieved when all samples have uniform weights.

We adapt the next inverse temperature using the following scheme:
\begin{align}\label{eqn:adapt_inversetemp}
	\lambda_i = \inf\left\lbrace\lambda\in(\lambda_{i-1},1] : \mathrm{ESS}\left(\Omega_i^{(1)}(\lambda),\ldots,\Omega_i^{(P)}(\lambda)\right)
	= \kappa_{\mathrm{ESS}} P \right\rbrace,
\end{align}
with the convention $\inf\emptyset = 1$ to accommodate the terminal iteration,
where $\kappa_{\mathrm{ESS}}\in(0,1)$ is a pre-specified threshold that controls the amount of weight degeneracy.
We refer readers to \citet[Section 3]{dai2020invitation} for discussions on the choice of $\kappa_{\mathrm{ESS}}$ and its impact
on the number of iterations required to reach the desired inverse temperature of $\lambda=1$. In contrast to the adaptation scheme proposed in \citet*{svensson2018learning}, which requires storing the entire SMC output for each parameter $\theta^{(p)}$, the unnormalized weight in Equation \eqref{eqn:adapt_weights} is considerably simpler as it only depends on
a single trajectory. Moreover, as it can be shown that our ESS criterion is a strictly decreasing and continuous function of $\lambda\in(\lambda_{i-1},1]$, Equation \eqref{eqn:adapt_inversetemp} can be implemented using a simple bisection routine. After determining $\lambda_i$, we compute the resulting importance weights (Step 3a) and perform resampling to focus our computational effort on more likely parameters and trajectories (Step 3b). For notational ease, we have used the same notation $(\theta^{(p)},s_{0:T}^{(p)})_{p=1}^P$ to denote the resulting set of samples after any operation.

In Step 3c, we then learn a policy $\psi^{(p)}$ for each resampled parameter $\theta^{(p)}$ to construct good proposal distributions that approximate the smoothing distribution $p(ds_{0:T} | y_{1:T},\theta^{(p)},\lambda_i)$ at inverse temperature $\lambda_i$.
This policy learning step is left intentionally general in the algorithm to accommodate various strategies
for different applications. A generic and useful recipe is to set $\psi^{(p)}$ as constant one functions for inverse temperatures $\lambda_i$ that are below a small pre-specified level $\lambda_*\in(0,1)$. The rationale here is that at low inverse temperatures, the performance of uncontrolled SMC would be adequate as the influence of the observations are limited. This allows us to efficiently rule out very unlikely parameters at early iterations of the algorithm. For larger inverse temperatures $\lambda_i>\lambda_*$, it is worthwhile spending the computational overhead to learn the optimal policies in more promising regions of the parameter space.

In Step 3d, given the policy $\psi^{(p)}$ for each parameter $\theta^{(p)}$, we then run a conditional implementation of controlled SMC
(Algorithm \ref{alg:condSMC} in Appendix \ref{appendix:cond_smc}) to obtain a likelihood estimator $\hat{p}(y_{1:T}|\theta^{(p)},\lambda_i)$ and a new trajectory $s_{0:T}^{(p)}$.
The distinctive feature in the conditional implementation
is that the input reference trajectory is conditioned to survive all resampling steps \citep*{andrieu2010particle}, which is necessary for the validity of our approach. The use of conditional SMC within SMC$^2$ was also considered by \citet*{chopin2013smc2}, but for the purpose of increasing the number of state particles $N$ as more observations are assimilated. As we are constructing better SMC proposals by learning optimal policies, the choice of $N$ is less crucial in our setting. Compared to the tempered likelihood approach of \citet{duan2015density}, our proposed methodology has the flexibility to alter the SMC configuration via conditional SMC and has the benefits of annealing. Note that at this stage, we only have to store the resampled parameters $(\theta^{(p)})_{p=1}^P$,
updated trajectories $(s_{0:T}^{(p)})_{p=1}^P$, and likelihood estimators $(\hat{p}(y_{1:T}|\theta^{(p)},\lambda_i))_{p=1}^P$.

Compared to having a single Markov chain to sample from the posterior distribution of Equation \eqref{eqn:posterior_parameters_states}, the ability to select tuning parameters of a proposal transition kernel $h_i$, based on existing samples that approximate $p(d\theta|y_{1:T},\lambda_i)$ in Step 4, is an advantage of the SMC$^2$ approach. We refer readers to \citet[Section 2.2]{dai2020invitation} and references therein for discussions of various adaptation rules. Our numerical experiments employ Gaussian random walk proposals with covariance matrices that are estimated using the current set of parameter particles $(\theta^{(p)})_{p=1}^P$. Step 4 describes $K\in\mathbb{N}$ particle marginal Metropolis--Hastings (PMMH) moves \citep*{andrieu2010particle} to improve the sample diversity of the resampled parameters. The combination of these steps define a Markov transition kernel that has $p(d\theta, ds_{0:T} | y_{1:T},\lambda_i)$ as its invariant distribution. As the efficiency of PMMH moves crucially depends on the variance of the likelihood estimator \citep{doucet2015efficient,sherlock2015efficiency}, we learn the optimal policy for each
proposed parameter $\theta^{(*)}$ (Step 4b), and run controlled SMC with the resulting policy $\psi^{(*)}$ (Step 4c) to obtain a lower variance likelihood estimator $\hat{p}(y_{1:T}|\theta^{(*)},\lambda_i)$ and a trajectory $s_{0:T}^{(*)}$ with a law that is closer to the
smoothing distribution $p(ds_{0:T} | y_{1:T}, \theta^{(*)}, \lambda_i)$. In Step 4d, the proposed parameter $\theta^{(*)}$, trajectory $s_{0:T}^{(*)}$, and likelihood estimator $\hat{p}(y_{1:T}|\theta^{(*)},\lambda_i)$ are then accepted according to the Metropolis--Hastings acceptance probability
\begin{equation}\label{eqn:PMMH_acceptprob}
			\alpha(\theta^{(*)}|\theta^{(p)},\lambda_i) = \min\left\lbrace 1, ~\frac{p(\theta^{(*)}) \hat{p}(y_{1:T}|\theta^{(*)},\lambda_i)  h_i(\theta^{(p)}|\theta^{(*)})}
			{p(\theta^{(p)}) \hat{p}(y_{1:T}|\theta^{(p)},\lambda_i) h_i(\theta^{(*)}|\theta^{(p)})} \right\rbrace.
\end{equation}

When the desired inverse temperature of $\lambda=1$ is reached, we terminate the SMC$^2$ algorithm (Step 5). The number of iterations $I$ required to complete the algorithm is random and depends on the choice of ESS threshold $\kappa_{\mathrm{ESS}}\in(0,1)$. Using the parameters and trajectories $(\theta^{(p)},s_{0:T}^{(p)})_{p=1}^P$ outputted by the algorithm, we can approximate posterior expectations $\pi(\varphi)=\int_{\Theta\times\mathbb{S}^{T+1}}\varphi(\theta,s_{0:T})p(d\theta, ds_{0:T} | y_{1:T})$, for any integrable function $\varphi:\Theta\times\mathbb{S}^{T+1}\rightarrow\mathbb{R}$, with the sample average $\hat{\pi}(\varphi)=P^{-1}\sum_{p=1}^P\varphi(\theta^{(p)},s_{0:T}^{(p)})$. As a by-product of the algorithm, we also have an estimator of the model evidence $\hat{p}(y_{1:T})$,
computed using the unnormalized weights in Step 6
\begin{equation}\label{pevidencehat}
    \hat{p}(y_{1:T})=\prod_{i=1}^I \left(\frac{1}{P}\sum_{p=1}^P \omega_i^{(p)} \right).
\end{equation}
Since estimators of the model evidence cannot be easily obtained with a single PMMH chain targeting Equation \eqref{eqn:posterior_parameters_states}, this is another strength of our SMC$^2$ approach.
In the next section, we will establish consistency properties of our estimators of posterior expectations and
the model evidence in the limit of the number of parameters particles $P\rightarrow\infty$, for any choice of the number of state particles $N>1$.

\subsection{Consistency properties of adaptive SMC$^2$ estimators}\label{subsec:consistency}
This section concerns asymptotic properties of our estimators of expectations under the posterior
distribution in Equation \eqref{eqn:posterior_parameters_states} and the model evidence $p(y_{1:T})$.
In the following, the notations $\stackrel{p.}{\rightarrow}$ and $\stackrel{d.}{\rightarrow}$ denote
convergence in probability and distribution, respectively.
We will establish the weak law of large numbers (WLLN)
\begin{align}\label{eqn:WLLN}
	\hat{\pi}(\varphi)\stackrel{p.}{\longrightarrow}\pi(\varphi),\quad
	\hat{p}(y_{1:T})\stackrel{p.}{\longrightarrow} p(y_{1:T}),
\end{align}
as the number of parameter particles $P\rightarrow\infty$, for any number of state particles $N>1$.
To quantify the rate of convergence, we will also seek the central limit theorems (CLT)
\begin{align}\label{eqn:CLT}
	\sqrt{P}(\hat{\pi}(\varphi)-\pi(\varphi))\stackrel{d.}{\longrightarrow}\mathcal{N}(0,\sigma^2(\varphi)), \quad
	\sqrt{P}(\hat{p}(y_{1:T})-p(y_{1:T}))\stackrel{d.}{\longrightarrow}\mathcal{N}(0,\sigma^2),
\end{align}
as $P\rightarrow\infty$ for any fixed $N>1$, where we denote the normal distribution
with mean vector $\mu$ and covariance matrix $\Sigma$ as $\mathcal{N}(\mu,\Sigma)$ and its density
by $x\mapsto\mathcal{N}(x;\mu,\Sigma)$.
Although we will not give explicit expressions of the asymptotic variances $\sigma^2(\varphi)$
and $\sigma^2$ due to the complexity of our SMC$^2$ algorithm,
it is clear how they can be derived in our proofs, provided in the Appendix \ref{sec:proofsmc2}.

We first consider Algorithm \ref{alg:SMC-sq} without adaptation, i.e.,
the inverse temperature schedule $(\lambda_i)_{i=0}^I$
and the proposal transition kernels $(h_{i})_{i=1}^{I}$ are pre-specified and not
determined on the fly in Steps 2 and 4.
This analysis serves to elucidate various aspects of our SMC$^2$ algorithm
without the added complication of adaptation.

\begin{theorem}\label{thm:smc_sq}
The estimators generated by SMC$^2$ in Algorithm \ref{alg:SMC-sq} without adaptation in Steps 2 and 4
satisfy the WLLNs and CLTs in Equations \eqref{eqn:WLLN} and \eqref{eqn:CLT}
for any bounded and measurable function $\varphi:\Theta\times\mathbb{S}^{T+1}\rightarrow\mathbb{R}$.
\end{theorem}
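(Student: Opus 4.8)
The plan is to recast the non-adaptive version of Algorithm \ref{alg:SMC-sq} (with the inverse temperature schedule $(\lambda_i)_{i=0}^I$ and proposal kernels $(h_i)_{i=1}^I$ fixed in advance) as a standard SMC sampler \citep{del2006sequential} operating on a suitably extended state space, and then to invoke the established limit theorems for such samplers \citep{del2004feynman,chopin2004central}. The extended space should collect, in addition to the parameter $\theta$ and the reference trajectory $s_{0:T}$, all the auxiliary random variables produced by the nested controlled SMC at inverse temperature $\lambda_i$, namely the $N$ state particles at each time and the $N-1$ ancestor indices per step. Following the particle MCMC construction of \citet{andrieu2010particle} and its use within SMC$^2$ by \citet{chopin2013smc2}, I would define a sequence of extended target distributions $\bar\pi_i$ such that (i) the $(\theta,s_{0:T})$-marginal of $\bar\pi_i$ is the annealed posterior $p(d\theta,ds_{0:T}\,|\,y_{1:T},\lambda_i)$ in Equation \eqref{eqn:path_parameters_states}, and (ii) the controlled SMC likelihood estimator $\hat p(y_{1:T}\,|\,\theta,\lambda_i)$ is a deterministic function of the extended state whose conditional expectation given $\theta$ recovers $p(y_{1:T}\,|\,\theta,\lambda_i)$.

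The heart of the argument is to verify that each operation in Steps 2--4 corresponds to a legitimate SMC-sampler move on this extended space. First, the unbiasedness $\mathbb{E}[\hat p(y_{1:T}\,|\,\theta,\lambda_i)]=p(y_{1:T}\,|\,\theta,\lambda_i)$ follows from the weight construction in Equation \eqref{eqn:weights_psi}, which satisfies the defining identity stated immediately below it for \emph{any} policy $\psi$; crucially this holds even though $\psi$ is learned from data, since for the purpose of these limit theorems the learned policies may be regarded as fixed measurable functions of $\theta$ and $\lambda_i$. Second, I would show that the conditional controlled SMC kernel of Step 3d (Algorithm \ref{alg:condSMC}) leaves $\bar\pi_i$ invariant; this is exactly the conditional SMC invariance of \citet{andrieu2010particle} extended to the controlled proposals $(q_t^\psi)_{t=0}^T$ and weights $(w_t^\psi)_{t=0}^T$, whose validity hinges on the reference trajectory being conditioned to survive all resampling steps. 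Third, the $K$ PMMH moves in Step 4 leave $\bar\pi_i$ invariant by the standard particle marginal Metropolis--Hastings argument, with the acceptance ratio in Equation \eqref{eqn:PMMH_acceptprob} being the correct ratio of extended targets once the auxiliary variables are integrated out.

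It then remains to check that the reweighting in Step 3a bridges consecutive targets correctly: the incremental weight $\omega_i^{(p)}(\lambda_i)$ in Equation \eqref{eqn:adapt_weights} must equal $d\bar\pi_i/d\bar\pi_{i-1}$ up to a normalizing constant, evaluated at the current extended state. This reduces to the single-trajectory identity $\omega(\theta,s_{0:T}\,|\,\lambda_{i-1},\lambda_i)=\prod_{t=1}^T g_\theta(y_t\,|\,s_{t-1},s_t)^{\lambda_i-\lambda_{i-1}}$, which is precisely the ratio of the annealed complete likelihoods in Equation \eqref{eqn:tempered_likelihood} at the two temperatures and is consistent with the extended targets because the resampled $s_{0:T}^{(p)}$ is distributed according to the smoothing marginal under $\bar\pi_{i-1}$. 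With these three building blocks in place, the algorithm is a bona fide SMC sampler alternating reweighting, multinomial resampling, and $\bar\pi_i$-invariant Markov moves, so the WLLN and CLT in Equations \eqref{eqn:WLLN} and \eqref{eqn:CLT} follow from \citet{chopin2004central} and the Feynman--Kac framework of \citet{del2004feynman}; the product estimator \eqref{pevidencehat} inherits the standard normalizing-constant CLT for SMC samplers.

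I expect the main obstacle to be the careful construction of the extended targets $\bar\pi_i$ so that all three invariance and consistency properties hold \emph{simultaneously}, and in particular reconciling the conditional SMC move of Step 3d (which refreshes the full particle system) with the single-trajectory reweighting of Step 3a (which only touches $s_{0:T}^{(p)}$). The bookkeeping of which auxiliary variables are refreshed, resampled, or carried over between iterations, together with ensuring that the policy-learning Step 3c does not break measurability or invariance, is where the technical care concentrates; the assumed boundedness of $\varphi$ is then used to guarantee that the asymptotic variances $\sigma^2(\varphi)$ and $\sigma^2$ appearing in the CLT are finite.
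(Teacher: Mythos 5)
Your proposal follows essentially the same route as the paper: Proposition \ref{prop:SMC_samplers} and Lemma \ref{lem:properties_kernels} carry out exactly the extended-space construction you describe --- casting the non-adaptive algorithm as an SMC sampler of \citet{del2006sequential} on $\tilde{\mathbb{X}}$, verifying invariance of the conditional-SMC and PMMH moves, and reducing the incremental weight to the single-trajectory ratio --- after which Theorem \ref{thm:smc_sq} follows by citing the standard limit theorems of \citet{del2004feynman} and \citet{chopin2004central}. The one point to correct is your claim that the learned policies may be regarded as fixed measurable functions of $(\theta,\lambda_i)$: since ADP fits the policy to random particles, the paper instead treats $\psi$ as a random element of $\mathbb{F}_{0:T}$ with law $p_{\textrm{ADP}}(d\psi|\theta,\lambda)$ included as a coordinate of the extended target in Equation \eqref{eq:extended_target_pmmh}, so that Step 3c becomes a Gibbs refresh of that coordinate rather than a deterministic map.
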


To study the adaptive SMC$^2$ algorithm, we first note that the adaptation scheme
in Step 2 and Equation \eqref{eqn:adapt_inversetemp} should be seen as a finite sample approximation of
a limiting deterministic inverse temperature schedule $(\lambda_i^*)$, defined by
the following scheme
\begin{align}
	\lambda_i^* = \inf\left\lbrace\lambda\in(\lambda_{i-1}^*,1] : \chi^2(p(d\theta, ds_{0:T} | y_{1:T},\lambda)~|~
	p(d\theta, ds_{0:T} | y_{1:T},\lambda_{i-1}^*))
	= \kappa_{\mathrm{ESS}}^{-1}-1 \right\rbrace,
\end{align}
initialized at $\lambda_0^*=0$, where the above $\chi^2$-divergence
\begin{align}
	&\chi^2(p(d\theta, ds_{0:T} | y_{1:T},\lambda)~|~
	p(d\theta, ds_{0:T} | y_{1:T},\lambda_{i-1}^*)) \\
	&= \frac{\int_{\Theta\times\mathbb{S}^{T+1}}\omega(\theta,s_{0:T}|\lambda_{i-1}^*,\lambda)^2
	p(d\theta, ds_{0:T} | y_{1:T},\lambda_{i-1}^*)}
	{(\int_{\Theta\times\mathbb{S}^{T+1}}\omega(\theta,s_{0:T}|\lambda_{i-1}^*,\lambda)
	p(d\theta, ds_{0:T} | y_{1:T},\lambda_{i-1}^*))^2} -1\notag
\end{align}
depends on the importance weight $\omega$ defined in Equation \eqref{eqn:adapt_weights}.
Next, we formalize the adaptive nature of Step 4
using the framework of \citet{beskos2016convergence}.
At iteration $i$, we consider a parametric family of proposal transition kernels
$h_i$, indexed by tuning parameters $\xi\in\mathbb{R}^{s}$.
The algorithm determines these tuning parameters by computing
the sample average $\xi_i=P^{-1}\sum_{p=1}^PS_i(\theta^{(p)})$ of a summary statistic
$S_i:\Theta\rightarrow\mathbb{R}^{s}$, based on a current set of
parameter particles $(\theta^{(p)})_{p=1}^P$ approximating $p(d\theta|y_{1:T},\lambda_{i-1})$.
Let $m_i$ denote the resulting Markov transition kernel on $\Theta\times\mathbb{S}^{T+1}$
by composing Steps 3c, 3d and 4 of Algorithm \ref{alg:SMC-sq}, which will be shown to have
$p(d\theta,ds_{0:T}|y_{1:T},\lambda_{i})$ as its invariant distribution.
We also define the non-negative kernel
$Q_i(d\tilde{\theta},d\tilde{s}_{0:T}|\theta,s_{0:T})=\omega(\theta,s_{0:T}|\lambda_{i-1},\lambda_i)
m_i(d\tilde{\theta},d\tilde{s}_{0:T}|\theta,s_{0:T})$ on $\Theta\times\mathbb{S}^{T+1}$.
To stress the dependence of $Q_i$ on the quantities $\zeta_i=(\lambda_{i-1},\lambda_i,\xi_i)$,
we will write $Q_{i,\zeta_i}$.
Our assumptions will involve the behaviour of the function $\zeta_i\mapsto Q_{i,\zeta_i}$
and its gradient $\zeta_i\mapsto \nabla_{\zeta_i}Q_{i,\zeta_i}$
at $\zeta_i^*=(\lambda_{i-1}^*,\lambda_i^*,\xi_i^*)$, which corresponds to
an idealized algorithm where the limiting inverse temperature schedule
$(\lambda_i^*)$ is employed, and tuning parameters are determined by the expectation
$\xi_i^*=\int_{\Theta}S_i(\theta)p(d\theta|y_{1:T},\lambda_{i-1})$.

\begin{assumption}\label{ass:adaptive_smc}
For time $t=1,\ldots,T$ and iteration $i\geq 1$,
the observation density $g_{\theta}(y_t|s_{t-1},s_t)$, summary statistic $S_i$,
importance weight $\omega(\theta,s_{0:T}|\lambda_{i-1},\lambda_i)$
and the non-negative kernel $Q_i$ satisfy:
	\begin{enumerate}[label=(\roman*)]
		\item  $(\theta,s_{t-1},s_t)\mapsto\log g_{\theta}(y_t|s_{t-1},s_t)$
		is bounded on $\Theta\times\mathbb{S}\times\mathbb{S}$ for any $y_t\in\mathbb{Y}$;
		
		\item $S_i:\Theta\rightarrow\mathbb{R}^{s}$ is bounded on $\Theta$;
		
		\item $(\theta,s_{0:T},\lambda_{i-1},\lambda_i)\mapsto
		\omega(\theta,s_{0:T}|\lambda_{i-1},\lambda_i)$ is continuous at $(\lambda_{i-1}^*,\lambda_i^*)$
		uniformly on $\Theta\times\mathbb{S}^{T+1}$;
		
		\item $(\theta,s_{0:T},\zeta_i)\mapsto Q_i(\varphi|\theta,s_{0:T})$ is continuous at $\zeta_i^*$
		uniformly on $\Theta\times\mathbb{S}^{T+1}$
		for any bounded and measurable function
		$\varphi:\Theta\times\mathbb{S}^{T+1}\rightarrow\mathbb{R}$;
		
		\item $(\theta,s_{0:T},\zeta_i)\mapsto \nabla_{\zeta_i}Q_i(\varphi|\theta,s_{0:T})$ is well-defined,
		bounded and continuous at $\zeta_i^*$ uniformly on $\Theta\times\mathbb{S}^{T+1}$
		for any function $\varphi:\Theta\times\mathbb{S}^{T+1}\rightarrow\mathbb{R}$.
	\end{enumerate}

\end{assumption}

\begin{theorem}\label{thm:adaptive_smc_sq}
Under Assumption \ref{ass:adaptive_smc},
the estimators generated by adaptive SMC$^2$ in Algorithm \ref{alg:SMC-sq}
satisfy the WLLNs and CLTs in Equations \eqref{eqn:WLLN} and \eqref{eqn:CLT}
for any bounded and measurable function $\varphi:\Theta\times\mathbb{S}^{T+1}\rightarrow\mathbb{R}$.
\end{theorem}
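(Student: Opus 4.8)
The plan is to deduce the adaptive result from the non-adaptive Theorem \ref{thm:smc_sq} by showing that the data-dependent inverse temperatures and proposal tuning parameters converge to a deterministic limit, so that the adaptive algorithm behaves, asymptotically, like an idealized algorithm with a pre-specified configuration. Concretely, I would consider the idealized algorithm that employs the limiting deterministic schedule $(\lambda_i^*)$ defined through the $\chi^2$-divergence criterion and the limiting tuning parameters $\xi_i^*=\int_{\Theta}S_i(\theta)\,p(d\theta|y_{1:T},\lambda_{i-1}^*)$. Since this algorithm has a pre-specified schedule and pre-specified proposal kernels, it falls within the scope of Theorem \ref{thm:smc_sq} and therefore satisfies the WLLN and CLT in Equations \eqref{eqn:WLLN} and \eqref{eqn:CLT}. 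The task is then to transfer these limits to the genuinely adaptive algorithm, for which I would follow the perturbation framework of \citet{beskos2016convergence}.

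First I would proceed by induction over the iteration index $i$, carrying as inductive hypothesis that the weighted particle system at iteration $i-1$ satisfies a WLLN and CLT and that $\zeta_{i-1}=(\lambda_{i-2},\lambda_{i-1},\xi_{i-1})$ converges in probability to $\zeta_{i-1}^*$. The base case $i=0$ is immediate, as the particles are i.i.d.\ draws from the prior and $\lambda_0=\lambda_0^*=0$. For the inductive step, I would first establish convergence of the adaptive quantities. The empirical ESS criterion in the adaptation scheme of Equation \eqref{eqn:adapt_inversetemp}, viewed as a function of $\lambda$, converges under the inductive WLLN (using the uniform continuity in Assumption \ref{ass:adaptive_smc}(iii) and boundedness of $\log g_{\theta}$ in (i) to control the importance weights $\omega(\theta,s_{0:T}|\lambda_{i-1},\lambda)$) to the deterministic map defined by the $\chi^2$-divergence. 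Since the latter is strictly increasing and continuous in $\lambda$, its unique root $\lambda_i^*$ is a continuous functional of that map, and a standard argument for roots of monotone functions yields $\lambda_i\stackrel{p.}{\rightarrow}\lambda_i^*$. Likewise, boundedness of the summary statistic (Assumption \ref{ass:adaptive_smc}(ii)) together with the inductive WLLN gives $\xi_i=P^{-1}\sum_{p=1}^PS_i(\theta^{(p)})\stackrel{p.}{\rightarrow}\xi_i^*$, so that $\zeta_i\stackrel{p.}{\rightarrow}\zeta_i^*$.

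With $\zeta_i\stackrel{p.}{\rightarrow}\zeta_i^*$ in hand, I would transfer the limit theorems through the non-negative kernel $Q_{i,\zeta_i}$. For the WLLN, continuity of $\zeta\mapsto Q_{i,\zeta}(\varphi|\cdot)$ at $\zeta_i^*$ (Assumption \ref{ass:adaptive_smc}(iv)) shows that replacing $\zeta_i^*$ by the random $\zeta_i$ introduces an error that vanishes in probability, so the adaptive and idealized empirical measures share the same limit. For the CLT, I would expand $Q_{i,\zeta_i}$ around $\zeta_i^*$ using the gradient $\nabla_{\zeta}Q_{i,\zeta}$, which is well-defined, bounded, and continuous by Assumption \ref{ass:adaptive_smc}(v); since $\zeta_i-\zeta_i^*=O_P(P^{-1/2})$ by the inductive CLT, a delta-method argument coupling the fluctuations of the particle system with those of the adaptation statistics shows that the adaptive estimator obeys the same CLT, with the same asymptotic variance, as its idealized counterpart. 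Finally, I would argue that the random number of iterations $I$ equals the deterministic idealized count with probability tending to one, so that comparing the two algorithms at a common terminal iteration is legitimate, and the evidence estimator in Equation \eqref{pevidencehat}, being a product of normalizing-constant estimates across iterations, inherits the CLT by the continuous mapping theorem and Slutsky's lemma.

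The main obstacle is the CLT transfer in the previous paragraph: one must show not merely that the adaptive estimator is consistent, but that the $\sqrt{P}$-scale randomness injected by the data-dependent choice of $(\lambda_i,\xi_i)$ propagates through the nonlinear kernel $Q_{i,\zeta_i}$ without altering the asymptotic variance. This requires a joint central limit theorem for the particle approximations and the adaptation parameters together, and the careful bookkeeping of \citet{beskos2016convergence} to verify that the first-order perturbation terms aggregate into the same limiting Gaussian. The uniform-continuity and differentiability conditions in Assumption \ref{ass:adaptive_smc}(iv)--(v) are precisely what make this accounting go through, while Assumption \ref{ass:adaptive_smc}(i)--(ii) supply the boundedness needed for the integrability and for the convergence of the adaptation functionals.
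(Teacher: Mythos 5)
Your proposal is correct and lands on the same conceptual route as the paper: both reduce the adaptive algorithm to the framework of \citet{beskos2016convergence}, with the data-driven temperatures and tuning parameters viewed as finite-sample perturbations of the limiting deterministic schedule $(\lambda_i^*)$ and $\xi_i^* = \int_{\Theta}S_i(\theta)\,p(d\theta|y_{1:T},\lambda_{i-1}^*)$. The difference is in where the effort is spent. The paper's proof is essentially a verification-and-citation argument: Proposition \ref{prop:adaptive_SMC_samplers} identifies Algorithm \ref{alg:SMC-sq} as an adaptive SMC sampler on the marginal space $\mathbb{X}$ with potential $G_{i-1}$ and the composed kernel $m$ (whose $\pi(\cdot|\lambda_i)$-invariance for \emph{arbitrary} tuning parameters is the real structural content, supplied by the extended-space construction in Lemma \ref{lem:properties_kernels}), after which Assumption \ref{ass:adaptive_smc}$(i)$--$(iv)$ is matched to Assumptions A5--A6 of \citet{beskos2016convergence} for the WLLN and the full Assumption \ref{ass:adaptive_smc} to A5--A7 for the CLT, and Theorems 3.1, 3.2 and 3.4 of that paper are invoked. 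You instead take the reduction largely for granted (leaning on Theorem \ref{thm:smc_sq} for the idealized, non-adaptive algorithm) and sketch the internals of the cited theorems: induction over iterations, $\lambda_i\stackrel{p.}{\rightarrow}\lambda_i^*$ via convergence of the ESS functional (note the empirical ESS is strictly \emph{decreasing} in $\lambda$, the $\chi^2$-divergence increasing, but the root argument is unaffected), $\xi_i\stackrel{p.}{\rightarrow}\xi_i^*$, a continuity transfer for the WLLN and a first-order expansion of $\zeta\mapsto Q_{i,\zeta}$ for the CLT. That accounting is exactly what Assumption \ref{ass:adaptive_smc}$(iv)$--$(v)$ is designed to support, so your sketch would go through; the one step you should make explicit rather than inherit implicitly is that the composed ADP--conditional-SMC--PMMH kernel leaves the marginal target invariant for every admissible value of the adapted tuning parameters, not merely at $\zeta_i^*$, since this is what licenses treating the adaptive move step as a valid (perturbed) SMC-sampler mutation.
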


The crux of our arguments to establish Theorems \ref{thm:smc_sq} and \ref{thm:adaptive_smc_sq} is to first cast
the rather involved SMC$^2$ algorithm, without and with adaptation,
as a particular SMC sampler and adaptive SMC sampler in the frameworks
of \citet{del2006sequential} and \citet{beskos2016convergence} that
operate on an extended space. This enables us to then invoke convergence results
for standard SMC methods \citep{del2004feynman,chopin2004central} and apply them
to our SMC$^2$ estimators. Such an approach that exploits the specific properties of Algorithm \ref{alg:SMC-sq}
is not applicable to the SMC$^2$ method of \citet{duan2015density}, which is based on the tempered likelihood.

\section{Applications}\label{sec:apps}


Dynamic stochastic general equilibrium (DSGE) models have been widely used in macroeconomic research and in central banks for forecasting and policy-making. In this section, we consider a prototypical New Keynesian DSGE model that has been studied in \citet{woodford2003}, \citet{an2007bayesian}, and \citet{herbst2016}. This model has become a benchmark specification for the analysis of monetary policy. Variants of this model have also been studied in \citet*{aruoba2018res} and \citet{aruoba2020wp}.

In the Appendix \ref{LRRsmc2}, we provide another application to estimate a non-linear and non-Gaussian consumption-based long-run risk asset pricing model, in which consumption volatility is modelled using an autoregressive gamma process instead of an autoregressive process that is usually adopted in the standard long-run risk model \citep*[see,][]{bansal2004risks, bansal2012empirical}. This model has also been studied by \citet{fulop2020bayesian}.

\subsection{Model setup}

The model economy consists of a representative household, a final goods producing firm, a continuum of intermediate goods producing firms, and a monetary/fiscal authority. For the purpose of self-containedness, we provide the detailed model description in what follows. \bigskip

\noindent {\emph{Household.}} We consider a representative household who maximizes the following expected utility
\begin{equation}
   \mathbb{E}_t\left[ \sum_{s=0}^{\infty} \beta^s\left( \frac{(\mathsf{C}_{t+s}/\mathsf{A}_{t+s})^{1-\tau} - 1}{1-\tau} +
   \chi_M\log\left(\frac{\mathsf{M}_{t+s}}{\mathsf{P}_{t+s}}\right) - \mathsf{H}_{t+s}\right)\right],
\end{equation}
subject to the budget constraint
\begin{equation}
   \mathsf{P}_t\mathsf{C}_t + \mathsf{B}_t + \mathsf{M}_t + \mathsf{T}_t = \mathsf{P}_t\mathsf{W}_t\mathsf{H}_t
   + \mathsf{R}_{t-1}\mathsf{B}_{t-1} + \mathsf{M}_{t-1} + \mathsf{P}_t\mathsf{D}_t + \mathsf{P}_t\mathsf{SC}_t,
\end{equation}
where $\mathbb{E}_t$ denotes conditional expectation given information up to time $t$, and $\beta \in(0,1)$ is the discount factor. The household derives utility from consumption $\mathsf{C}_t$ relative to a habit shock and real money balances $\mathsf{M}_t/\mathsf{P}_t$, with $\mathsf{P}_t$ denoting the price of the final good, and derives disutility from hours worked $\mathsf{H}_t$. $\tau$ captures the household's level of risk aversion and its inverse, $1/\tau$, is the intertemporal elasticity of substitution. $\chi_M$ is a scale factor that determines steady-state real money balances. The household receives the real wage $\mathsf{W}_t$ in exchange for labor and has access to the bond market where $\mathsf{B}_t$ nominal government bonds are traded with gross interest $\mathsf{R}_t$. Furthermore, he/she receives residual real profits $\mathsf{D}_t$ from firms and has to pay lump-sum taxes $\mathsf{T}_t$. $\mathsf{SC}_t$ is the net cash inflow from trading a full set of state-contingent securities. 
\bigskip

\noindent {\emph{Firms.}} The final goods producing firms generate aggregate output $\mathsf{Y}_t$ by combining a continuum of intermediate goods $\mathsf{Y}_t(j)$ for $j\in[0,1]$. Under the assumption of perfect competition and free entry, the demand for intermediate goods with price $\mathsf{P}_t(j)$ is given by
\begin{equation}
	\mathsf{Y}_t(j) = \left(\frac{\mathsf{P}_t(j)}{\mathsf{P}_t}\right)^{-1/\nu}\mathsf{Y}_t,
\end{equation}
and the price of the final good is
\begin{equation}
	\mathsf{P}_t = \left( \int_0^1 \mathsf{P}_t(j)^{\frac{\nu-1}{\nu}}dj\right)^{\frac{\nu}{\nu-1}},
\end{equation}
where $1/\nu > 1$ represents the elasticity of demand for each intermediate good.

Intermediate good $j$ is produced by a monopolist who has the following linear production technology
\begin{equation}
     \mathsf{Y}_t(j) = \mathsf{A}_t\mathsf{N}_t(j),
\end{equation}
where $\mathsf{N}_t(j)$ is the labor input of firm $j$ and $\mathsf{A}_t$ is an exogenous productivity process that is common to all firms and evolves according to
\begin{equation}
    \log \mathsf{A}_t = \log \gamma + \log \mathsf{A}_{t-1} + \log \mathsf{z}_t, \quad \log \mathsf{z}_t = \rho_z \log \mathsf{z}_{t-1} + \varepsilon_{z,t}.
\end{equation}
In the above, $\mathsf{z}_t$ captures exogenous fluctuations of the technology growth rate and $\varepsilon_{z,t}\sim\mathcal{N}(0,\sigma_z^2)$.

Firms face nominal price rigidities in terms of quadratic price adjustment costs
\begin{equation}
	\mathsf{AC}_t(j) = \frac{\phi}{2}\left(\frac{\mathsf{P}_t(j)}{\mathsf{P}_{t-1}(j)} - \pi\right)^2\mathsf{Y}_t(j),
\end{equation}
where $\phi$ governs the price stickiness in the economy and $\pi$ is the steady-state rate of inflation $\pi_t$, defined as $\pi_t = \mathsf{P}_t/\mathsf{P}_{t-1}$. Each firm chooses its labor input $\mathsf{N}_t(j)$ and price $\mathsf{P}_t(j)$ to maximize the present value of its future profits
\begin{equation}
    \mathbb{E}_t\left[ \sum_{s=0}^{\infty} \beta^s\mathsf{Q}_{t+s|t}\left( \frac{\mathsf{P}_{t+s}(j)}{\mathsf{P}_{t+s}}\mathsf{Y}_{t+s}(j) - \mathsf{W}_{t+s}\mathsf{N}_{t+s}(j) - \mathsf{AC}_{t+s}(j)\right)\right],
\end{equation}
where $\mathsf{Q}_{t+s|t}$ is the time $t$ value of a unit of the consumption good in period $t+s$ to the household, which is treated as exogenous by the firm.\bigskip

\noindent \emph{Government policies.} The government consumes a stochastic fraction of aggregate output and its spending is assumed to evolve according to
\begin{equation}
	\mathsf{G}_t = \left(1 - \frac{1}{\mathsf{g}_t} \right)\mathsf{Y}_t,
\end{equation}
where $\mathsf{g}_t$ is an exogenous process and is assumed to follow
\begin{equation}
   \log \mathsf{g}_t = (1-\rho_g)\log \mathsf{g} + \rho_g \log \mathsf{g}_{t-1} + \varepsilon_{g,t},
\end{equation}
with $\varepsilon_{g,t}\sim\mathcal{N}(0,\sigma_g^2)$.

A central bank sets the interest rate by an interest rate feedback rule
\begin{equation}
     \mathsf{R}_t = \mathsf{R}_t^{*1-\rho_R}\mathsf{R}_{t-1}^{\rho_R}\exp({\varepsilon_{R,t}}),
\end{equation}
where $\varepsilon_{R,t}\sim\mathcal{N}(0,\sigma_R^2)$ is a monetary policy shock and $\mathsf{R}_t^*$ is the nominal target rate
\begin{equation}
     \mathsf{R}^*_t = \mathsf{r}\pi^*\left(\frac{\pi_t}{\pi^*} \right)^{\psi_1}\left(\frac{\mathsf{Y}_t}{\mathsf{Y}_t^*} \right)^{\psi_2},
\end{equation}
where $\mathsf{r}$ is the steady-state real interest rate, $\pi^*$ is the target inflation rate, which coincides in equilibrium with the steady-state inflation rate $\pi$, and $Y_t^*$ is the level of output when there are no nominal rigidities ($\phi = 0$).

The government levies a lump-sum taxes to finance any shortfalls in government revenues. Its budget constraint is given by
\begin{equation}
   \mathsf{P}_t\mathsf{G}_t + \mathsf{M}_{t-1} + \mathsf{R}_{t-1}\mathsf{B}_{t-1} = \mathsf{T}_t + \mathsf{M}_t + \mathsf{B}_t.
\end{equation}

\noindent \emph{Summary of equilibrium conditions.} We express the model in terms of detrended variables, $\mathsf{c}_t = \mathsf{C}_t/\mathsf{A}_t$ and $\mathsf{y}_t = \mathsf{Y}_t/\mathsf{A}_t$. It can be shown that if the innovations $(\varepsilon_{z,t},\varepsilon_{R,t},\varepsilon_{g,t})$ are zero at all times, the model economy has a unique steady-state, which is given as follows
\begin{equation}
  \pi = \pi^*, \quad \mathsf{r} = \gamma/\beta, \quad \mathsf{R} = \mathsf{r}\pi^*, \quad \mathsf{c} = (1-\nu)^{1/\tau}, \quad \mathsf{y} = \mathsf{g}\mathsf{c}.
\end{equation}
Denoting $\hat{\mathsf{x}}_t = \log(\mathsf{x}_t/\mathsf{x})$ as the percentage deviation of a variable $\mathsf{x}_t$ from its steady-state $\mathsf{x}$, the model's equilibrium conditions can be summarized as follows
\begin{align}
     &\beta \mathbb{E}_t\Big[\exp(-\tau(\hat{\mathsf{c}}_{t+1} - \hat{\mathsf{c}}_{t}) + \hat{\mathsf{R}}_t - \hat{\mathsf{z}}_{t+1} -\hat{\pi}_{t+1})\Big]= 1,  \\
     &\exp(\hat{\mathsf{c}}_{t} -\hat{\mathsf{y}}_{t}) = \exp(-\hat{\mathsf{g}}_t) - \frac{\phi\pi^2\mathsf{g}}{2}(\exp(\hat{\pi}_{t}) - 1)^2,\\
     &\frac{1-\nu}{\nu\phi\pi^2} (\exp(\tau \hat{\mathsf{c}}_{t}) - 1) = (\exp(\hat{\pi}_{t}) - 1)\Big[(1-\frac{1}{2\nu})\exp(\hat{\pi}_{t}) + \frac{1}{2\nu}\Big]\notag\\
     &-\beta \mathbb{E}_t\Big[(\exp(\hat{\pi}_{t+1}) - 1)\exp(-\tau(\hat{\mathsf{c}}_{t+1} -\hat{\mathsf{c}}_{t}) + \hat{\mathsf{y}}_{t+1} - \hat{\mathsf{y}}_{t} +\hat{\pi}_{t+1})\Big],\\
          &\hat{\mathsf{R}}_t = \rho_R \hat{\mathsf{R}}_{t-1} + (1-\rho_R)\psi_1\hat{\pi}_t + (1-\rho_R)\psi_2(\hat{\mathsf{y}}_{t} -\hat{\mathsf{g}}_{t}) + \varepsilon_{R,t},\\
     &\hat{\mathsf{g}}_t = \rho_g\hat{\mathsf{g}}_{t-1} + \varepsilon_{g,t},\\
     &\hat{\mathsf{z}}_t = \rho_z\hat{\mathsf{z}}_{t-1} + \varepsilon_{z,t}.
     \end{align}

\noindent \emph{Model solution.} The above equilibrium conditions form a non-linear rational expectations system in the state variable
     \begin{equation}
         s_t = (\hat{\mathsf{y}}_t, \hat{\mathsf{c}}_t, \hat{\pi}_t, \hat{\mathsf{R}}_t, \varepsilon_{R,t}, \hat{\mathsf{g}}_t,\hat{\mathsf{z}}_t),
     \end{equation}
in which the first four are endogenous state variables and the last three are exogenous state variables. This system has to be solved numerically before estimating the model using data; the resulting solution has the form in Equation \eqref{ssm:state_obs}. In empirical studies, linear approximation methods \citep[e.g.,][]{sims2002solving} are very popular as they result in a linear state-space representation of the model that can be easily estimated by evaluating the likelihood function using the Kalman filter \citep[see,][]{an2007bayesian, herbst2016}.

However, as shown in \citet*{fernandez2006ecta}, the second-order approximation errors in the solutions of dynamic economic models have first-order effects on the likelihood estimation. Furthermore, errors in the likelihood estimation are compounded with the size of the sample. Therefore, it is desirable to solve the model using more accurate methods. For the purpose of illustrating our new econometric method, after taking both accuracy and computational cost in account, we choose a second-order perturbation method with pruning \citep*{schmitt2004, schmitt2007optimal, fernandez2018res} to solve the above non-linear rational expectations system.\bigskip

\noindent \emph{Measurement equations.} Finally, we complete the model by relating the state variables $s_t$ to a set of observables as in Equation \eqref{ssm:state_obs}. Following \citet{an2007bayesian} and \citet{herbst2016}, we assume that the following observations are available: quarter-to-quarter per capita GDP growth rates (YGR), annualized quarter-to-quarter inflation rates (INF), and annualized nominal interest rates (INT), which are in percentages and relate to the state variables as follows
\begin{eqnarray}
        \textrm{YGR}_t &=& \gamma^{(Q)} + 100 (\hat{\mathsf{y}}_t - \hat{\mathsf{y}}_{t-1}+\hat{\mathsf{z}}_t) + u_{1,t}, \label{obsy}\\
        \textrm{INF}_t &=& \pi^{(A)} + 400\hat{\pi}_t + u_{2,t}, \label{obspi}\\
        \textrm{INT}_t &=& \pi^{(A)} + \mathsf{r}^{(A)} + 4\gamma^{(Q)} + 400\hat{R}_t + u_{3,t},\label{obsr}
\end{eqnarray}
where $u_t = (u_{1,t}, u_{2,t}, u_{3,t})\sim\mathcal{N}(0_3,\Sigma_u)$, which captures the measurement errors of the observables,
has a zero mean vector $0_3$ and diagonal covariance matrix $\Sigma_u = \textrm{diag}(\sigma_{e,y}^2, \sigma_{e,\pi}^2, \sigma_{e,R}^2)$.
The $d_{\theta}=18$ unknown parameters $\theta$ to be inferred include the structural parameters and the standard deviations of the measurement errors
\begin{equation}
    \theta = (\tau, \nu, \kappa, 1/\mathsf{g}, \psi_1, \psi_2, \mathsf{r}^{(A)}, \pi^{(A)}, \gamma^{(Q)}, \rho_R, \rho_g, \rho_z, \sigma_R, \sigma_g, \sigma_z, \sigma_{e,y}, \sigma_{e,\pi}, \sigma_{e,R}),
\end{equation}
where $\kappa=\tau(1-\nu)/\nu\pi^2\phi$, and $(\mathsf{r}^{(A)},\pi^{(A)},\gamma^{(Q)})$ are related to the model steady-states via
\begin{equation}
   \beta = \frac{1}{1+\mathsf{r}^{(A)}/400}, \quad \pi = 1 + \frac{\pi^{(A)}}{400}, \quad \gamma = 1 +  \frac{\gamma^{(Q)}}{100}.
\end{equation}

\subsection{Implementation details}

\noindent \emph{State-space model.} Solving the model using the second-order perturbation method as discussed above introduces non-linearities in the resulting state-space model. The latent state variables $s_t = (x_t,z_t)\in \mathbb{S}=\mathbb{R}^{d_x}\times\mathbb{R}^{d_z}$ contain $d_x=4$ endogenous variables $x_t$ and $d_z=3$ exogenous variables $z_t$. The time evolution of the endogenous variables $(x_t)_{t=0}^T$ is given by a deterministic mapping $x_t = X_{\theta}(x_{t-1},z_t)$ for $t=1,\ldots,T$ with initialization at $x_0=X_{\theta}(0_{d_x},z_0)$, where $X_{\theta}:\mathbb{S}\rightarrow\mathbb{R}^{d_x}$ has the form
\begin{equation}\label{eqn:dsge_xtransition}
	X_{\theta}(s)=c(\theta)+L(\theta)s+Q_{\theta}(s),\quad s\in\mathbb{S}.
\end{equation}
In the above, the constant is $c(\theta)\in\mathbb{R}^{d_x}$, the linear term is $L(\theta)\in\mathbb{R}^{d_x\times d}$, and
the quadratic term $Q_{\theta}:\mathbb{S}\rightarrow\mathbb{R}^{d_x}$ is defined as $Q_{\theta}(s)_i=s^\top Q_i(\theta)s$
where $Q_i(\theta)\in\mathbb{R}^{d\times d}$ for $i=1,\ldots,d_x$. The exogenous variables $(z_t)_{t=0}^T$ are modelled as a vector autoregressive process
\begin{equation}\label{eqn:dsge_ztransition}
	z_t = \rho(\theta) z_{t-1}+ \Sigma(\theta)\varepsilon_t,\quad\varepsilon_t\sim\mathcal{N}(0_{d_z},I_{d_z}),
	\quad t=1,\ldots,T,
\end{equation}
with initialization $z_0 = \Sigma(\theta)\varepsilon_0, \varepsilon_0\sim\mathcal{N}(0_{d_z},I_{d_z})$,
where the matrices are $\rho(\theta),\Sigma(\theta)\in\mathbb{R}^{d_z\times d_z}$ and $I_{d_z}$ denotes the identity matrix of size $d_z$. We can then write the time evolution of the latent states $(s_t)_{t=0}^T$ as
\begin{equation}\label{eqn:dsge_state_transition}
	s_t = A(\theta)s_{t-1} + B(\theta)\varepsilon_t + c_{\theta}(s_{t-1},\varepsilon_t),
	\quad \varepsilon_t\sim\mathcal{N}(0_{d_z},I_{d_z}),
\end{equation}
for $t=1,\ldots,T$, with the initial state
\begin{equation}\label{eqn:dsge_initial_state}
	s_0 = B(\theta)\varepsilon_0 + c_{\theta}((0_{d_x},\Sigma(\theta)\varepsilon_0),\varepsilon_0),
	\quad\varepsilon_0\sim\mathcal{N}(0_{d_z},I_{d_z}).
\end{equation}
Expressions for the model matrices $A(\theta)\in\mathbb{R}^{d\times d}$, $B(\theta)\in\mathbb{R}^{d\times d_z}$
and the function $c_{\theta}:\mathbb{S}\times\mathbb{R}^{d_z}\rightarrow\mathbb{S}$ in terms of the above quantities
are given in the Appendix \ref{appendix:dsge_ssm}. In the framework of Equation \eqref{eqn:latent_process}, this corresponds to an initial distribution and a Markov transition kernel that are partially degenerate
\begin{equation}
\begin{aligned}\label{eqn:dsge_transitions}
	\mu_{\theta}(ds_0)&=\delta_{\Phi_{\theta}^{(0)}(\varepsilon_0)}(ds_0)\mathcal{N}(\varepsilon_0;0_{d_z},I_{d_z})d\varepsilon_0,\\
	f_{\theta}(ds_t | s_{t-1}) &= \delta_{\Phi_{\theta}(s_{t-1},\varepsilon_{t})}(ds_t)\mathcal{N}(\varepsilon_t;0_{d_z},I_{d_z})d\varepsilon_t,
\end{aligned}
\end{equation}
for $t=1,\ldots,T$, where $s_0=\Phi_{\theta}^{(0)}(\varepsilon_0)$ and $s_t=\Phi_{\theta}(s_{t-1},\varepsilon_{t})$ denote the mappings in Equations \eqref{eqn:dsge_state_transition} and \eqref{eqn:dsge_initial_state}. The log-linearized model can be easily obtained by simply suppressing the $c_{\theta}$ terms in Equations \eqref{eqn:dsge_state_transition} and \eqref{eqn:dsge_initial_state}.

There are three observables $y_t\in\mathbb{Y}=\mathbb{R}^{d_y}$ with $d_y=3$. For each quarter $t=1,\ldots,T$, the relation between observables and latent states is described by the model
\begin{align}
	g_{\theta}(y_t | s_t) = \mathcal{N}(y_t; d(\theta) + E(\theta)s_t, F(\theta)),
\end{align}
for some model vector $d(\theta)\in\mathbb{R}^{d_y}$, model matrix $E(\theta)\in\mathbb{R}^{d_y\times d}$
and covariance matrix $F(\theta)\in\mathbb{R}^{d_y\times d_y}$; see Equations \eqref{obsy}, \eqref{obspi}, and \eqref{obsr}.\bigskip

\noindent \emph{Annealed controlled SMC.}  The uncontrolled SMC within AC-SMC is taken as the BPF; see Section \ref{sec:SMC} for the corresponding choice of proposals $(q_t)_{t=0}^T$ and weight functions $(w_t)_{t=0}^T$. Owing to the partially degenerate nature of the model's initial distribution and Markov transitions in Equation \eqref{eqn:dsge_transitions}, here we elaborate how to specify policies. The treatment we propose is necessary for models with both endogenous and exogenous state variables and to the best of our knowledge, has not been considered in earlier works.

Instead of adopting a parameterization purely in terms of the latent variables $(s_t)_{t=0}^T$, we will also introduce dependence on the noise variables $(\varepsilon_t)_{t=0}^T$ in the function classes
\begin{equation}
\begin{aligned}\label{eqn:dsge_quadratic_functionclass}
	\mathbb{F}_0 &= \left\lbrace \psi_t(\varepsilon_0)=\exp(-Q_0(\varepsilon_0;\beta_0)) : |\beta_0|^2\leq \xi\right\rbrace,\\
	\mathbb{F}_t &= \left\lbrace \psi_t(s_{t-1},\varepsilon_t)=\exp(-Q(s_{t-1},\varepsilon_t;\beta_t)) :
	\beta_t=\Lambda_{\theta}(\tilde{\beta}_t), |\tilde{\beta}_t|^2\leq \xi\right\rbrace,
\end{aligned}
\end{equation}
for $t=1,\ldots,T$, where $Q_0(z;\beta)=z^\top A z + z^\top b + c$ is a quadratic function with coefficients
$\beta=(A,b,c)\in\mathbb{R}^{d_z\times d_z}_{\mathrm{sym}}\times\mathbb{R}^{d_z}\times\mathbb{R}$
and $Q(s_{t-1},\varepsilon_t;\beta_t)$ has the form of
\begin{equation}\label{eqn:quadratic_function}
	Q(z,z';\beta) = {z}'^\top A z' + {z}'^\top b + {z}'^\top C z + z^{\top} D z + z^{\top}e + f,
\end{equation}
with coefficients $\beta=(A,b,C,D,e,f)\in\mathbb{R}^{d_z\times d_z}_{\mathrm{sym}}\times\mathbb{R}^{d_z}\times \mathbb{R}^{d_z\times d}\times \mathbb{R}_\mathrm{sym}^{d\times d}\times\mathbb{R}^d\times\mathbb{R}$.
The notation $\mathbb{R}_\mathrm{sym}^{d\times d}$ refers to the space of real symmetric matrices of size $d\times d$,
and $|\beta|$ denotes the Euclidean norm of $\beta$ as a vector in Euclidean space.
Under this specification, we can initialize the policy in AC-SMC by having all coefficients equal to zero.
Functions in Equation \eqref{eqn:dsge_quadratic_functionclass} can be fitted using ridge regression in the logarithmic scale, and $\xi\in(0,\infty)$ controls the amount of shrinkage. Since policy approximations ultimately construct SMC proposal distributions, we take the view that careful selection of shrinkage within ADP is unnecessary. In our numerical implementation, we will fix $\xi$ as a suitably large value to safeguard against ill-conditioned Gram matrices.

The choice of function classes in Equation \eqref{eqn:dsge_quadratic_functionclass} is well-specified when log-linearization approximations are used to solve for the equilibrium conditions; see the Appendix \ref{appendix:dsge_lqg} for the optimal policy of the linearized model. When we employ second-order approximation methods that induce non-linearities in the model, this choice can provide good policy approximations despite being misspecified. However, our numerical findings reveal that these function classes are too flexible if the coefficients are not appropriately constrained. In practice, overfitting at each step of the ADP algorithm (Algorithm \ref{alg:ADP}) could result in numerical instabilities after several steps of iterating overfitted functions. To prevent overfitting, instead of performing regression using the variables $(s_{t-1},\varepsilon_t)$, we reduce the dimension of the feature space by working with the linearized state $\tilde{s}_t = A(\theta)s_{t-1} + B(\theta)\varepsilon_t$, and fitting a quadratic function $Q_0(\tilde{s}_t;\tilde{\beta}_t)$ with coefficients $\tilde{\beta}_t=(\tilde{A}_t,\tilde{b}_t,\tilde{c}_t)\in\mathbb{R}_\mathrm{sym}^{d\times d}\times\mathbb{R}^d\times\mathbb{R}$. The desired coefficients $\beta_t=(A_t,b_t,C_t,D_t,e_t,f_t)$ are then obtained by equating coefficients in the equality $Q_0(\tilde{s}_t;\tilde{\beta}_t)=Q(s_{t-1},\varepsilon_t;\beta_t)$, which defines the mapping $\Lambda_{\theta}$ (see, Appendix \ref{appendix:dsge_dimreduction} for details). We note that regularization is necessary in this setting, as the linearized states (and hence its features) are supported on a lower-dimensional subspace.

After fitting a policy $\psi=(\psi_t)_{t=0}^T$, the new proposal transitions $(q_t^{\psi})_{t=1}^T$ involve sampling the noise $\varepsilon_t$ from a distribution that depends on the previous state $s_{t-1}$, and applying the map $s_t=\Phi_{\theta}(s_{t-1},\varepsilon_{t})$ to obtain the next state (similarly for the new initial distribution $q_0^{\psi}$). As the map $\Phi_{\theta}(s_{t-1},\varepsilon_{t})$ may be the result of a complex numerical routine, the state transition density is generally intractable; hence including the noise variables in our policy parameterization is key to facilitate sampling from the new proposal transitions.
We refer readers to the Appendix \ref{appendix:dsge_csmc} for the precise form of the proposal transitions and analytical expressions to evaluate the expectations of the policy appearing in the weight functions of Equation \eqref{eqn:weights_psi} and Algorithm \ref{alg:ADP}.
Lastly, as the form of the functions in Equation \eqref{eqn:dsge_quadratic_functionclass} are closed under multiplication,
policy refinements can be performed by updating the coefficients (see, Appendix \ref{appendix:dsge_constraints}).

\subsection{Simulations}

We perform Monte Carlo simulations to examine the likelihood estimator of AC-SMC.
We first consider the log-linearized model that is linear and Gaussian. Therefore, the likelihood function can be evaluated exactly using a Kalman filter.
We generate a sequence of quarterly observations of YGR, INF and INT from this model for 500 time periods.
The true model parameter values are the same as those used in \citet[Table 3]{an2007bayesian}.
For the purpose of comparison, we also include simulation results for the BPF with various numbers of particles.
It is well-understood that the performance of the BPF can be very poor when the standard deviations of the measurement errors are small.
To investigate how the efficiency of AC-SMC behaves with respect to the magnitude of the standard deviations of the measurement errors,
we choose these standard deviations to be 5\%, 10\%, 15\%, or 20\% of the standard deviations of the simulated data and add measurement noise to the data accordingly.
For each measurement error setting, we implement 100 independent repetitions of BPF and AC-SMC on a fixed simulated data sample
and compute the sample means and variances of their log-likelihood estimates.

Table \ref{table_likelihood_linear} summarizes the performance of these particle filters. For AC-SMC, we set the number of particles as $N=1,024$, and for BPF, we choose the number of particles $N$ equal to 4,096, 8,192, or 16,384.
It is striking that for all measurement error settings, the sample mean of the AC-SMC log-likelihood estimates matches the true log-likelihood value computed using a Kalman filter, and the sample variance is very small, particularly in the case of larger measurement errors.
In contrast, these numerical results show that the BPF log-likelihood estimator is very inefficient, and especially so in the case of small measurement errors.
For example, in the setting where the standard deviations of the measurement errors are 5\% of their sample analogue,
the true log-likelihood value is -2,091.4;
the sample mean and variance of log-likelihood estimates are -2,091.4 and $2.3\times 10^{-4}$ respectively for AC-SMC,
and -2,094.6 and 416.9 respectively for BPF with $N=16,384$ number of particles.
In terms of computational cost, AC-SMC with $N=1,024$ number of particles is comparable to that of BPF with $N=8,192$ number of particles.
\begin{table}[!t]
\begin{center}
\caption{\textbf{{\protect\footnotesize{Log-likelihood Estimates: Linear Model}}}}\label{table_likelihood_linear}
{\onehalfspacing\footnotesize
\resizebox{\textwidth}{!}{\begin{tabular*}{1.2\textwidth}{@{\extracolsep{\fill}}cccccccccccc}
\toprule
                      & \multicolumn{5}{c}{ME: 5\%}&& \multicolumn{5}{c}{ME: 10\%}   \\
\cline{2-6} \cline{8-12}
                      &     & \multicolumn{3}{c}{BPF} &  &&  & \multicolumn{3}{c}{BPF} & \\
                        \cline{3-5}\cline{9-11}
                    &    KF          & $2^{12}$ & $2^{13}$ & $2^{14}$ & AC-SMC &&   KF       & $2^{12}$ & $2^{13}$ & $2^{14}$ & AC-SMC\\
 \hline
 Mean           & -2091.4 &	-2441.1 &	-2287.4 &	-2194.6 &	-2091.4 && -2162.2	&-2217.7	&-2193.2	&-2179.6	&-2162.2\\
Var               &	            &   3061.2  &	1491.2  &	416.9   &	$2.3\times 10^{-4}$  &&		&150.9	&132.9	&45.9	&$1.0\times 10^{-5}$\\
Time             &              &	1.26	     &2.58	     &4.93	    &2.59	      &&               &   1.3	& 2.65	&5.04	&2.59	\\
\hline
                      & \multicolumn{5}{c}{ME: 15\%}&& \multicolumn{5}{c}{ME: 20\%}   \\
\cline{2-6} \cline{8-12}
                      &     & \multicolumn{3}{c}{BPF} &  &&  & \multicolumn{3}{c}{BPF} & \\
                        \cline{3-5}\cline{9-11}
                    &    KF          & $2^{12}$ & $2^{13}$ & $2^{14}$ & AC-SMC &&   KF       & $2^{12}$ & $2^{13}$ & $2^{14}$ & AC-SMC\\
 \hline
Mean	    & -2265.7     &-2297.3	&-2284.9	&-2276.9	&-2265.7	&&-2311.2	     &-2323.0    &-2317.9	&-2314.4	&-2311.2\\
Var    	    &                  &80.7	&51.3	&29.8	&$3.4\times 10^{-6}$	&&	              & 30.8	       &17.7         &	7.3	&$9.9\times 10^{-7}$\\
Time		    &                  &1.32	&2.66	&5.05	&2.54        &&		      &1.35	       &2.97	         &5.3	        &2.7\\
\bottomrule
\end{tabular*}}}
\end{center}
{\footnotesize \emph{Note:}
We generate a sequence of quarterly observations of YGR, INF and INT from the linear model for 500 time periods.
The true model parameter values are the same as those used in \citet[Table 3]{an2007bayesian}.
For each measurement error (ME) setting, we implement 100 independent repetitions of BPF and AC-SMC, based on the same sequence of simulated data,
and compute the sample means and variances of their log-likelihood estimates.
The true log-likelihood value is computed using a Kalman filter (KF).
For AC-SMC, we set the number of particles as $N=1,024$, and for BPF, we choose the number of particles $N$ equal to 4,096, 8,192, or 16,384.
The runtime is measured in seconds.}
\end{table}

We now consider the non-linear model that is obtained by solving the DSGE model with the second-order perturbation method.
Note that the likelihood function of this non-linear model is intractable.
As before, we simulate data from the model and consider four measurement error settings.
Table \ref{table_likelihood_nonlinear} reports the sample means and variances of log-likelihood estimates from 100 independent repetitions of BPF and AC-SMC.
In this simulation study, we set the number of particles in AC-SMC as $N=1,024$, and consider the number of particles $N$ equal to 16,384, 32,768, or 65,536 in BPF.

The performance of AC-SMC is also significantly better than that of BPF for this non-linear model.
For example, in the challenging regime of small measurement errors,
the sample variance of the log-likelihood estimates of AC-SMC is as small as 0.065, compared to 182.7 for BPF with $N=65,536$ particles.
Moreover, the gains in terms of variance reduction is always of several orders of magnitude across the four measurement error settings.
To take computational cost into account, we observe that the cost of AC-SMC with $N=1,024$ particles is comparable to that of BPF with
a number of particles between 32,768 and 65,536.
Lastly, we note that the likelihood estimate obtained by approximating the non-linear model with log-linearization
and running the Kalman filter is very biased compared to the estimates from BPF and AC-SMC.
\begin{table}[!t]
\begin{center}
\caption{\textbf{{\protect\footnotesize{Log-Likelihood Estimates: Non-linear Model}}}}\label{table_likelihood_nonlinear}
{\onehalfspacing\footnotesize
\resizebox{\textwidth}{!}{\begin{tabular*}{1.2\textwidth}{@{\extracolsep{\fill}}cccccccccccc}
\toprule
                      & \multicolumn{5}{c}{ME: 5\%}&& \multicolumn{5}{c}{ME: 10\%}   \\
\cline{2-6} \cline{8-12}
                      &     & \multicolumn{3}{c}{BPF} &  &&  & \multicolumn{3}{c}{BPF} & \\
                        \cline{3-5}\cline{9-11}
                    &    KF          & $2^{14}$ & $2^{15}$ & $2^{16}$ & AC-SMC &&   KF       & $2^{14}$ & $2^{15}$ & $2^{16}$ & AC-SMC\\
 \hline
 Mean  & -2204.4 & -2192.7 & -2147.3 & -2120.7 & -2089.4 && -2212.4 & -2164.5 & -2159.5 & -2156.2 & -2152.2 \\
 Var     &       & 570.1 & 325.2 & 182.7 & $6.5\times 10^{-2}$ &&       & 34.4  & 14.6  & 10.8  & $3.2\times 10^{-4}$ \\
 Time  &       & 5.12  & 13.17 & 30.68 & 20.58 &&       & 5.03  & 13.23 & 30.97 & 20.42 \\
\hline
                      & \multicolumn{5}{c}{ME: 15\%}&& \multicolumn{5}{c}{ME: 20\%}   \\
\cline{2-6} \cline{8-12}
                      &     & \multicolumn{3}{c}{BPF} &  &&  & \multicolumn{3}{c}{BPF} & \\
                        \cline{3-5}\cline{9-11}
                    &    KF          & $2^{14}$ & $2^{15}$ & $2^{16}$ & AC-SMC &&   KF       & $2^{14}$ & $2^{15}$ & $2^{16}$ & AC-SMC\\
 \hline
 Mean  & -2257.3 & -2222.1 & -2219.1 & -2218.2 & -2216.9 && -2359.7 & -2347.7 & -2346.6 & -2345.5 & -2344.3 \\
 Var      &       & 13.8  & 6.7   & 2.7   & $3.9\times 10^{-4}$ &&       & 9.2   & 4.4   & 1.9   & $5.0\times 10^{-4}$ \\
 Time   &       & 5.04  & 13.21 & 31.16 & 20.78 &&       & 5.06  & 13.22 & 31.25 & 20.59 \\
\bottomrule
\end{tabular*}}}
\end{center}
{\footnotesize \emph{Note:}
We generate a sequence of quarterly observations of YGR, INF and INT from the non-linear model for 500 time periods.
The true model parameter values are the same as those used in \citet[Table 3]{an2007bayesian}.
For each measurement error (ME) setting, we implement 100 independent repetitions of BPF and AC-SMC, based on the same sequence of simulated data,
and compute the sample means and variances of their log-likelihood estimates.
We set the number of particles in AC-SMC as $N=1,024$, and consider the number of particles $N$ equal to 16,384, 32,768, or 65,536 in BPF. The runtime is measured in seconds.}
\end{table}

\subsection{Real data application}\label{sec:dsge_realdata}

We now apply our adaptive SMC$^2$ algorithm with AC-SMC nested to estimate the above new Keynesian DSGE model using real data. While the log-linearized approximation of this model has been extensively investigated \citep[see, e.g.,][]{an2007bayesian, herbst2016}, the corresponding non-linear model has not been fully studied yet in the literature. We estimate the resulting non-linear model from applying the second-order perturbation method with pruning based on three observables as discussed in Equations \eqref{obsy}, \eqref{obspi}, and \eqref{obsr}: quarterly per capita GDP growth rate (YGR), quarterly inflation (INF), and the annualized federal funds rate as a proxy of interest rate (INT).
To avoid the issue of having a zero lower bound on the interest rate, we focus on the pre-crisis sample, ranging from 1983:Q1 to 2007:Q4 with a total of 100 observations. The time series of these three observables are displayed in Figure \ref{fig:data}.
\begin{figure}[!t]
\begin{center}
\includegraphics[width=1\textwidth, height=12cm]{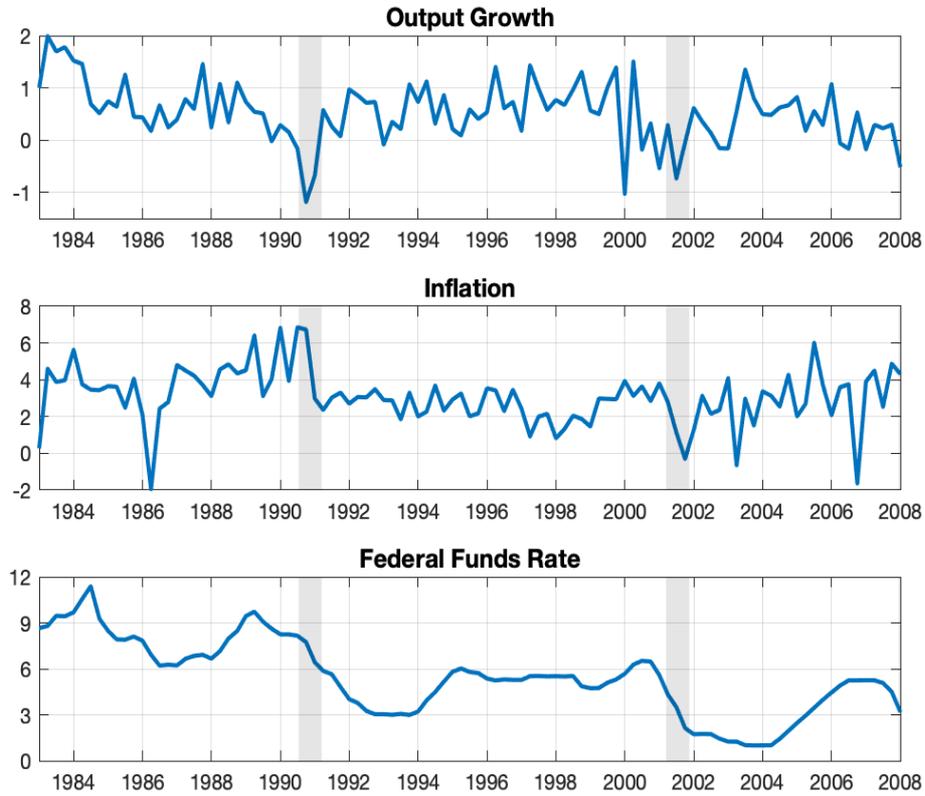}
\vspace{-1cm}
\caption{\textbf{\small {Time Series Data}}}\label{fig:data}
\end{center}
{\footnotesize\emph{Note:} Time series data include quarter-to-quarter per capita GDP growth rates (YGR), annualized quarter-to-quarter inflation rates (INF), and annualized nominal interest rates (INT), all of which are in percentages. To avoid the issue of having a zero lower bound on the interest rate, we focus on the pre-crisis sample, ranging from 1983:Q1 to 2007:Q4 with a total of 100 observations. The regions shaded in grey correspond to NBER recession periods.}
\end{figure}

Our adaptive SMC$^2$ algorithm requires initializing particles from the prior distribution of the model parameters.
Given the complex behaviour of the likelihood function implied by the DSGE model, choosing a conjugate prior is not feasible here.
We adopt a prior distribution that is component-wise independent, with marginal distributions that are normal distributions for real-valued parameters,
truncated Normal distributions for positive parameters, and uniform distributions for bounded parameters.
Under these choices, simulating from the prior distribution is straightforward.
The hyper-parameters of the prior distribution are selected by following the existing literature in macroeconomics.
The left column of Table \ref{table_estimates} details the exact distributional form, the support, and the hyper-parameters of the prior distribution for each model parameter.

We estimate the model in two settings: when the standard deviations of the measurement errors are fixed at 20\% of the corresponding sample standard deviations, as is commonly considered in the literature, and when they are treated as free parameters to be inferred using the data.
The algorithmic configuration of our adaptive SMC$^2$ involves $P=1,024$ parameter particles, $N=1,024$ state particles within AC-SMC, and an ESS threshold of $\kappa_{\mathrm{ESS}}=0.5$ to adaptively determine the inverse temperature schedule.
To ensure adequate sample diversity after resampling, we apply PMMH moves until the cumulative acceptance rate reaches two.
Figure \ref{fig:acceptrate} displays the acceptance rate of the final PMMH move at each annealing iteration.
We observe acceptance rates ranging from around 0.2 to 0.4 in both settings, revealing adequate rejuvenation of the sample diversity
\citep{doucet2015efficient,sherlock2015efficiency}.
\begin{figure}[t]
\begin{center}
\includegraphics[width=1\textwidth, height=6.5cm]{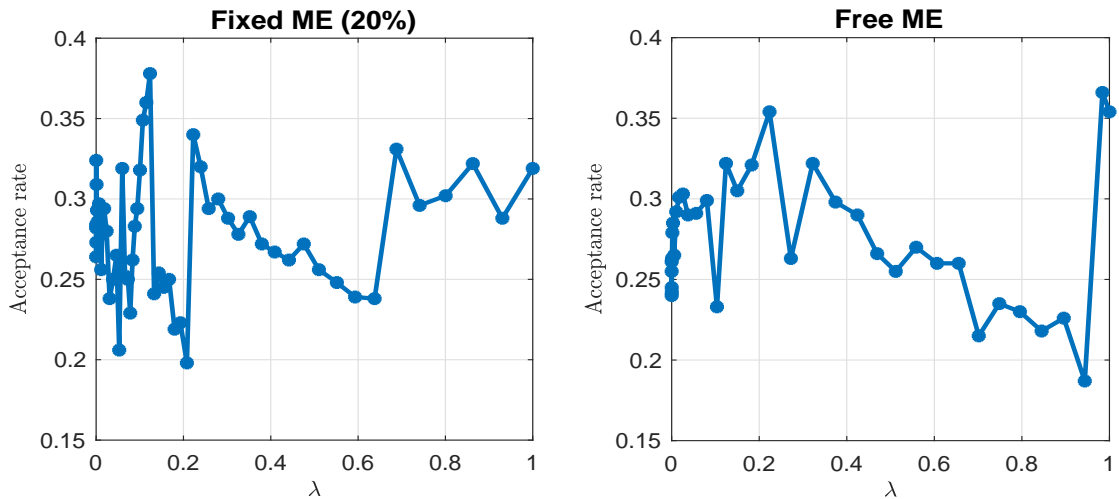}
\caption{\textbf{\protect\small {Acceptance Rates}}}\label{fig:acceptrate}
\end{center}
\end{figure}

The middle column of Table \ref{table_estimates} reports the parameter estimates of the model when standard deviations of the measurement errors are fixed. The risk-aversion coefficient has a posterior mean of 2.07 and a posterior standard deviation of 0.42.
As the posterior mean of $\kappa$ is 0.76, which is relatively large, this suggests a low degree of price rigidity and a small effect of monetary policy shocks on output.
The annualized steady-state growth rate of the economy is 3.2\%, the annualized steady-state inflation rate is 4.8\%, and the annualized steady-state nominal interest rate is 8.4\%. The steady-state ratio of $\mathsf{c}/\mathsf{y}$, which is equal to $1/\mathsf{g}$, is around 0.28.
The two exogenous processes, given by the government spending shock and the productivity shock, are close to unit root as the estimated persistence parameters are 0.98 and 0.99, respectively. This suggests that innovations to these processes have a long-lasting effect.
\begin{table}[!t]
\begin{center}
\caption{\textbf{{\protect\footnotesize{Prior Distributions and Posterior Estimates of Parameters}}}}\label{table_estimates}
{\onehalfspacing\footnotesize
\begin{tabular*}{\textwidth}{@{\extracolsep{\fill}}cllcccc}
\toprule
                        &  \multicolumn{2}{c}{Priors} &  \multicolumn{2}{c}{Fixed ME (20\%)}  & \multicolumn{2}{c}{Free ME}   \\
                              \cline{2-3}                                 \cline{4-5}                       \cline{6-7}

Parameter             & \multicolumn{1}{c}{Support}& \multicolumn{1}{c}{Distribution} &  Mean  &  Std     & Mean  & Std     \\
\hline
$\tau$                 & $(0,\infty)$ & $\mathcal{TN}(2.00, 0.50)$ & 2.07 & 0.42  &    1.73 & 0.43 \\
$\nu$                  & $(0, 1)$               & $\mathcal{U}(0, 1)$          & 0.43 & 0.27  &    0.68 & 0.20  \\
$\kappa$            & $(0,\infty)$ & $\mathcal{TN}(0.20, 0.20)$ & 0.76 & 0.15 &     0.06 & 0.04 \\
$1/\mathsf{g}$                  & $(0, 1)$ & $\mathcal{U}(0, 1)$& 0.28 & 0.27  &    0.53 & 0.21  \\
$\psi_1$              & $(0,\infty)$ & $\mathcal{TN}(1.50, 0.25)$& 1.73 & 0.18  &    1.80 & 0.23  \\
$\psi_2$              & $(0,\infty)$ & $\mathcal{TN}(0.50, 0.25)$& 0.44 & 0.22  &     0.67 & 0.23  \\
$\mathsf{r}^{(A)}$             & $(0,\infty)$ & $\mathcal{TN}(0.80, 0.50)$& 0.41 & 0.22   &    0.51 & 0.21  \\
$\pi^{(A)}$           & $(0,\infty)$ & $\mathcal{TN}(4.00, 2.00)$& 4.81 & 0.61  &     3.65 & 0.30   \\
$\gamma^{(Q)}$ & $\mathbb{R}$ & $\mathcal{N}(0.40, 0.20)$& 0.79 & 0.10   &    0.68 & 0.09   \\
$\rho_R$            & $(0, 1)$ & $\mathcal{U}(0, 1)$& 0.83 & 0.03   &     0.64 & 0.05  \\
$\rho_g$             & $(0, 1)$ & $\mathcal{U}(0, 1)$& 0.98 & 0.02   &    0.91 & 0.16   \\
$\rho_z$             & $(0, 1)$ & $\mathcal{U}(0, 1)$& 0.99 & 0.01   &    0.97 & 0.01  \\
100$\sigma_R$    & $(0,\infty)$ & $\mathcal{TN}(0.30, 4.00)$& 0.18 & 0.02&    0.03 & 0.01   \\
100$\sigma_{g}$ & $(0,\infty)$ & $\mathcal{TN}(0.40, 4.00)$& 0.62 & 0.05&    0.02 & 0.01   \\
100$\sigma_{z}$ & $(0,\infty)$ & $\mathcal{TN}(0.40, 4.00)$& 0.08 & 0.02&    0.01 & 0.00  \\
\hline
$\sigma_{e,y}$   & $(0,\infty)$ & $\mathcal{TN}(0.12, 1.00)$& 0.12 & ------&    0.50 & 0.05   \\
$\sigma_{e,\pi}$  & $(0,\infty)$ & $\mathcal{TN}(0.30, 1.00)$& 0.30 & ------&    1.37 & 0.10   \\
$\sigma_{e, R}$   & $(0,\infty)$ & $\mathcal{TN}(0.49, 1.00)$& 0.49 & ------&    0.13 & 0.03  \\
\hline
 Model Evidence                 &            &      &  \multicolumn{2}{c}{-442.8} & \multicolumn{2}{c}{-407.6} \\
\bottomrule
\end{tabular*}}
\end{center}
{\footnotesize \emph{Note:} This table details the prior distributions and posterior estimates of model parameters.
The left column provides the exact distributional form, the support, and the hyper-parameters of the prior distribution for each model parameter. $\mathcal{N}$ stands for the normal distribution, $\mathcal{TN}$ the truncated normal distribution, and $\mathcal{U}$ the uniform distribution.
The model is estimated using our adaptive SMC$^2$ algorithm with $P=1,024$ parameter particles and $N=1,024$ state particles within AC-SMC.
The middle and right columns report the posterior means and standard deviations of model parameters when the standard deviations of the measurement errors fixed at 20\% of the sample standard deviations and treated as free parameters, respectively.}
\end{table}

We now examine how the parameter estimates change when the standard deviations of the measurement errors are treated as free parameters.
From the right column of Table \ref{table_estimates}, the most striking change is that the posterior mean of $\kappa$ drops from 0.76 to 0.06,
which implies a much larger degree of price rigidity and a stronger effect of monetary policy shocks on output.
The annualized steady-state growth rate of the economy is 2.7\%, the annualized steady-state inflation rate is 3.7\%, the annualized steady-state nominal interest rate is 6.7\%, and the steady-state ratio of $\mathsf{c}/\mathsf{y}$ is around 0.53.
The government spending shock and the productivity shock are now slightly less persistent, with much smaller variations.

The estimated standard deviations of the measurement errors in the right column are quite different from the fixed values given in the middle column,
and suggest that the model fits the interest rate better than the output growth rate and the inflation rate.
Comparing the estimated model evidence in the two settings, reported in the last row of Table \ref{table_estimates},
suggests that the common practice of fixing the standard deviations of the measurement errors at 20\% of their sample analogue does deteriorate the model fit.
\begin{figure}[t]
\begin{center}
\includegraphics[width=1\textwidth, height=11cm]{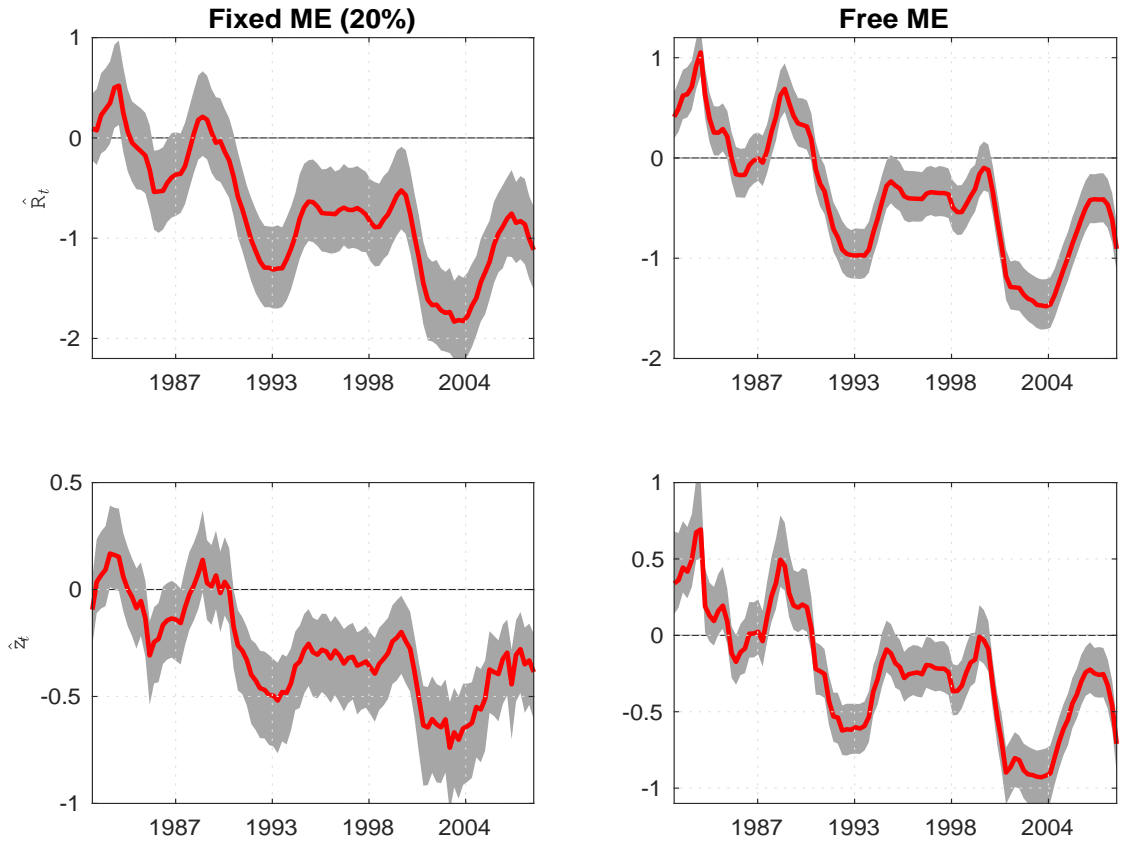}
\caption{\textbf{\protect\small { Selected Smoothed Latent States}}}\label{fig:latentstate}
\end{center}
{\footnotesize\emph{Note:} This figure illustrates (5, 50, 95)\%-quantiles of one smoothed endogenous state, the nominal interest rate ($\hat{\mathsf{R}}_{t}$), and of one smoothed exogenous state, the productivity shock ($\hat{\mathsf{z}}_t$).
The left and right panels correspond to fixing the standard deviations of the measurement errors at 20\% of their sample analogue and treating them as free parameters, respectively.}
\end{figure}

Using the output of our adaptive SMC$^2$ algorithm, we can examine the smoothing distribution of the latent states,
such as the exogenous states which are usually used in policy analysis.
Figure \ref{fig:latentstate} illustrates the (5, 50, 95)\%-quantiles of one smoothed endogenous state, the nominal interest rate $\hat{\mathsf{R}}_{t}$, and of one smoothed exogenous state, the productivity shock $\hat{\mathsf{z}}_t$.
The left and right panels correspond to fixing the standard deviations of the measurement errors and treating them as free parameters, respectively.
The differences between the smoothing distributions of these state variables are apparent.

\section{Concluding remarks}\label{sec: conclusion}

The literature on estimation and empirical applications of dynamic structural macrofinance models has mostly been relying on approximations using log-linearization to cast these models into linear Gaussian state-space models. It is now known that second-order approximation errors in the linear solutions can have first-order effects on likelihood estimation, and that errors in likelihood estimation can be compounded with the sample size. However, when models are solved using more accurate numerical methods such as high-order perturbation methods or projection methods, the solved models become non-linear and/or non-Gaussian state-space models with high-dimensional and complex structures.

In this paper, we propose a novel methodology to efficiently estimate the likelihood function of such state-space models by building on state-of-the-art SMC methods. While particle filters have been applied to estimate the likelihood of dynamic economic models, the successful application of this approach remains challenging due to the large variance of the resulting likelihood estimator. We develop an annealed controlled SMC method that delivers numerically stable and low variance estimators of the likelihood function, by adopting an annealing procedure to gradually introduce information from observations and construct globally optimal proposal distributions using approximate dynamic programming schemes. We provide a theoretical analysis to characterize the quality of our proposal distributions, which elucidates various properties of annealed controlled SMC.

To perform parameter inference, we develop a new adaptive SMC$^2$ algorithm that employs likelihood estimators from annealed controlled SMC. Under suitable assumptions, we establish law of large numbers and central limit theorems for our estimators of posterior expectations and the model evidence. We illustrate the strengths of our adaptive SMC$^2$ algorithm by estimating two popular macrofinance models: a prototypical new Keynesian dynamic stochastic general equilibrium model and a non-linear non-Gaussian consumption-based long-run risk asset pricing model.

We believe that the methods developed in this paper are readily applicable to DSGE models exhibiting strong non-linearities for instance due to the presence of effective lower bounds on interest rates targeted by monetary policy. Further as researchers move to the estimation of larger scale non-linear DSGE models with richer state spaces, the curse of dimensionality of particle filters (see e.g. \citet{bengtsson2008curse}) will call for the use of more efficient approaches, such as the one developed here.

\section*{Acknowledgements}
This work was funded by CY Initiative of Excellence (grant "Investissements d'Avenir" ANR-16-IDEX-0008).

\bibliography{ref}

\clearpage
\begin{center}
{\Large \bf Appendix}
\end{center}
\appendix

\renewcommand{\theequation}{\thesection.\arabic{equation}}

\setcounter{equation}{0}

\section{Proofs for annealed controlled sequential Monte Carlo}\label{sec:proofsanneal}

In this section, we detail the proofs of the results in Section \ref{sec:analysis1} on
annealed controlled sequential Monte Carlo.

\begin{proof}[Proof of Proposition \ref{prop:KL_decomp}]
Using the relation $\psi^*=\psi\cdot\phi^*$, which holds as $\phi^*$ is the optimal refinement of $\psi$ \citep[Proposition 1]{heng2020controlled},
and the form of the smoothing distribution $p(ds_{0:T}|y_{1:T},\theta,\lambda)$ in Equation \eqref{eqn:rewrite_smoothing},
we can decompose the log-density as
\begin{align}\label{eqn:decomp_exp}
	\log\left(\frac{p(s_{0:T} | y_{1:T},\theta,\lambda)}{q^{\psi\cdot\phi}(s_{0:T})}\right)
	&= \log\left(\frac{\phi_0^*(s_0)}{\phi_0(s_0)}\right) +
	\log\left(\frac{q_0(\psi_0\cdot\phi_0)}{q_0(\psi_0\cdot\phi_0^*)}\right) \notag\\
	&+ \sum_{t=1}^T\log\left(\frac{\phi_t^*(s_{t-1},s_t)}{\phi_t(s_{t-1},s_t)}\right)
	+ \sum_{t=1}^T\log\left(\frac{q_t(\psi_t\cdot\phi_t|s_{t-1})}{q_t(\psi_t\cdot\phi_t^*|s_{t-1})}\right).
\end{align}
To upper bound the expectations and conditional expectations in \eqref{eqn:decomp_exp}, we apply the log-sum inequality to obtain
\begin{align}
	\log\left(\frac{q_0(\psi_0\cdot\phi_0)}{q_0(\psi_0\cdot\phi_0^*)}\right)
	\leq \int_{\mathbb{S}}\frac{q_0(ds_0)\psi_0(s_0)\phi_0(s_0)}{q_0(\psi_0\cdot\phi_0)}
	\log\left(\frac{\phi_0(s_0)}{\phi_0^*(s_0)}\right) = q_0^{\psi\cdot\phi}(\log(\phi_0/\phi_0^*))
\end{align}
and
\begin{align}
	\log\left(\frac{q_t(\psi_t\cdot\phi_t|s_{t-1})}{q_t(\psi_t\cdot\phi_t^*|s_{t-1})}\right)
	&\leq \int_{\mathbb{S}}\frac{q_t(ds_t|s_{t-1})\psi_t(s_{t-1},s_t)\phi_t(s_{t-1},s_t)}{q_t(\psi_t\cdot\phi_t|s_{t-1})}
	\log\left(\frac{\phi_t(s_{t-1},s_t)}{\phi_t^*(s_{t-1},s_t)}\right) \notag\\
	&= q_t^{\psi\cdot\phi}(\log(\phi_t/\phi_t^*)|s_{t-1}).
\end{align}
Hence the KL divergence from $q^{\psi\cdot\phi}(ds_{0:T})$ to $p(ds_{0:T}|y_{1:T},\theta,\lambda)$
can be upper bounded by
\begin{align}
	&\mathrm{KL}\left(p(ds_{0:T}|y_{1:T},\theta,\lambda)  ~|~ q^{\psi\cdot\phi}(ds_{0:T}) \right)
	= \int_{\mathbb{S}^{T+1}} \log\left(\frac{p(s_{0:T} | y_{1:T},\theta,\lambda)}{q^{\psi\cdot\phi}(s_{0:T})}\right)
	p(ds_{0:T}|y_{1:T},\theta,\lambda)\notag \\
	&\leq q_0^*(\log(\phi_0^*/\phi_0)) + q_0^{\psi\cdot\phi}(\log(\phi_0/\phi_0^*)) \notag\\
	& + \sum_{t=1}^T(\mu_{t-1}^{*}\times q_t^{*})(\log(\phi_t^*/\phi_t))
	+ \sum_{t=1}^T(\mu_{t-1}^{*}\times q_t^{\psi\cdot\phi})(\log(\phi_t/\phi_t^*)).
\end{align}
Equation \eqref{eqn:KL_decomposition} follows by recalling that
$\xi_0^*=q_0^*$, $\xi_0^{\psi\cdot\phi}=q_0^{\psi\cdot\phi}$, and
$\xi_t^*=\mu_{t-1}^*\times q_t^*$, $\xi_t^{\psi\cdot\phi}=\mu_{t-1}^*\times q_t^{\psi\cdot\phi}$
for $t=1,\ldots,T$.
\end{proof}

\begin{proof}[Proof of Theorem \ref{thm:policy_learning}]
We first establish the backward recursion for the sequence $(\varepsilon_t^*)_{t=0}^T$.
By Assumption \ref{ass:marginals}$(i)$ and Jensen's inequality, the difference at the terminal time $T$ satisfies
\begin{align}\label{eqn:proof_terminaltime}
	\varepsilon_T^{*} \leq C_T \nu_T^{\psi}(\log(\phi_T^*/\phi_T))
	\leq C_T \| \log \phi_T - \log \phi_T^*\|_{L^2(\nu_T^{\psi})}.
\end{align}
Noting that $\phi_T=P_T^{\psi} w_T^{\psi}$, $\phi_T^*=w_T^{\psi}$ and $w_T^{\psi}\in\mathbb{G}_T^{\psi}$,
we have $\varepsilon_T^{*}\leq C_Te_T^{\psi}$ by applying Assumption \ref{ass:function_classes}$(ii)$.
For time $t=0,\ldots,T-1$, we consider the decomposition
\begin{align}\label{eqn:proof_decomp1}
	\varepsilon_t^{*} = \xi_t^{*}(\log(\phi_t^*/\tilde{\phi}_t))
						   + \xi_t^{*}(\log(\tilde{\phi}_t/\phi_t)),
\end{align}
with $\tilde{\phi}_t = w_t^{\psi} q_{t+1}^{\psi}(\phi_{t+1})$.
By the log-sum inequality, we have
\begin{align}
	\log(\phi_t^*/\tilde{\phi}_t) = \log\left(\frac{q_{t+1}^{\psi}(\phi_{t+1}^*)}{q_{t+1}^{\psi}(\phi_{t+1})}\right)
	= \log\left(\frac{q_{t+1}(\psi_{t+1}\phi_{t+1}^*)}{q_{t+1}(\psi_{t+1}\phi_{t+1})}\right)
	\leq q_{t+1}^{*}(\log(\phi_{t+1}^*/\phi_{t+1})).
\end{align}
Hence using the definition of the distributions $(\xi_t^*)_{t=0}^T$, the first term in Equation \eqref{eqn:proof_decomp1} satisfies
\begin{align}\label{eqn:decomp1_term1}
	\xi_t^{*}(\log(\phi_t^*/\tilde{\phi}_t)) \leq \xi_t^{*}(q_{t+1}^{*}(\log(\phi_{t+1}^*/\phi_{t+1})))
	= \xi_{t+1}^{*}(\log(\phi_{t+1}^*/\phi_{t+1})) = \varepsilon_{t+1}^*.
\end{align}
For the second term in Equation \eqref{eqn:proof_decomp1}, as before we apply
Assumption \ref{ass:marginals}$(i)$ and Jensen's inequality to obtain
\begin{align}\label{eqn:proof_consider1}
	\xi_t^{*}(\log(\tilde{\phi}_t/\phi_t)) \leq C_t \nu_t^{\psi}(\log(\tilde{\phi}_t/\phi_t))
	\leq C_t \| \log \phi_t - \log \tilde{\phi}_t \|_{L^2(\nu_t^{\psi})}.
\end{align}
Since $\phi_t = P_t^{\psi}\tilde{\phi}_t$ and $\tilde{\phi}_t\in\mathbb{G}_t^{\psi}$,
applying Assumption \ref{ass:function_classes}$(ii)$ gives
\begin{align}\label{eqn:decomp1_term2}
	\xi_t^{*}(\log(\tilde{\phi}_t/\phi_t)) \leq C_t e_t^{\psi}.
\end{align}
Combining \eqref{eqn:proof_decomp1}, \eqref{eqn:decomp1_term1} and \eqref{eqn:decomp1_term2}
gives the recursion
\begin{align}
	\varepsilon_t^* \leq \varepsilon_{t+1}^* + C_t e_t^{\psi},\quad t=0,\ldots,T-1.
\end{align}

We now consider the sequence $(\varepsilon_t^{\psi\cdot\phi})_{t=0}^T$.
Using Assumptions \ref{ass:function_classes}$(ii)$ \& \ref{ass:marginals} and the same arguments
in Equation \eqref{eqn:proof_terminaltime}, we obtain $\varepsilon_T^{\psi\cdot\phi}\leq C_T M_Te_T^{\psi}$.
As before, we consider the decomposition
\begin{align}\label{eqn:proof_decomp2}
	\varepsilon_t^{\psi\cdot\phi} = \xi_t^{\psi\cdot\phi}(\log(\phi_t/\tilde{\phi}_t))
						   + \xi_t^{\psi\cdot\phi}(\log(\tilde{\phi}_t/\phi_t^*)),
\end{align}
for time $t=0,\ldots,T-1$.
Again using Assumptions \ref{ass:function_classes}$(ii)$ \& \ref{ass:marginals}
and the same arguments in Equation \eqref{eqn:proof_consider1}, the first term of \eqref{eqn:proof_decomp2} satisfies
\begin{align}\label{eqn:proof_decomp2_term1}
	\xi_t^{\psi\cdot\phi}(\log(\phi_t/\tilde{\phi}_t)) \leq C_tM_t e_t^{\psi}.
\end{align}
By the log-sum inequality, we have
\begin{align}
	\log(\tilde{\phi}_t/\phi_t^*) = \log\left(\frac{q_{t+1}^{\psi}(\phi_{t+1})}{q_{t+1}^{\psi}(\phi_{t+1}^*)}\right)
	= \log\left(\frac{q_{t+1}(\psi_{t+1}\phi_{t+1})}{q_{t+1}(\psi_{t+1}\phi_{t+1}^*)}\right)
	\leq q_{t+1}^{\psi\cdot\phi}(\log(\phi_{t+1}/\phi_{t+1}^*)),
\end{align}
therefore the second term in Equation \eqref{eqn:proof_decomp2} satisfies
\begin{align}
	\xi_t^{\psi\cdot\phi}(\log(\tilde{\phi}_t/\phi_t^*)) \leq \xi_t^{\psi\cdot\phi}(q_{t+1}^{\psi\cdot\phi}(\log(\phi_{t+1}/\phi_{t+1}^*))). 	
\end{align}
Using Assumption \ref{ass:marginals}$(ii)$ and the definition of the distributions $(\xi_t^{\psi\cdot\phi})_{t=0}^T$, we have
\begin{align}\label{eqn:proof_decomp2_term2}
	\xi_t^{\psi\cdot\phi}(\log(\tilde{\phi}_t/\phi_t^*)) \leq M_t \mu_t^*(q_{t+1}^{\psi\cdot\phi}(\log(\phi_{t+1}/\phi_{t+1}^*))) = M_t \xi_{t+1}^{\psi\cdot\phi}(\log(\phi_{t+1}/\phi_{t+1}^*)) = M_t\varepsilon_{t+1}^{\psi\cdot\phi}.
\end{align}
Combining \eqref{eqn:proof_decomp2}, \eqref{eqn:proof_decomp2_term1} and \eqref{eqn:proof_decomp2_term2} gives
\begin{align}
	\varepsilon_{t}^{\psi\cdot\phi} \leq M_t\varepsilon_{t+1}^{\psi\cdot\phi} + C_tM_t e_t^{\psi} ,\quad t=0,\ldots,T-1.
\end{align}
\end{proof}

\section{Conditional implementation of controlled SMC}\label{appendix:cond_smc}

Our adaptive SMC$^2$ algorithm in Section \ref{subsec:smc2} relies on a conditional implementation of controlled SMC (Algorithm \ref{alg:cSMC})
as detailed below.
\begin{algorithm}[h]
\protect\caption{\footnotesize{Conditional sequential Monte Carlo at inverse temperature $\lambda\in[0,1]$} \label{alg:condSMC}}
{\footnotesize
\textbf{Input}: number of particles $N$, policy $\psi=(\psi_t)_{t=0}^T$ and reference trajectory
$s_{0:T}^{(k)}=(s_{t}^{(l_{t})})_{t=0}^{T}$.

(1) For time $t=0$ and particle $n=1,\ldots,N$.
\ignorespacesafterend
\begin{description}[itemsep=0pt,parsep=0pt,topsep=0pt,labelindent=0.5cm]
	\item (1a) If $n\neq l_0$, sample state $s_{0}^{(n)}\sim q_0^{\psi}$.

	\item (1b) Compute normalized weights $W_0^{(n)}=w_0^{\psi}(s_{0}^{(n)})/\sum_{m=1}^Nw_0^{\psi}(s_{0}^{(m)})$.
\end{description}

(2) For time $t=1,\ldots,T$ and particle $n=1,\ldots,N$.

\begin{description}[itemsep=0pt,parsep=0pt,topsep=0pt,labelindent=0.5cm]
	\item (2a) If $n=l_t$, set ancestor $a_{t-1}^{(n)}=l_{t-1}$, else
	sample ancestor $a_{t-1}^{(n)}\sim r(\cdot|W_{t-1}^{(1)},\ldots,W_{t-1}^{(N)})$.

	\item (2b) If $n\neq l_t$, sample state $s_t^{(n)} \sim q_t^{\psi}(\cdot|s_{t-1}^{(a_{t-1}^{(n)})})$.

	\item (2c) Compute normalized weights $W_t^{(n)}=w_t^{\psi}(s_{t-1}^{(a_{t-1}^{(n)})},s_{t}^{(n)})/
	\sum_{m=1}^Nw_t^{\psi}(s_{t-1}^{(a_{t-1}^{(m)})},s_{t}^{(m)})$.
\end{description}

(3) Compute likelihood estimator $\hat{p}(y_{1:T}|\theta,\lambda)=\{\frac{1}{N}\sum_{n=1}^Nw_0^{\psi}(s_0^{(n)})\}
	\{\prod_{t=1}^T\frac{1}{N}\sum_{n=1}^Nw_t^{\psi}(s_{t-1}^{(a_{t-1}^{(n)})}, s_t^{(n)})\}$.

(4) Sample an ancestor $b_T\sim r(\cdot|W_{T}^{(1)},\ldots,W_{T}^{(N)})$ and set
ancestral lineage as $b_t = a_t^{(b_{t+1})}$ for $t=T-1,\ldots,0$.

\textbf{Output}: likelihood estimator $\hat{p}(y_{1:T}|\theta,\lambda)$ and trajectory $(s_t^{(b_t)})_{t=0}^T$.
}
\end{algorithm}

\section{Proofs for sequential Monte Carlo$^2$}\label{sec:proofsmc2}
This section provides a conceptual framework to understand our SMC$^2$ in Algorithm \ref{alg:SMC-sq} and
the necessary arguments to establish the consistency results of Section \ref{subsec:consistency}.

At inverse temperature $\lambda\in[0,1]$, the target distribution defined on $\mathbb{X}=\Theta\times\mathbb{S}^{T+1}$ is
\begin{align}\label{eqn:target_distribution}
	\pi(dx|\lambda) = \pi(d\theta, ds_{0:T} | \lambda) = p(d\theta, ds_{0:T} | y_{1:T},\lambda).
\end{align}
To facilitate our analysis, we define the following extended target distribution on
$\tilde{\mathbb{X}}=\Theta\times\mathbb{F}_{0:T}\times\mathbb{S}^{(T+1)N}\times\{1,\ldots,N\}^{TN}\times\{1,\ldots,N\}$
\begin{align}\label{eq:extended_target_pmmh}
\tilde{\pi}(dx|\lambda)&=\tilde{\pi}(d\theta,d\psi,ds_{0:T}^{(1:N)},a_{0:T-1}^{(1:N)},k|\lambda)\\&=\frac{p(d\theta)p_{\textrm{ADP}}(d\psi|\theta,\lambda)p_{\textrm{SMC}}(ds_{0:T}^{(1:N)},a_{0:T-1}^{(1:N)}|\theta,\lambda,\psi)W_{T}^{(k)}\hat{p}(y_{1:T}|\theta,\lambda)}{p(y_{1:T}|\lambda)}.\notag
\end{align}
In the above, $p_{\textrm{ADP}}(d\psi|\theta,\lambda)$ denotes a distribution on $\mathbb{F}_{0:T}=\mathbb{F}_{0}\times\cdots\times\mathbb{F}_{T}$, defined by how a policy $\psi$ for parameter $\theta$ at inverse temperature $\lambda$ is constructed using ADP. This generic description allows us to accommodate various policy learning strategies. The law of states $s_{0:T}^{(1:N)}=(s_{t}^{(n)})_{t=0,n=1}^{T,N}$ and ancestors $a_{0:T-1}^{(1:N)}=(a_{t}^{(n)})_{t=0,n=1}^{T-1,N}$ under controlled SMC in Algorithm \ref{alg:cSMC}, with policy $\psi$ at parameter $\theta$ and inverse temperature $\lambda$, is given by
\begin{align}
&p_{\textrm{SMC}}(ds_{0:T}^{(1:N)},a_{0:T-1}^{(1:N)}|\theta,\lambda,\psi)\\&=\left\{ \prod_{n=1}^{N}q_{0}^{\psi}(ds_{0}^{(n)}|\theta,\lambda)\right\} \left\{ \prod_{t=1}^{T}r(a_{t-1}^{(1)},\ldots,a_{t-1}^{(N)}|W_{t-1}^{(1)},\ldots,W_{t-1}^{(N)})\prod_{n=1}^Nq_{t}^{\psi}(ds_{t}^{(n)}|s_{t-1}^{(a_{t-1}^{(n)})})\right\} .\notag
\end{align}
Under the multinomial resampling scheme, we have
\begin{align}
r(a_{t}^{(1)},\ldots,a_{t}^{(N)}|W_{t}^{(1)},\ldots,W_{t}^{(N)})=\prod_{n=1}^{N}W_{t}^{(a_{t}^{(n)})}.
\end{align}
The notation $W_{T}^{(k)}$ refers to the normalized weight of particle $k=1,\ldots,N$ at time $T$,
and $\hat{p}(y_{1:T}|\theta,\lambda)$ denotes the controlled SMC likelihood estimator.
Equation \eqref{eq:extended_target_pmmh} can be rewritten as 	
\begin{align}\label{eq:extended_target_pmmh_rewrite}
&\tilde{\pi}(d\theta,d\psi,ds_{0:T}^{(1:N)},a_{0:T-1}^{(1:N)},k|\lambda)\\
&=\frac{1}{N^{T+1}}
\pi(d\theta, ds_{0:T}^{(k)} | \lambda)
p_{\textrm{ADP}}(d\psi|\theta,\lambda)
p_{\textrm{CSMC}}(ds_{0:T}^{-(k)},a_{0:T-1}^{-(k)}|s_{0:T}^{(k)},\theta,\lambda,\psi),\notag
\end{align}	
where the trajectory $s_{0:T}^{(k)}=(s_{t}^{(l_{t})})_{t=0}^{T}$ is formed by tracing the ancestral lineage of $s_{T}^{(k)}$, i.e. $l_{T}=k$ and $l_{t}=a_{t}^{(l_{t+1})}$ for $ t=T-1,\ldots,0$, and
\begin{align}
&p_{\textrm{CSMC}}(ds_{0:T}^{-(k)},a_{0:T-1}^{-(k)}|s_{0:T}^{(k)},\theta,\lambda,\psi)\\
&=\frac{p_{\textrm{SMC}}(ds_{0:T}^{(1:N)},a_{0:T-1}^{(1:N)}|\theta,\lambda,\psi)}{q_{0}^{\psi}(ds_{0}^{(l_{0})}|\theta,\lambda)\prod_{t=1}^{T}r(l_{t-1}|W_{t-1}^{(1)},\ldots,W_{t-1}^{(N)})q_{t}^{\psi}(ds_{t}^{(l_{t})}|s_{t-1}^{(l_{t-1})})}\notag
\end{align}
is the law of the other states $s_{0:T}^{-(k)}$ and ancestors $a_{0:T-1}^{-(k)}$
generated in the conditional implementation of controlled SMC (Algorithm \ref{alg:condSMC} of this Appendix).
Note from Equation \eqref{eq:extended_target_pmmh_rewrite} that the marginal distribution
of $(\theta,s_{0:T}^{(k)})$ under the extended target distribution on $\tilde{\mathbb{X}}$
is exactly the desired target distribution on $\mathbb{X}$ in Equation \eqref{eqn:target_distribution}.

To sample from the extended target distribution in Equation \eqref{eq:extended_target_pmmh},
Steps 3c and 3d of Algorithm \ref{alg:SMC-sq} define the following forward Markov transition kernel
$\tilde{f}_{\textrm{ADP-CSMC}}$ on $\tilde{\mathbb{X}}$
\begin{align}\label{eqn:forward_ADP_CSMC}
&\tilde{f}_{\textrm{ADP-CSMC}}(d\tilde{x}|x,\lambda)
=\tilde{f}_{\textrm{ADP-CSMC}}(d\tilde{\theta},d\tilde{\psi},d\tilde{s}_{0:T}^{(1:N)},\tilde{a}_{0:T-1}^{(1:N)},\tilde{k}|\theta,\psi,s_{0:T}^{(1:N)},a_{0:T-1}^{(1:N)},k,\lambda) \notag\\
&=\underbrace{p_{\textrm{ADP}}(d\tilde{\psi}|\theta,\lambda)}_{\textrm{Step 3c}}
~\underbrace{p_{\textrm{CSMC}}(d\tilde{s}_{0:T}^{-(k)},\tilde{a}_{0:T-1}^{-(k)}|s_{0:T}^{(k)},\theta,\lambda,\tilde{\psi})}_{\textrm{Step 3d}}
~\delta_{(\theta,s_{0:T}^{(k)},k)}(d\tilde{\theta},d\tilde{s}_{0:T}^{(k)},\tilde{k}).	
\end{align}
For the purpose of importance sampling, we associate $\tilde{f}_{\textrm{ADP-CSMC}}$ with a backward Markov transition kernel $\tilde{b}_{\textrm{ADP-CSMC}}$ defined on $\tilde{\mathbb{X}}$ as
\begin{align}\label{eqn:backward_kernel_ADP_CSMC}
&\tilde{b}_{\textrm{ADP-CSMC}}(dx|\tilde{x},\lambda) = \tilde{b}_{\textrm{ADP-CSMC}}(d\theta,d\psi,ds_{0:T}^{(1:N)},a_{0:T-1}^{(1:N)},k|\tilde{\theta},\tilde{\psi},\tilde{s}_{0:T}^{(1:N)},\tilde{a}_{0:T-1}^{(1:N)},\tilde{k},\lambda)\notag\\
&=\delta_{(\tilde{\theta},\tilde{s}_{0:T}^{(k)},\tilde{k})}(d\theta,ds_{0:T}^{(k)},k)p_{\textrm{ADP}}(d\psi|\theta,\lambda)p_{\textrm{CSMC}}(ds_{0:T}^{-(k)},a_{0:T-1}^{-(k)}|s_{0:T}^{(k)},\theta,\lambda,\psi).
\end{align}
To improve the sample diversity of resampled parameters, we consider PMMH Markov transitions in
Step 4 of Algorithm \ref{alg:SMC-sq}. In particular, Steps 4a, 4b and 4c
define the following proposal transition kernel $\tilde{q}_{\textrm{ADP-SMC}}$ on $\tilde{\mathbb{X}}$
\begin{align}\label{eqn:proposal_pmmh_extendedspace}
&\tilde{q}_{\textrm{ADP-SMC}}(d\tilde{x}|x,\lambda)
=\tilde{q}_{\textrm{ADP-SMC}}(d\tilde{\theta},d\tilde{\psi},d\tilde{s}_{0:T}^{(1:N)},\tilde{a}_{0:T-1}^{(1:N)},\tilde{k}|\theta,\psi,s_{0:T}^{(1:N)},a_{0:T-1}^{(1:N)},k,\lambda)\notag\\
&=\underbrace{h(d\tilde{\theta}|\theta,\lambda)}_{\textrm{Step 4a}}
~\underbrace{p_{\textrm{ADP}}(d\tilde{\psi}|\tilde{\theta},\lambda)}_{\textrm{Step 4b}}~\underbrace{p_{\textrm{SMC}}(d\tilde{s}_{0:T}^{(1:N)},\tilde{a}_{0:T-1}^{(1:N)}|\tilde{\theta},\lambda,\tilde{\psi})W_{T}^{(\tilde{k})}}_{\textrm{Step 4c}},
\end{align}
where $h(d\tilde{\theta}|\theta,\lambda)$ is a proposal transition kernel on $\Theta$ at
inverse temperature $\lambda$.
The resulting PMMH Markov transition kernel on $\tilde{\mathbb{X}}$ induced by the accept-reject procedure in Step 4d is
\begin{align}\label{eqn:pmmh_extendedspace}
	\tilde{f}_{\textrm{PMMH}}(d\tilde{x}|x,\lambda) &=
	\tilde{\alpha}(\tilde{x}|x,\lambda)\tilde{q}_{\textrm{ADP-SMC}}(d\tilde{x}|x,\lambda)~ +\\
	&\left(1-\int_{\tilde{\mathbb{X}}}\tilde{\alpha}(z|x,\lambda)\tilde{q}_{\textrm{ADP-SMC}}(dz|x,\lambda)\right)
	\delta_{x}(d\tilde{x}),\notag
\end{align}
where the acceptance probability $\tilde{\alpha}(\tilde{x}|x,\lambda)=\alpha(\tilde{\theta}|\theta,\lambda)$ 
is given in Equation \eqref{eqn:PMMH_acceptprob}. 
We denote the composition of $\tilde{f}_{\textrm{PMMH}}$ over $K\in\mathbb{N}$ iterations as
$\tilde{f}_{\textrm{PMMH}}^K$.
To perform importance sampling, we associate $\tilde{f}_{\textrm{PMMH}}^{K}$ with a backward Markov transition kernel $\tilde{b}_{\textrm{PMMH}}^{K}$ on $\tilde{\mathbb{X}}$ satisfying
\begin{align}\label{eqn:backward_kernel_PMMH}
\tilde{\pi}(d\tilde{x}|\lambda)\tilde{b}_{\textrm{PMMH}}^{K}(dx|\tilde{x},\lambda)
=\tilde{\pi}(dx|\lambda)\tilde{f}_{\textrm{PMMH}}^{K}(d\tilde{x}|x,\lambda).
\end{align}
To examine the effect of applying $\tilde{f}_{\textrm{ADP-CSMC}}$ and $\tilde{f}_{\textrm{PMMH}}$
for the variables $(\theta,s_{0:T}^{(k)})$ on the marginal space $\mathbb{X}$,
we will denote their respective marginal Markov transition kernels as
$f_{\textrm{ADP-CSMC}}(d\tilde{\theta},d\tilde{s}_{0:T}|\theta,s_{0:T},\lambda)$
and $f_{\textrm{PMMH}}(d\tilde{\theta},d\tilde{s}_{0:T}|\theta,s_{0:T},\lambda)$.
Similarly, the $K$-fold composition of $f_{\textrm{PMMH}}$ will be written as $f_{\textrm{PMMH}}^K$.
By composing Steps 3c, 3d and 4 of Algorithm \ref{alg:SMC-sq}, we obtain the following Markov
transition kernel on $\mathbb{X}$
\begin{align}\label{eqn:marginal_composed_kernel}
	&m(d\tilde{x}|x,\lambda) = m(d\tilde{\theta},d\tilde{s}_{0:T}|\theta,s_{0:T},\lambda) \\
	&= \int_{\mathbb{X}} f_{\textrm{ADP-CSMC}}(d\bar{\theta},d\bar{s}_{0:T}|\theta,s_{0:T},\lambda)
	f_{\textrm{PMMH}}^K(d\tilde{\theta},d\tilde{s}_{0:T}|\bar{\theta},\bar{s}_{0:T},\lambda),\notag
\end{align}
which was introduced in Section \ref{subsec:consistency}.
The following establishes some properties of the Markov kernels we have introduced.

\begin{lemma}\label{lem:properties_kernels}
For any inverse temperature $\lambda\in[0,1]$, the Markov transition kernels $\tilde{f}_{\textrm{ADP-CSMC}}$, $\tilde{f}_{\textrm{PMMH}}$,
$f_{\textrm{ADP-CSMC}}$, $f_{\textrm{PMMH}}$ and $m$ satisfy:
	\begin{enumerate}[label=(\roman*)]
		\item $\tilde{f}_{\textrm{ADP-CSMC}}$ and $\tilde{f}_{\textrm{PMMH}}$ are invariant with respect
		to the extended target distribution $\tilde{\pi}$ on $\tilde{\mathbb{X}}$ in Equation \eqref{eq:extended_target_pmmh};
		
		\item $f_{\textrm{ADP-CSMC}}$ and $f_{\textrm{PMMH}}$ are invariant with respect
		to the target distribution $\pi$ on $\mathbb{X}$ in Equation \eqref{eqn:target_distribution};
		
		\item $m$ is invariant with respect
		to the target distribution $\pi$ on $\mathbb{X}$ in Equation \eqref{eqn:target_distribution}.
	\end{enumerate}	
\end{lemma}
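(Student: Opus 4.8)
The plan is to prove the three statements in the order given, noting that part (i), invariance on the extended space $\tilde{\mathbb{X}}$, carries essentially all of the content, after which (ii) and (iii) follow by marginalization and composition. Write $\Pi:\tilde{\mathbb{X}}\to\mathbb{X}$ for the projection $\Pi(x)=(\theta,s_{0:T}^{(k)})$ onto the selected parameter and trajectory. I recall from the rewritten form \eqref{eq:extended_target_pmmh_rewrite} that $\pi(\cdot\,|\lambda)=\Pi_\star\tilde{\pi}(\cdot\,|\lambda)$, that is, $\pi$ is the pushforward of $\tilde{\pi}$ under $\Pi$, and that the conditional law under $\tilde{\pi}$ of the auxiliary coordinates $(\psi,s_{0:T}^{-(k)},a_{0:T-1}^{-(k)})$ given $(\theta,s_{0:T}^{(k)},k)$ is exactly $p_{\textrm{ADP}}(d\psi|\theta,\lambda)\,p_{\textrm{CSMC}}(ds_{0:T}^{-(k)},a_{0:T-1}^{-(k)}|s_{0:T}^{(k)},\theta,\lambda,\psi)$. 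These two facts are the backbone of the argument.

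For part (i), I would handle $\tilde{f}_{\textrm{ADP-CSMC}}$ through the backward kernel $\tilde{b}_{\textrm{ADP-CSMC}}$ of \eqref{eqn:backward_kernel_ADP_CSMC}, establishing the reversibility-type identity $\tilde{\pi}(dx|\lambda)\tilde{f}_{\textrm{ADP-CSMC}}(d\tilde{x}|x,\lambda)=\tilde{\pi}(d\tilde{x}|\lambda)\tilde{b}_{\textrm{ADP-CSMC}}(dx|\tilde{x},\lambda)$. Substituting the factorized expression \eqref{eq:extended_target_pmmh_rewrite} for $\tilde{\pi}$ on each side and using that the Dirac masses force $\tilde{\theta}=\theta$, $\tilde{k}=k$ and $\tilde{s}_{0:T}^{(k)}=s_{0:T}^{(k)}$, the two sides match term by term, since each carries the same factors $\pi(d\theta,ds_{0:T}^{(k)}|\lambda)$, the policy density $p_{\textrm{ADP}}$ at both old and new policies, and $p_{\textrm{CSMC}}$ at both old and new off-lineage coordinates; integrating over $x$ and using that $\tilde{b}_{\textrm{ADP-CSMC}}$ is a probability kernel then gives invariance of $\tilde{\pi}$. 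For $\tilde{f}_{\textrm{PMMH}}$, I would recognize \eqref{eqn:pmmh_extendedspace} as an ordinary Metropolis--Hastings kernel on $\tilde{\mathbb{X}}$ with target $\tilde{\pi}$ and proposal $\tilde{q}_{\textrm{ADP-SMC}}$ of \eqref{eqn:proposal_pmmh_extendedspace}. The single computation required is the Hastings ratio: writing $\tilde{\pi}$ in the product form \eqref{eq:extended_target_pmmh}, the factors $p_{\textrm{ADP}}$, $p_{\textrm{SMC}}$, the final-time weights $W_T^{(k)}$ and the normaliser $p(y_{1:T}|\lambda)$ all cancel between the numerator $\tilde{\pi}(d\tilde{x}|\lambda)\tilde{q}_{\textrm{ADP-SMC}}(dx|\tilde{x},\lambda)$ and denominator $\tilde{\pi}(dx|\lambda)\tilde{q}_{\textrm{ADP-SMC}}(d\tilde{x}|x,\lambda)$, leaving precisely the ratio inside $\alpha$ in \eqref{eqn:PMMH_acceptprob}. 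Hence $\tilde{\alpha}$ is the correct acceptance probability, $\tilde{f}_{\textrm{PMMH}}$ is $\tilde{\pi}$-reversible, and the $K$-fold composition $\tilde{f}_{\textrm{PMMH}}^K$ preserves $\tilde{\pi}$; equivalently this shows the backward kernel $\tilde{b}_{\textrm{PMMH}}^K$ of \eqref{eqn:backward_kernel_PMMH} is a genuine Markov kernel.

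Part (ii) I would obtain from one marginalization lemma: if $\tilde{f}$ leaves $\tilde{\pi}$ invariant, if $\pi=\Pi_\star\tilde{\pi}$, and if the $\Pi$-pushforward of $\tilde{f}(\cdot|x,\lambda)$ coincides with a kernel $f(\cdot|\Pi(x),\lambda)$ on $\mathbb{X}$ for $\tilde{\pi}$-almost every $x$, then $\pi$ is $f$-invariant. The proof is the one-line chain $\pi(f\varphi)=\tilde{\pi}((f\varphi)\circ\Pi)=\tilde{\pi}(\tilde{f}(\varphi\circ\Pi))=\tilde{\pi}(\varphi\circ\Pi)=\pi(\varphi)$, valid for every bounded measurable $\varphi:\mathbb{X}\to\mathbb{R}$, where the middle equality is the marginal-consistency hypothesis and the third is invariance of $\tilde{\pi}$. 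The point that needs care is precisely this marginal-consistency, and it is here that the structure \eqref{eq:extended_target_pmmh_rewrite} is essential: $\tilde{f}_{\textrm{ADP-CSMC}}$ redraws the auxiliary coordinates from exactly the conditional $\tilde{\pi}(\cdot|\theta,s_{0:T}^{(k)},k)$, so once it has acted the auxiliary coordinates --- and in particular the random likelihood estimator $\hat{p}(y_{1:T}|\theta,\lambda)$ that enters the PMMH acceptance --- sit at their stationary conditional given $\Pi$, which is what makes $f_{\textrm{ADP-CSMC}}$ and $f_{\textrm{PMMH}}$ legitimate Markov kernels on $\mathbb{X}$. Combining the lemma with the extended-space invariances from part (i) delivers (ii).

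Part (iii) is then immediate: $m$ in \eqref{eqn:marginal_composed_kernel} is $f_{\textrm{ADP-CSMC}}$ followed by $f_{\textrm{PMMH}}^K$, and a composition of $\pi$-invariant Markov kernels is $\pi$-invariant. I expect the main obstacle to be not a single estimate but the bookkeeping behind the marginal-consistency step of (ii): one must verify that at the moment each marginal kernel acts the auxiliary variables (the policy, the off-lineage particles and ancestors, and hence the random likelihood estimator) are distributed as the appropriate conditional of $\tilde{\pi}$, so that $f_{\textrm{ADP-CSMC}}$ and $f_{\textrm{PMMH}}$ genuinely descend to kernels on $\mathbb{X}$ rather than living on the full extended space. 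The remaining steps --- checking the cancellations in the PMMH Hastings ratio and confirming that the Dirac masses align in the detailed-balance identity --- are routine but must be done with care over the conditioning and the lineage indices.
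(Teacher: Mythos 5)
Your proposal is correct and follows essentially the same route as the paper: invariance of $\tilde{f}_{\textrm{ADP-CSMC}}$ via its identification with a conditional (Gibbs-type) update of the auxiliary coordinates under $\tilde{\pi}$, invariance of $\tilde{f}_{\textrm{PMMH}}$ by recognizing it as a Metropolis--Hastings kernel on $\tilde{\mathbb{X}}$ whose Hastings ratio reduces, after cancellation of $p_{\textrm{ADP}}$, $p_{\textrm{SMC}}$, $W_T^{(k)}$ and the normalizer, to the acceptance probability in Equation \eqref{eqn:PMMH_acceptprob}, and then parts (ii) and (iii) by marginalization and composition. Your explicit marginal-consistency lemma for part (ii) is slightly more careful than the paper's one-line appeal to the marginal structure of Equation \eqref{eq:extended_target_pmmh_rewrite}, but the substance is identical.
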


\begin{proof}
Using Equation \eqref{eq:extended_target_pmmh_rewrite}, by rewriting
$\tilde{f}_{\textrm{ADP-CSMC}}$ in Equation \eqref{eqn:forward_ADP_CSMC} as
\begin{align}
\tilde{f}_{\textrm{ADP-CSMC}}(d\tilde{\theta},d\tilde{\psi},d\tilde{s}_{0:T}^{(1:N)},\tilde{a}_{0:T-1}^{(1:N)},\tilde{k}|\theta,\psi,s_{0:T}^{(1:N)},a_{0:T-1}^{(1:N)},k,\lambda)\notag\\
=\tilde{\pi}(d\tilde{\psi},d\tilde{s}_{0:T}^{-(k)},\tilde{a}_{0:T-1}^{-(k)}|\theta,s_{0:T}^{(k)},k,\lambda)
~\delta_{(\theta,s_{0:T}^{(k)},k)}(d\tilde{\theta},d\tilde{s}_{0:T}^{(k)},\tilde{k}),	
\end{align}
it follows that Steps 3c and 3d of Algorithm \ref{alg:SMC-sq} can be seen as
sampling the variables $(\psi,s_{0:T}^{-(k)},a_{0:T-1}^{-(k)})$ from the conditional distribution of the extended target distribution.
Hence $\tilde{f}_{\textrm{ADP-CSMC}}$ is a Gibbs update which leaves the extended target distribution $\tilde{\pi}$ invariant.

Next, we show that $\tilde{f}_{\textrm{PMMH}}$ in Equation \eqref{eqn:pmmh_extendedspace} can be understood
as a standard Metropolis--Hastings transition kernel on the extended space $\tilde{\mathbb{X}}$,
with proposal transition kernel $\tilde{q}_{\textrm{ADP-SMC}}$ in Equation \eqref{eqn:proposal_pmmh_extendedspace} and
the extended target distribution $\tilde{\pi}$ as invariant distribution.
The Metropolis--Hastings acceptance probability of a transition from $x=(\theta,\psi,s_{0:T}^{(1:N)},a_{0:T-1}^{(1:N)},k)\in\tilde{\mathbb{X}}$ to
$\tilde{x}=(\tilde{\theta},\tilde{\psi},\tilde{s}_{0:T}^{(1:N)},\tilde{a}_{0:T-1}^{(1:N)},\tilde{k})\in\tilde{\mathbb{X}}$ is given by $\min\{1,R_{\textrm{MH}}(x,\tilde{x}|\lambda)\}$, where
$R_{\textrm{MH}}(x,\tilde{x}|\lambda)$ the Radon--Nikodym derivative between the measures
$\tilde{\pi}(d\tilde{x}|\lambda)\tilde{q}_{\textrm{ADP-SMC}}(dx|\tilde{x},\lambda)$
and $\tilde{\pi}(dx|\lambda)\tilde{q}_{\textrm{ADP-SMC}}(d\tilde{x}|x,\lambda)$ on
$\tilde{\mathbb{X}}\times\tilde{\mathbb{X}}$. Using the form of the proposal transition kernel
in Equation \eqref{eqn:proposal_pmmh_extendedspace} and
the extended target distribution in Equation \eqref{eq:extended_target_pmmh}, we have
\begin{align}
	R_{\textrm{MH}}(x,\tilde{x}|\lambda) =
	\frac{p(\tilde{\theta})\hat{p}(y_{1:T}|\tilde{\theta},\lambda)h(\theta|\tilde{\theta},\lambda)}
	{p(\theta)\hat{p}(y_{1:T}|\theta,\lambda)h(\tilde{\theta}|\theta,\lambda)},
\end{align}
where $\hat{p}(y_{1:T}|\theta,\lambda)$ and $\hat{p}(y_{1:T}|\tilde{\theta},\lambda)$
are the controlled SMC likelihood estimators at the parameter-policy configurations
$(\theta,\psi)$ and $(\tilde{\theta},\tilde{\psi})$, respectively.
Hence the Metropolis--Hastings acceptance probability $\min\{1,R_{\textrm{MH}}(x,\tilde{x}|\lambda)\}$
coincides with the acceptance probability $\tilde{\alpha}(\tilde{x}|x,\lambda)=\alpha(\tilde{\theta}|\theta,\lambda)$ in Equation \eqref{eqn:PMMH_acceptprob}.
Although $\tilde{f}_{\textrm{PMMH}}$ operates on the extended space, only the variables
$(\theta, s_{0:T}^{(k)}, \hat{p}(y_{1:T}|\theta,\lambda))$ have to be stored,
as described in Step 4d of Algorithm \ref{alg:SMC-sq}. This completes part $(i)$ of the proof.

The claim in part $(ii)$ follows from part $(i)$ and the fact that the extended target distribution $\tilde{\pi}$ on
$\tilde{\mathbb{X}}$ in Equation \eqref{eq:extended_target_pmmh_rewrite} admits the target distribution $\pi$ on
$\mathbb{X}$ in Equation \eqref{eqn:target_distribution} as marginal distribution.

Lastly, part $(iii)$ is an immediate consequence of part $(ii)$ as the
Markov transition kernel $m$ in Equation \eqref{eqn:marginal_composed_kernel} is a composition
of $f_{\textrm{ADP-CSMC}}$ and $f_{\textrm{PMMH}}^K$ that leave the target distribution $\pi$ on $\mathbb{X}$
invariant.
\end{proof}

With the above preliminaries in place, the following key results cast our SMC$^2$ algorithm
within the SMC sampler frameworks of \citet{del2006sequential} and \citet{beskos2016convergence}.

\begin{proposition}\label{prop:SMC_samplers}
The SMC$^2$ in Algorithm \ref{alg:SMC-sq} without adaptation in Steps 2 and 4 is an SMC sampler of \citet{del2006sequential}, operating on the extended space $\tilde{\mathbb{X}}$ with
initial distribution $\tilde{\pi}(dx|\lambda_0)$ and
the following properties at iteration $i=1,\ldots,I$:
\begin{enumerate}[label=(\roman*)]
	\item	the target distribution $\tilde{\pi}(dx|\lambda_i)$ defined in
	Equation \eqref{eq:extended_target_pmmh};
	
	\item the forward transition kernels $\tilde{f}_{\textrm{ADP-CSMC}}(d\tilde{x}|x,\lambda_i)$
	followed by $\tilde{f}_{\textrm{PMMH}}^K(d\tilde{x}|x,\lambda_i)$ defined in Equations
	\eqref{eqn:forward_ADP_CSMC} and \eqref{eqn:pmmh_extendedspace};
	
	\item the backward transition kernels $\tilde{b}_{\textrm{PMMH}}^{K}(dx|\tilde{x},\lambda_i)$ followed by
	$\tilde{b}_{\textrm{ADP-CSMC}}(dx|\tilde{x},\lambda_{i-1})$ defined in Equations
	\eqref{eqn:backward_kernel_PMMH} and \eqref{eqn:backward_kernel_ADP_CSMC}.
	
\end{enumerate}	
\end{proposition}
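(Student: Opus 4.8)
The plan is to verify the three defining ingredients of the SMC sampler of \citet{del2006sequential} on the extended space $\tilde{\mathbb{X}}$ --- the sequence of (unnormalized) targets, the forward mutation kernels, and their associated backward kernels --- against the operations of Algorithm \ref{alg:SMC-sq} run without adaptation, and then to check that the importance weights dictated by this matching reduce exactly to the weights $\omega_i^{(p)}$ of Step 3a and Equation \eqref{eqn:adapt_weights}. Throughout I write $\tilde{\pi}_i=\tilde{\pi}(\cdot|\lambda_i)$ for the extended target in Equation \eqref{eq:extended_target_pmmh_rewrite}.

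First I would identify the ingredients. The initialization in Step 1 draws $(\theta,s_{0:T})$ from the prior-times-latent law $\pi(\cdot|\lambda_0)$, which by Equation \eqref{eq:extended_target_pmmh_rewrite} is precisely the $(\theta,s_{0:T}^{(k)})$-marginal of $\tilde{\pi}_0$; since every later operation either depends on this marginal alone (the weights) or regenerates the auxiliary variables $(\psi,s_{0:T}^{-(k)},a_{0:T-1}^{-(k)},k)$ (the mutation), instantiating only the marginal is equivalent to initializing at $\tilde{\pi}_0$. The target at iteration $i$ is $\tilde{\pi}_i$; the forward kernel is $\tilde{f}_{\textrm{ADP-CSMC}}(\cdot|\cdot,\lambda_i)$ of Equation \eqref{eqn:forward_ADP_CSMC} (Steps 3c--3d) followed by $\tilde{f}_{\textrm{PMMH}}^K(\cdot|\cdot,\lambda_i)$ of Equation \eqref{eqn:pmmh_extendedspace} (Step 4); and the backward kernel is the reversed composition $\tilde{b}_{\textrm{PMMH}}^K(\cdot|\cdot,\lambda_i)$ followed by $\tilde{b}_{\textrm{ADP-CSMC}}(\cdot|\cdot,\lambda_{i-1})$, exactly as stated. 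Invariance of both forward kernels with respect to $\tilde{\pi}_i$ is already supplied by Lemma \ref{lem:properties_kernels}.

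The crux is the weight computation, carried out on the full path $(x,\bar{x},\tilde{x})$ that includes the intermediate state $\bar{x}$ produced after the ADP-CSMC move and before the PMMH move. The weight is the Radon--Nikodym derivative of $\tilde{\pi}_i(d\tilde{x})\,\tilde{b}_{\textrm{PMMH}}^K(d\bar{x}|\tilde{x},\lambda_i)\,\tilde{b}_{\textrm{ADP-CSMC}}(dx|\bar{x},\lambda_{i-1})$ with respect to $\tilde{\pi}_{i-1}(dx)\,\tilde{f}_{\textrm{ADP-CSMC}}(d\bar{x}|x,\lambda_i)\,\tilde{f}_{\textrm{PMMH}}^K(d\tilde{x}|\bar{x},\lambda_i)$. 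Applying the reversal identity \eqref{eqn:backward_kernel_PMMH} at $\lambda_i$ cancels both PMMH factors and eliminates the dependence on $\tilde{x}$, leaving
\[
\tilde{w}_i=\frac{\tilde{\pi}_i(\bar{x})\,\tilde{b}_{\textrm{ADP-CSMC}}(x|\bar{x},\lambda_{i-1})}{\tilde{\pi}_{i-1}(x)\,\tilde{f}_{\textrm{ADP-CSMC}}(\bar{x}|x,\lambda_i)}.
\]
I would then substitute the explicit forms from Equations \eqref{eq:extended_target_pmmh_rewrite}, \eqref{eqn:forward_ADP_CSMC} and \eqref{eqn:backward_kernel_ADP_CSMC}. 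The forward move at $\lambda_i$ freshly samples the auxiliaries with density $p_{\textrm{ADP}}(\cdot|\theta,\lambda_i)\,p_{\textrm{CSMC}}(\cdot|s_{0:T}^{(k)},\theta,\lambda_i,\cdot)$, which is the very factor carried by $\tilde{\pi}_i(\bar{x})$, while the backward move at $\lambda_{i-1}$ regenerates the old auxiliaries with the factor carried by $\tilde{\pi}_{i-1}(x)$; since both moves hold $(\theta,s_{0:T}^{(k)},k)$ fixed through matching Dirac masses, all $p_{\textrm{ADP}}$, $p_{\textrm{CSMC}}$ and $N^{-(T+1)}$ terms telescope. What survives is the marginal target ratio $\pi(\theta,s_{0:T}^{(k)}|\lambda_i)/\pi(\theta,s_{0:T}^{(k)}|\lambda_{i-1})$, which in unnormalized form equals $\prod_{t=1}^T g_\theta(y_t|s_{t-1},s_t)^{\lambda_i-\lambda_{i-1}}=\omega_i$. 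As this incremental weight depends only on the pre-mutation marginal, weighting and resampling (Steps 3a--3b) before applying the forward kernel is equivalent to the mutate-then-reweight scheme of \citet{del2006sequential}, which completes the identification.

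I expect the main obstacle to be the careful bookkeeping of the telescoping cancellation in the presence of the Dirac masses pinning $(\theta,s_{0:T}^{(k)},k)$, and in particular the deliberately \emph{mismatched} temperature of the backward kernel $\tilde{b}_{\textrm{ADP-CSMC}}$: one must check that the forward kernel at $\lambda_i$ and this backward kernel at $\lambda_{i-1}$ regenerate precisely the auxiliary factors appearing in $\tilde{\pi}_i$ and $\tilde{\pi}_{i-1}$ respectively, so that $\tilde{w}_i$ is well-defined as a Radon--Nikodym derivative on the common support and collapses to $\omega_i$ rather than to some residual kernel ratio. A secondary point, dispatched by the initialization remark above, is to justify instantiating only the marginal $(\theta,s_{0:T})$ at $\lambda_0$ in place of the full extended particle.
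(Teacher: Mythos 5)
Your proposal is correct and follows essentially the same route as the paper's proof: it identifies the same SMC-sampler ingredients, computes the incremental weight as the Radon--Nikodym derivative between the same backward and forward path measures, cancels the PMMH factors via the reversal identity \eqref{eqn:backward_kernel_PMMH}, telescopes the ADP/CSMC auxiliary factors against those in the extended targets to leave the marginal ratio $\prod_{t=1}^T g_\theta(y_t|s_{t-1},s_t)^{\lambda_i-\lambda_{i-1}}$, and uses the dependence of this weight on the pre-mutation marginal to justify resampling before mutation. The only point the paper makes that you gloss over is that the exact weight carries the constant factor $p(y_{1:T}|\lambda_{i-1})/p(y_{1:T}|\lambda_i)$, so it matches $\omega_i^{(p)}$ only up to these unknown normalizing constants, which is harmless since the weights are normalized.
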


\begin{proof}
The initialization step of an SMC sampler would generate $P\in\mathbb{N}$ independent samples from
$\tilde{\pi}(dx|\lambda_0)$ at inverse temperature $\lambda_{0}=0$.
This corresponds to Step 1 of Algorithm \ref{alg:SMC-sq}, where only the parameters and trajectories
$(\theta^{(p)},s_{0:T}^{(p)})_{p=1}^{P}$ are sampled from
the marginal target distribution $\pi(d\theta,ds_{0:T}|\lambda_{0})$ on $\mathbb{X}$ as
the other variables are not needed to progress to the next iteration.

At iteration $i=1,\ldots,I-1$, suppose we have samples
\begin{align}\label{eqn:smc2_particles}
	(X^{(p)})_{p=1}^P=(\theta^{(p)},\psi^{(p)},s_{0:T}^{(p,1:N)},a_{0:T-1}^{(p,1:N)},k^{(p)})_{p=1}^{P}
\end{align}
approximating $\tilde{\pi}(dx|\lambda_{i-1})$.
In our algorithmic implementation, we only have to store the parameters and trajectories
$(\theta^{(p)},s_{0:T}^{(p)})_{p=1}^{P}$ at this stage,
where the trajectory $s_{0:T}^{(p)}=(s_{t}^{(p,l_{t}^{(p)})})_{t=0}^{T}$ is obtained by setting $l_{T}^{(p)}=k^{(p)}$
and $l_{t}^{(p)}=a_{t}^{(p,l_{t+1}^{(p)})}$ for $t=T-1,\ldots,0$.
It follows from Equation \eqref{eq:extended_target_pmmh_rewrite} that the samples $(\theta^{(p)},s_{0:T}^{(p)})_{p=1}^{P}$ will approximate the marginal target distribution $\pi(dx|\lambda_{i-1})$ on $\mathbb{X}$.

To approximate the target distribution $\tilde{\pi}(dx|\lambda_{i})$ at the next inverse temperature $\lambda_i$,
Steps 3c, 3d and 4 of Algorithm \ref{alg:SMC-sq} can be understood as moving the samples
$(X^{(p)})_{p=1}^P$ in Equation \eqref{eqn:smc2_particles} according to
the forward transition kernel $\bar{X}^{(p)}\sim\tilde{f}_{\textrm{ADP-CSMC}}(\cdot|X^{(p)},\lambda_i)$,
followed by $\tilde{X}^{(p)}\sim\tilde{f}_{\textrm{PMMH}}^K(\cdot|\bar{X}^{(p)},\lambda_i)$.
Using the backward transition kernel $\tilde{b}_{\textrm{PMMH}}^{K}(dx|\tilde{x},\lambda_i)$
followed by $\tilde{b}_{\textrm{ADP-CSMC}}(dx|\tilde{x},\lambda_{i-1})$,
the SMC sampler would assign an importance weight to each sample $(X^{(p)},\bar{X}^{(p)},\tilde{X}^{(p)})$.
This importance weight $W(x,\bar{x},\tilde{x}|\lambda_{i-1},\lambda_i)$ is given by the Radon--Nikodym derivative between the measures
\begin{align}
	\tilde{\pi}(d\tilde{x}|\lambda_{i})\tilde{b}_{\textrm{PMMH}}^{K}(d\bar{x}|\tilde{x},\lambda_i)
	\tilde{b}_{\textrm{ADP-CSMC}}(dx|\bar{x},\lambda_{i-1})
\end{align}
and
\begin{align}
\tilde{\pi}(dx|\lambda_{i-1})\tilde{f}_{\textrm{ADP-CSMC}}(d\bar{x}|x,\lambda_i)
\tilde{f}_{\textrm{PMMH}}^K(d\tilde{x}|\bar{x},\lambda_i).
\end{align}
on $\tilde{\mathbb{X}}\times\tilde{\mathbb{X}}\times\tilde{\mathbb{X}}$.
We will rewrite this importance weight as
\begin{align}
W(x,\bar{x},\tilde{x}|\lambda_{i-1},\lambda_i) =
R_{\textrm{ADP-CSMC}}(x,\bar{x}|\lambda_{i-1},\lambda_i)R_{\textrm{PMMH}}(\bar{x},\tilde{x}|\lambda_i),	
\end{align}
where $R_{\textrm{ADP-CSMC}}(x,\bar{x}|\lambda_{i-1},\lambda_i)$ is the Radon--Nikodym derivative between
the measures $\tilde{\pi}(d\bar{x}|\lambda_{i})\tilde{b}_{\textrm{ADP-CSMC}}(dx|\bar{x},\lambda_{i-1})$ and
$\tilde{\pi}(d{x}|\lambda_{i-1})\tilde{f}_{\textrm{ADP-CSMC}}(d\bar{x}|x,\lambda_i)$ on 
$\tilde{\mathbb{X}}\times\tilde{\mathbb{X}}$, and $R_{\textrm{PMMH}}(\bar{x},\tilde{x}|\lambda_i)$
is the Radon--Nikodym derivative between
the measures \\$\tilde{\pi}(d\tilde{x}|\lambda_{i})$$\tilde{b}_{\textrm{PMMH}}^{K}(d\bar{x}|\tilde{x},\lambda_i)$ and
$\tilde{\pi}(d\bar{x}|\lambda_{i})\tilde{f}_{\textrm{PMMH}}^K(d\tilde{x}|\bar{x},\lambda_i)$ on
$\tilde{\mathbb{X}}\times\tilde{\mathbb{X}}$.
Using the form of the extended target distribution in Equation \eqref{eq:extended_target_pmmh_rewrite},
we have
\begin{align}\label{eqn:density_adp_csmc}
	R_{\textrm{ADP-CSMC}}(x,\bar{x}|\lambda_{i-1},\lambda_i) =
	\frac{\pi(\theta,s_{0:T}^{(k)}|\lambda_i)}{\pi(\theta,s_{0:T}^{(k)}|\lambda_{i-1})}
	=\frac{p(y_{1:T}|\lambda_{i-1})}{p(y_{1:T}|\lambda_i)}
	\prod_{t=1}^Tg_{\theta}(y_t|s_{t-1}^{(l_{t-1})},s_t^{(l_t)})^{\lambda_i-\lambda_{i-1}}.
\end{align}
From the construction in Equation \eqref{eqn:backward_kernel_PMMH}, we have
$R_{\textrm{PMMH}}(\bar{x},\tilde{x}|\lambda_i)=1$.
Therefore the importance weight $W(x,\bar{x},\tilde{x}|\lambda_{i-1},\lambda_i) =
R_{\textrm{ADP-CSMC}}(x,\bar{x}|\lambda_{i-1},\lambda_i)$ corresponds to
the unnormalized weight in Step 3a of Algorithm \ref{alg:SMC-sq} and
Equation \eqref{eqn:adapt_weights}, up to the unknown normalization
constants $p(y_{1:T}|\lambda_{i-1})$ and $p(y_{1:T}|\lambda_{i})$.
Noting from Equation \eqref{eqn:density_adp_csmc} that the importance weight
only depends on the parameter and trajectory $(\theta,s_{0:T}^{(k)})$
before the forward Markov transitions
$\tilde{f}_{\textrm{ADP-CSMC}}$ and $\tilde{f}_{\textrm{PMMH}}^K$ are applied in Steps 3c, 3d and 4,
this justifies first performing resampling in Step 3b of Algorithm \ref{alg:SMC-sq}
\citep[Remark 1]{del2006sequential}.

Lastly, we clarify the storage requirements in Steps 3c and 3d.
For each particle $p=1,\ldots,P$, after sampling
\begin{align}
	(\bar{\theta}^{(p)},\bar{\psi}^{(p)},\bar{s}_{0:T}^{(p,1:N)},\bar{a}_{0:T-1}^{(p,1:N)},\bar{k}^{(p)})\sim \tilde{f}_{\textrm{ADP-CSMC}}(\cdot|\theta^{(p)},\psi^{(p)},s_{0:T}^{(p,1:N)},a_{0:T-1}^{(p,1:N)},k^{(p)},\lambda_{i}),
\end{align}
we can compute a controlled SMC likelihood estimator $\hat{p}(y_{1:T}|\bar{\theta}^{(p)},\lambda_{i})$
using the new particle system $(\bar{s}_{0:T}^{(p,1:N)},\bar{a}_{0:T-1}^{(p,1:N)})$.
In our algorithmic implementation, we only have to store $(\bar{\theta}^{(p)},\bar{s}_{0:T}^{(p)},\hat{p}(y_{1:T}|\bar{\theta}^{(p)},\lambda_{i}))_{p=1}^{P}$, where the trajectory $\bar{s}_{0:T}^{(p)}=(\bar{s}_{t}^{(p,\bar{l}_{t}^{(p)})})_{t=0}^{T}$ is obtained by setting $\bar{l}_{T}^{(p)}=\bar{k}^{(p)}$ and $\bar{l}_{t}^{(p)}=\bar{a}_{t}^{(p,\bar{l}_{t+1}^{(p)})}$ for $t=T-1,\ldots,0$. As before, Equation \eqref{eq:extended_target_pmmh_rewrite} implies
that the samples $(\bar{\theta}^{(p)},\bar{s}_{0:T}^{(p)})_{p=1}^{P}$ will approximate the marginal target distribution
$\pi(dx|\lambda_{i})$ on $\mathbb{X}$.

\end{proof}

\begin{proof}[Proof of Theorem \ref{thm:smc_sq}]
Following Proposition \ref{prop:SMC_samplers},
we note that our estimator $\hat{\pi}(\varphi)$ of the posterior expectation $\pi(\varphi)$ and our estimator $\hat{p}(y_{1:T})$ of the model evidence $p(y_{1:T})$ coincide with those of an SMC sampler when resampling
is performed at every iteration \citep[Equations (13) and (14)]{del2006sequential}.
Hence the WLLNs and CLTs in Equations \eqref{eqn:WLLN} and \eqref{eqn:CLT} follow from
\citet[Proposition 2]{del2006sequential},
\citet[Propositions 9.4.1 and 9.4.2]{del2004feynman} and \citet[Theorem 1]{chopin2004central}.

\end{proof}

\begin{proposition}\label{prop:adaptive_SMC_samplers}
The adaptive SMC$^2$ in Algorithm \ref{alg:SMC-sq} is an adaptive SMC sampler of \citet{beskos2016convergence} operating on the space $\mathbb{X}$ with
initial distribution $\pi(dx|\lambda_0)$ and
the following properties at iteration $i=1,\ldots,I$:
\begin{enumerate}[label=(\roman*)]
	\item	the target distribution $\pi(dx|\lambda_i)$ defined in
	Equation \eqref{eqn:target_distribution};
	
	\item the potential function $G_{i-1}(x)=\prod_{t=1}^Tg_{\theta}(y_t|s_{t-1},s_t)^{\lambda_i-\lambda_{i-1}}$
	for $x=(\theta,s_{0:T})\in\mathbb{X}$;
	
	\item the Markov transition kernel $m(d\tilde{x}|x,\lambda_i)$ defined in
	Equation \eqref{eqn:marginal_composed_kernel}.
	
\end{enumerate}
\end{proposition}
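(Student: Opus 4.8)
The plan is to reduce the claim to the non-adaptive result already established in Proposition \ref{prop:SMC_samplers} and then account for the two sources of adaptation within the framework of \citet{beskos2016convergence}. First I would fix the inverse temperature schedule $(\lambda_i)_{i=0}^I$ and the proposal tuning and show that, viewed on the marginal space $\mathbb{X}$, Algorithm \ref{alg:SMC-sq} is an ordinary SMC sampler. By Proposition \ref{prop:SMC_samplers}, the non-adaptive algorithm is an SMC sampler on the extended space $\tilde{\mathbb{X}}$, so the work is to marginalize it onto the coordinates $(\theta,s_{0:T}^{(k)})\in\mathbb{X}$. Equation \eqref{eq:extended_target_pmmh_rewrite} shows that the marginal of $\tilde{\pi}(\cdot|\lambda_i)$ is $\pi(\cdot|\lambda_i)$, yielding ingredient (i); Equation \eqref{eqn:density_adp_csmc} shows that the importance weight depends on the extended variables only through $(\theta,s_{0:T}^{(k)})$ and equals $G_{i-1}(x)=\prod_{t=1}^Tg_{\theta}(y_t|s_{t-1},s_t)^{\lambda_i-\lambda_{i-1}}$ up to the normalizing constants $p(y_{1:T}|\lambda_{i-1})$ and $p(y_{1:T}|\lambda_i)$, yielding ingredient (ii); and the composition of $\tilde{f}_{\textrm{ADP-CSMC}}$ with $\tilde{f}_{\textrm{PMMH}}^K$ induces on $\mathbb{X}$ exactly the kernel $m(\cdot|\cdot,\lambda_i)$ of Equation \eqref{eqn:marginal_composed_kernel}, which by Lemma \ref{lem:properties_kernels}(iii) leaves $\pi(\cdot|\lambda_i)$ invariant, yielding ingredient (iii). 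Since both the weight and the marginal kernel depend only on $(\theta,s_{0:T}^{(k)})$, the marginal particle system is self-contained and is a bona fide SMC sampler on $\mathbb{X}$.

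Next I would reinstate the two adaptation mechanisms. The temperature adaptation of Step 2 selects $\lambda_i$ as the root of an ESS equation built from the weights $\omega_i^{(p)}(\lambda)$, while the proposal tuning of Step 4 sets $\xi_i=P^{-1}\sum_{p=1}^P S_i(\theta^{(p)})$. I would package both into the single tuning vector $\zeta_i=(\lambda_{i-1},\lambda_i,\xi_i)$ and observe that the non-negative kernel $Q_{i,\zeta_i}=\omega(\cdot|\lambda_{i-1},\lambda_i)\,m_i$ is precisely the weighted Markov kernel on which the adaptive SMC sampler of \citet{beskos2016convergence} is built, with potential $G_{i-1}$ and Markov kernel $m(\cdot|\cdot,\lambda_i)$ as identified above. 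Because the ESS criterion and the summary statistic are empirical functionals of the current particle approximation of $\pi(\cdot|\lambda_{i-1})$, the adaptively chosen $\zeta_i$ is a plug-in estimator of the idealized $\zeta_i^*=(\lambda_{i-1}^*,\lambda_i^*,\xi_i^*)$, where $(\lambda_i^*)$ is the deterministic schedule defined by the limiting $\chi^2$-divergence criterion and $\xi_i^*=\int_\Theta S_i(\theta)\,p(d\theta|y_{1:T},\lambda_{i-1})$. This is exactly the adaptive structure of \citet{beskos2016convergence} applied to the marginal SMC sampler, which establishes the three claimed ingredients.

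The main obstacle I anticipate is verifying that the ESS-based temperature rule of Step 2 genuinely fits this adaptive framework. I must confirm that the empirical functional $\lambda\mapsto\mathrm{ESS}(\Omega_i^{(1)}(\lambda),\ldots,\Omega_i^{(P)}(\lambda))$ converges, as $P\to\infty$, to the limiting $\chi^2$-based functional whose root defines $\lambda_i^*$, so that the stochastic schedule tracks the deterministic one, and that the resulting map $\zeta_i\mapsto Q_{i,\zeta_i}$ enjoys the continuity and differentiability at $\zeta_i^*$ required by \citet{beskos2016convergence}. The strict monotonicity and continuity of the ESS in $\lambda\in(\lambda_{i-1},1]$, together with the regularity postulated in Assumption \ref{ass:adaptive_smc}, should make this root-tracking argument go through; the delicate bookkeeping is how the adaptively chosen $\lambda_i$ feeds back simultaneously into the weight $G_{i-1}$ and the kernel $m$. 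Once this identification is in place, the structural correspondence is complete, and the convergence results of \citet{beskos2016convergence} can subsequently be invoked in the proof of Theorem \ref{thm:adaptive_smc_sq}.
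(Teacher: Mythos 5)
Your proposal is correct and lands on the same three identifications as the paper, but it reaches them by a more circuitous route. The paper's proof does not pass through Proposition \ref{prop:SMC_samplers} at all: it identifies the adaptive SMC sampler structure directly on the marginal space $\mathbb{X}$, matching Step 1 with initialization from $\pi(dx|\lambda_0)$, Step 2 with the ESS criterion of \citet[Equation (3.1)]{beskos2016convergence} (invoking their Lemma 1 for the strict monotonicity and continuity that justify bisection), Steps 3a--3b with resampling under the potential $G_{i-1}(x)=\omega(\theta,s_{0:T}|\lambda_{i-1},\lambda_i)$ read off from Equation \eqref{eqn:adapt_weights}, and Steps 3c, 3d and 4 with the adaptive kernel $m_i$, whose $\pi(\cdot|\lambda_i)$-invariance is Lemma \ref{lem:properties_kernels}$(iii)$. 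Your detour --- first marginalizing the extended-space sampler of Proposition \ref{prop:SMC_samplers} onto $(\theta,s_{0:T}^{(k)})$ via Equations \eqref{eq:extended_target_pmmh_rewrite} and \eqref{eqn:density_adp_csmc}, then reinstating adaptation --- is valid, and your observation that the weight and the induced kernel depend on the extended state only through $(\theta,s_{0:T}^{(k)})$ is exactly the point that makes the marginal system self-contained; it buys a cleaner explanation of \emph{why} the potential has the stated form, at the cost of extra machinery. One calibration note: the root-tracking and the continuity/differentiability of $\zeta_i\mapsto Q_{i,\zeta_i}$ at $\zeta_i^*$ that you flag as the main obstacle are not part of this proposition, which is purely a structural identification; those verifications belong to the proof of Theorem \ref{thm:adaptive_smc_sq}, where Assumptions A5--A7 of \citet{beskos2016convergence} are checked under Assumption \ref{ass:adaptive_smc}, as you correctly anticipate in your closing sentence.
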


\begin{proof}
The algorithm is initialized in Step 1 of Algorithm \ref{alg:SMC-sq}
by simulating $P\in\mathbb{N}$ independent samples
$(X^{(p)})_{p=1}^P=(\theta^{(p)},s_{0:T}^{(p)})_{p=1}^{P}$ from $\pi(dx|\lambda_0)$ at inverse temperature $\lambda_{0}=0$.

At iteration $i\geq1$, suppose we have samples $(\theta^{(p)},s_{0:T}^{(p)})_{p=1}^{P}$ approximating
$\pi(dx|\lambda_{i-1})$.
Using these samples, the next inverse temperature $\lambda_i\in(\lambda_{i-1},1]$ is determined in
Step 2 of Algorithm \ref{alg:SMC-sq}. The adaptation criterion in Equation \eqref{eqn:adapt_inversetemp}
is exactly the same as the effective sample size criterion employed in
\citet[Equation (3.1)]{beskos2016convergence} with a pre-specified threshold $\kappa_{\mathrm{ESS}}\in(0,1)$.
Moreover, it follows from \citet[Lemma 1]{beskos2016convergence} that the
ESS criterion in Equation \eqref{eqn:adapt_inversetemp} is a strictly decreasing and continuous function
of $\lambda\in(\lambda_{i-1},1]$, which justifies computing $\lambda_i$ using a bisection method.
The adaptive SMC sampler would then perform resampling with weights that are
proportional to $G_{i-1}(X^{(p)})$ for $p=1,\ldots,P$. This corresponds to Steps 3a and 3b of
Algorithm \ref{alg:SMC-sq} since we have the relation $G_{i-1}(x)=w(\theta,s_{0:T}|\lambda_{i-1},\lambda_i)$
between the potential and the importance weight in Equation \eqref{eqn:adapt_weights}.

The adaptive SMC sampler then moves the resampled particles using an adaptive Markov transition kernel
$m_i$ that leaves the next target distribution $\pi(dx|\lambda_i)$ invariant.
In Algorithm \ref{alg:SMC-sq}, $m_i$ is defined by the composition of Steps 3c, 3d and 4.
In terms of the framework considered here, the proposal transition kernel $h_i$ in Step 4,
whose tuning parameters are determined adaptively as described in Section \ref{subsec:consistency},
corresponds to a specific choice of proposal transition kernel
$h(d\tilde{\theta}|\theta,\lambda_i)$ in Equation \eqref{eqn:proposal_pmmh_extendedspace}.
Hence the resulting adaptive Markov transition kernel $m_i$
corresponds to the Markov transition kernel $m(d\tilde{x}|x,\lambda_i)$ in Equation \eqref{eqn:marginal_composed_kernel}. By part $(iii)$ of Lemma \ref{lem:properties_kernels},
$m_i$ admits $\pi(dx|\lambda_i)$ as invariant distribution.

Lastly, the adaptive SMC sampler terminates after $I=\inf\{i\geq 1: \lambda_i\geq 1\}$ iterations,
which is implemented in Step 5 of Algorithm \ref{alg:SMC-sq}.

\end{proof}

\begin{proof}[Proof of Theorem \ref{thm:adaptive_smc_sq}]
Following Proposition \ref{prop:adaptive_SMC_samplers},
we note that our estimator $\hat{\pi}(\varphi)$ of the posterior expectation $\pi(\varphi)$ and our estimator $\hat{p}(y_{1:T})$ of the model evidence $p(y_{1:T})$ coincide with those of the
adaptive SMC sampler \citep[Equations (3.6) and (3.7)]{beskos2016convergence}.
In our context, Assumptions A5 and A6 of \citet{beskos2016convergence} are satisfied
under Assumption \ref{ass:adaptive_smc}($i$)-($iv$), so we may invoke
Theorems 3.1 and 3.2 of \citet{beskos2016convergence} to establish the WLLNs
in Equation \eqref{eqn:WLLN}.
Similarly, Assumptions A5, A6 and A7 of \citet{beskos2016convergence} hold 
under Assumption \ref{ass:adaptive_smc}, hence we can appeal to Theorem 3.4 of \citet{beskos2016convergence}
to obtain the CLTs in Equation \eqref{eqn:CLT}.

\end{proof}

\section{Dynamic stochastic general equilibrium model}\label{appendix:DSGE}

\subsection{State-space model representation}\label{appendix:dsge_ssm}

We detail how one can rewrite the model specification in Equations \eqref{eqn:dsge_xtransition} and \eqref{eqn:dsge_ztransition}
to obtain the state equations in Equations \eqref{eqn:dsge_state_transition} and \eqref{eqn:dsge_initial_state}.
We define $0_{d_{z}\times d_{x}}$ as the zero matrix of size $d_{z}\times d_{x}$ and
write $L(\theta)s=L_1(\theta)x + L_2(\theta)z$, where $L_1(\theta)\in\mathbb{R}^{d_x\times d_x}$ and
$L_2(\theta)\in\mathbb{R}^{d_x\times d_z}$ are sub-matrices of $L(\theta)=(L_1(\theta)~ L_2(\theta))\in\mathbb{R}^{d_x\times d}$.
It follows that Equations \eqref{eqn:dsge_state_transition} and \eqref{eqn:dsge_initial_state} hold if we define
\begin{equation}
	A(\theta) = \left(\begin{array}{cc}
			L_{1}(\theta) & L_{2}(\theta)\rho(\theta)\\
			0_{d_{z}\times d_{x}} & \rho(\theta)
			\end{array}\right)\in\mathbb{R}^{d\times d},\quad
	B(\theta) = \left(\begin{array}{c}
			L_{2}(\theta)\Sigma(\theta)\\
			\Sigma(\theta)
			\end{array}\right)\in\mathbb{R}^{d\times d_{z}},
\end{equation}
and
\begin{equation}
c_{\theta}(s_{t-1},\varepsilon_t) = \left(\begin{array}{c}
c(\theta)+Q_{\theta}(x_{t-1},z_t)\\
0_{d_{z}}
\end{array}\right)\in\mathbb{R}^{d},
\end{equation}
where $z_t$ is given by Equation \eqref{eqn:dsge_ztransition}.

\subsection{Optimal policy for linear Gaussian DSGE model}\label{appendix:dsge_lqg}

In this section, we consider the linear Gaussian state-space model that arises when
log-linearization approximations are employed to solve for the equilibrium conditions.
We show below that the optimal policy $\psi^*=(\psi_t^*)_{t=0}^T$ defined by Equation \eqref{eqn:optimal_psi}
has the form
\begin{equation}\label{eqn:dsge_optimalpolicy}
	\psi_t^*(s_t) = \exp(-Q_0(s_t;\tilde{\beta}_t)),\quad t=0,\ldots,T,
\end{equation}
and derive a backward recursion for its coefficients $(\tilde{\beta}_t)_{t=0}^T=(\tilde{A}_t,\tilde{b}_t,\tilde{c}_t)_{t=0}^T$.
Under the linear state transitions in Equations \eqref{eqn:dsge_state_transition} and \eqref{eqn:dsge_initial_state} (without the $c_{\theta}$ terms),
it follows that the optimal policy can be written as Equation \eqref{eqn:dsge_quadratic_functionclass} with coefficients $\beta_0^*=(A_0^*,b_0^*,c_0^*)$ and $(\beta_t^*)_{t=1}^T=(A_t^*,b_t^*,C_t^*,D_t^*,e_t^*,f_t^*)_{t=1}^T$ given by
\begin{equation}
	A_0^*=B(\theta)^\top\tilde{A}_0B(\theta),\quad b_0^*=B(\theta)^\top \tilde{b}_0,\quad c_0^*=\tilde{c}_0,
\end{equation}
and
\begin{eqnarray}
	A_t^* &=& B(\theta)^\top \tilde{A}_t B(\theta), \quad b_t^* = B(\theta)^\top \tilde{b}_t, \quad C_t^* = 2B(\theta)^\top\tilde{A}_tA(\theta), \\
	D_t^*  &= & A(\theta)^\top \tilde{A}_t A(\theta), \quad e_t^* = A(\theta)^\top\tilde{b}_t, \quad f_t^* = \tilde{c}_t.\notag
\end{eqnarray}
for $t=1,\ldots,T$ (see Appendix \ref{appendix:dsge_dimreduction}).

The weight functions of uncontrolled SMC can be written as
\begin{equation}
\begin{aligned}
	&w_0(s_0;\theta,\lambda) = 1 = \exp\left(-\lambda Q_0(s_0;\bar{\beta}_0)\right),\\
	&w_t(s_{t-1},s_t;\theta,\lambda)=g_{\theta}(y_t|s_t)^{\lambda}=\exp\left(-\lambda Q_0(s_t;\bar{\beta}_t)\right),\quad t=1,\ldots,T,
\end{aligned}
\end{equation}
where the coefficients
$\bar{\beta}_t=(\bar{A}_t,\bar{b}_t,\bar{c}_t)\in\mathbb{R}^{d\times d}_{\mathrm{sym}}\times\mathbb{R}^d\times\mathbb{R}$
are given by $(\bar{A}_0,\bar{b}_0,\bar{c}_0)=(0_{d\times d},0_d,0)$ for $t=0$, and
\begin{equation}
\begin{aligned}
	&\bar{A}_t = \frac{1}{2}E(\theta)^\top F(\theta)^{-1}E(\theta),\quad \bar{b}_t = - E(\theta)^\top F(\theta)^{-1}(y_t-d(\theta)),\quad\\
	&\bar{c}_t = \frac{1}{2}	(y_t-d(\theta))^\top F(\theta)^{-1}(y_t-d(\theta)) + \frac{1}{2}d_y\log(2\pi) + \frac{1}{2}\log\det(F(\theta)),
\end{aligned}
\end{equation}
for $t=1,\ldots,T$. At the terminal time $T$, we have $\tilde{\beta}_T=\lambda\bar{\beta}_T$ since $\psi_T^*=w_T$.
We then proceed inductively for $t=T-1,\ldots,0$. Assume that $\psi_{t+1}^*$ has the form as in Equation \eqref{eqn:dsge_optimalpolicy} with coefficients $\tilde{\beta}_{t+1}$. By completing the square, the conditional expectation
\begin{eqnarray}
	q_{t+1}(\psi_{t+1}^*|s_t,\theta) &= &\exp\left(-s_t^\top\{A(\theta)^\top \tilde{A}_{t+1}A(\theta) - 2J_{t+1}^\top K_{t+1} J_{t+1}\}s_t\right)\notag\\
	&\times&\exp\left(-s_t^\top \{A(\theta)^\top\tilde{b}_{t+1} -2 J_{t+1}^\top K_{t+1}B(\theta)^\top\tilde{b}_{t+1} \} \right)\\
	&\times&\exp \left(-\Big\{\tilde{c}_{t+1} - \frac{1}{2}\tilde{b}_{t+1}^\top B(\theta)K_{t+1}B(\theta)^\top\tilde{b}_{t+1}
	- \frac{1}{2}\log\det(K_{t+1})\Big\}\right),\notag
\end{eqnarray}
where $J_{t+1} = B(\theta)^\top \tilde{A}_{t+1}A(\theta)$ and $K_{t+1} = (I_{d_z} + 2B(\theta)^\top\tilde{A}_{t+1}B(\theta))^{-1}$.
It follows that the next iterate $\psi_t^*=w_tq_{t+1}(\psi_{t+1}^*)$ also has the form as in Equation \eqref{eqn:dsge_optimalpolicy} with coefficients
\begin{eqnarray}
	\tilde{A}_t &= & \lambda\bar{A}_t + A(\theta)^\top \tilde{A}_{t+1}A(\theta) - 2J_{t+1}^\top K_{t+1} J_{t+1},\notag\\
	\tilde{b}_t &= & \lambda\bar{b}_t + A(\theta)^\top\tilde{b}_{t+1} -2 J_{t+1}^\top K_{t+1}B(\theta)^\top\tilde{b}_{t+1},\\
	\tilde{c}_t &= & \lambda\bar{c}_t + \tilde{c}_{t+1} - \frac{1}{2}\tilde{b}_{t+1}^\top B(\theta)K_{t+1}B(\theta)^\top\tilde{b}_{t+1}
	- \frac{1}{2}\log\det(K_{t+1}).\notag
\end{eqnarray}

\subsection{Dimension reduction for policy learning}\label{appendix:dsge_dimreduction}

We derive the form of the mapping $\beta_t=\Lambda_{\theta}(\tilde{\beta}_t)$ in Equation \eqref{eqn:dsge_quadratic_functionclass} by equating coefficients in the relationship $Q_0(\tilde{s}_t;\tilde{\beta}_t)=Q(s_{t-1},\varepsilon_t;\beta_t)$, where $\tilde{\beta}_t=(\tilde{A}_t,\tilde{b}_t,\tilde{c}_t)\in\mathbb{R}_\mathrm{sym}^{d\times d}\times\mathbb{R}^d\times\mathbb{R}$ are the coefficients in the dimension reduced space and $\beta_t=(A_t,b_t,C_t,D_t,e_t,f_t)$ are the desired coefficients. By substituting the linearized state $\tilde{s}_t = A(\theta)s_{t-1} + B(\theta)\varepsilon_t$ and expanding terms, we have
\begin{align}
	&Q_0(\tilde{s}_t;\tilde{\beta}_t) = \tilde{s}_t^\top \tilde{A}_t \tilde{s}_t + \tilde{s}_t^\top \tilde{b}_t + \tilde{c}_t \notag\\
	&= (A(\theta)s_{t-1} + B(\theta)\varepsilon_t)^\top \tilde{A}_t (A(\theta)s_{t-1} + B(\theta)\varepsilon_t)
	+ (A(\theta)s_{t-1} + B(\theta)\varepsilon_t)^\top\tilde{b}_t + \tilde{c}_t\notag\\
	&= \varepsilon_t^\top A_t \varepsilon_t + \varepsilon_t^\top b_t + \varepsilon_t^\top C_t s_{t-1} +
	s_{t-1}^\top D_t s_{t-1} + s_{t-1}^\top e_t + f_t\notag\\
	&= Q(s_{t-1},\varepsilon_t;\beta_t),
\end{align}
where we set
\begin{equation}
\begin{aligned}
	&A_t = B(\theta)^\top \tilde{A}_t B(\theta), \quad b_t = B(\theta)^\top \tilde{b}_t, \quad C_t = 2B(\theta)^\top\tilde{A}_tA(\theta), \\
	 &D_t = A(\theta)^\top \tilde{A}_t A(\theta), \quad e_t = A(\theta)^\top\tilde{b}_t, \quad f_t = \tilde{c}_t.
\end{aligned}
\end{equation}

\subsection{Proposal transitions and weight functions of controlled SMC}\label{appendix:dsge_csmc}

For any matrix $A\in\mathbb{R}_\mathrm{sym}^{d\times d}$, we will write $A\succ 0$ if $A$ is positive definite.
Suppose we have a policy $\psi=(\psi_t)_{t=0}^T$ of the form in Equation \eqref{eqn:dsge_quadratic_functionclass} with coefficients $\beta_0=(A_0,b_0,c_0)$ and $(\beta_t)_{t=1}^T=(A_t,b_t,C_t,D_t,e_t,f_t)_{t=1}^T$. We shall first assume that the constraint $I_{d_z}+2A_t\succ 0$ is satisfied for all $t=0,\ldots,T$, and defer our discussion on how to impose these constraints to Appendix \ref{appendix:dsge_constraints}. At the initial time, we have the new initial distribution
\begin{equation}
	q_0^{\psi}(ds_0|\theta) = \delta_{\Phi_{\theta}^{(0)}(\varepsilon_0)}(ds_0)\mathcal{N}(\varepsilon_0;-K_0b_0,K_0)d\varepsilon_0,
\end{equation}
and the expectation
\begin{equation}
	q_0(\psi_0|\theta) = \det(K_0)^{1/2}\exp\left(\frac{1}{2}b_0^\top K_0b_0 - c_0\right),
\end{equation}
where $K_0=(I_{d_z} + 2A_0)^{-1}$. For time $t=1,\ldots,T$, the new proposal transition is
\begin{equation}
	q_t^{\psi}(ds_t|s_{t-1},\theta) = \delta_{\Phi_{\theta}(s_{t-1},\varepsilon_{t})}(ds_t)\mathcal{N}(\varepsilon_t;-K_t(b_t + C_ts_{t-1}),K_t)d\varepsilon_t,
\end{equation}
and the conditional expectation is 
\begin{equation}\label{eqn:dsge_condexp}
	q_t(\psi_t|s_{t-1},\theta) = \det(K_t)^{1/2}\exp\left(\frac{1}{2}(b_t + C_ts_{t-1})^\top K_t (b_t + C_ts_{t-1}) - \check{Q}(s_{t-1};\check{\beta}_t)\right),
\end{equation}
where $K_t = (I_{d_z} + 2A_t)^{-1}$ and $\check{Q}(z;\check{\beta})=z^{\top} D z + z^{\top}e + f$ is a quadratic function that depends on the coefficients $\check{\beta}=(D,e,f)\in\mathbb{R}_\mathrm{sym}^{d\times d}\times\mathbb{R}^d\times\mathbb{R}$.

\subsection{Policy learning under constraints}\label{appendix:dsge_constraints}

Let $\psi=(\psi_t)_{t=0}^T$ denote a current policy with coefficients $\beta_0=(A_0,b_0,c_0)$ and $(\beta_t)_{t=1}^T=(A_t,b_t,C_t,D_t,e_t,f_t)_{t=1}^T$, and $\phi=(\phi_t)_{t=0}^T$ denote its refinement with coefficients $\tilde{\beta}_0=(\tilde{A}_0,\tilde{b}_0,\tilde{c}_0)$ and $(\tilde{\beta}_t)_{t=1}^T=(\tilde{A}_t,\tilde{b}_t,\tilde{C}_t, \tilde{D}_t,\tilde{e}_t,\tilde{f}_t)_{t=1}^T$. The refined policy $\psi\cdot\phi=(\psi_t\cdot\phi_t)_{t=0}^T$ has to satisfy the constraints
\begin{equation}
	I_{d_z} + 2(A_t + \tilde{A}_t)\succ 0,\quad t=0,\ldots,T,
\end{equation}
to ensure that the covariance matrices appearing in Appendix \ref{appendix:dsge_csmc} are well-defined.
For any $t=0,\ldots,T$ and $\alpha\in[0,1/2)$ (we set $\alpha=0.4$ in our implementation), we decompose
\begin{equation}
	I_{d_z} + 2(A_t + \tilde{A}_t) = R + 2U_t,
\end{equation}
into a remainder term $R=(1-2\alpha)I_{d_z}$ and an update term $U_t=\alpha I_{d_z} + A_t + \tilde{A}_t$ that we will constrain to be positive definite. By an inductive argument, we will assume that $\alpha I_{d_z} + A_t$ is positive definite, which holds when we initialize $A_t$ as a zero matrix.

As the matrix $\tilde{A}_t$, estimated from the policy refinement, could be negative definite, we will introduce a learning rate $\kappa_t\in[0,1]$ in the update
\begin{equation}
	U_t(\kappa_t)=\alpha I_{d_z} + A_t + \kappa_t\tilde{A}_t,
\end{equation}
and seek the largest $\kappa_t$ such that $U_t(\kappa_t)$ is positive definite. First, we compute the matrix square root of $\alpha I_{d_z} + A_t$, i.e. a unique positive definite matrix $M_t\in\mathbb{R}_\mathrm{sym}^{d_z\times d_z}$ satisfying $\alpha I_{d_z} + A_t=M_t M_t$. The determinant of $U_t(\kappa_t)$ can be written as
\begin{equation}
	\det\left(U_t(\kappa_t)\right) = \det(M_t)^2\det\left(I_{d_z} + \kappa_t M_t^{-1}\tilde{A}_t M_t^{-1} \right).
\end{equation}
The matrix $M_t^{-1}\tilde{A}_t M_t^{-1}$ is symmetric as both $M_t$ and $\tilde{A}_t$ are symmetric, therefore its eigenvalues $\Lambda_1,\ldots,\Lambda_{d_z}$ are real. Hence we have
\begin{equation}
	\det\left(U_t(\kappa_t)\right) = \det(M_t)^2\prod_{i=1}^{d_z}(1+\kappa_t\Lambda_i) >0
\end{equation}
if and only if $1+\kappa_t\Lambda_{\min}>0$, where $\Lambda_{\min}=\min_{i=1,\ldots,d_z}\Lambda_i$ denotes the minimum eigenvalue. We set the learning rate as
\begin{equation}
	\kappa_t = \begin{cases}	
	1,\quad&\mbox{if }1+\kappa_t\Lambda_{\min}>0,\\
	\min(1,(\zeta-1)/\Lambda_{\min}),\quad&\mbox{if }1+\kappa_t\Lambda_{\min}\leq 0,
	\end{cases}
\end{equation}
where $\zeta>0$ is a small number (taken as $\zeta=2^{-52}$ in our implementation). After determining the learning rate, we update all coefficients using $\beta_t+\kappa_t\tilde{\beta}_t$.

\section{Consumption-based long-run risk model}\label{LRRsmc2}

\subsection{Model setup}

\noindent {\emph{Preferences.}} We consider an endowment economy with a representative agent who has recursive preferences as in  \citet{Epstein_Zin_1989} and \citet{weil1989equity}. The agent maximizes her lifetime utility, which is given recursively by
\begin{equation}\label{EqEpsteinZin}
\mathsf{V}_{t} = \left[ (1-\delta)\mathsf{C}_t^{\frac{1-\gamma}{\theta_v}} + \delta \left[\mathbb{E}_{t}(\mathsf{V}_{t+1}^{1-\gamma})\right]^{\frac{1}{\theta_v}}\right]^{\frac{\theta_v}{1-\gamma}},
\end{equation}
where $\mathsf{C}_t$ denotes consumption at time $t$, $\delta\in(0,1)$ is the agent's time preference parameter, $\gamma$ is the relative risk aversion parameter, $\psi$ is the elasticity of intertemporal substitution (EIS), $\theta_v = (1-\gamma)/(1-1/\psi)$, and $\mathbb{E}_t$ denotes conditional expectation with respect to information up to time $t$. This class of preferences allows for a separation between risk aversion and the EIS. When $\gamma > 1/\psi$, the agent prefers early resolution of uncertainty; when $\gamma < 1/\psi$, she prefers late resolution of uncertainty; and when $\theta_v = 1$, she has the standard constant relative risk aversion preferences and is neutral to the time of resolution of uncertainty. The agent's utility maximization is subject to the following budget constraint
\begin{equation*}
      \mathsf{W}_{t+1} = (\mathsf{W}_t - \mathsf{C}_t)\mathsf{R}_{w, t+1},
\end{equation*}
where $\mathsf{W}_t$ is the wealth of the agent, and $\mathsf{R}_{w,t}$ is the return on the wealth portfolio.

For any asset $i$ with ex-dividend price $\mathsf{P}_{i, t}$ and dividend $\mathsf{D}_{i,t}$, the standard Euler equation holds, i.e.
\begin{equation}\label{EqEuler}
       \mathbb{E}_{t}\left[\mathsf{M}_{t+1}\mathsf{R}_{i, t+1}\right] = 1,
\end{equation}
where $\mathsf{R}_{i,t+1} = (\mathsf{P}_{i, t+1} + \mathsf{D}_{i, t+1})/\mathsf{P}_{i, t}$, and $\mathsf{M}_t$ is the stochastic discount factor. In particular, for the risk-free asset, we have $\mathsf{R}_{f,t} = 1/\mathbb{E}_{t}[\mathsf{M}_{t+1}]$. For the recursive utility function defined in Equation \eqref{EqEpsteinZin}, the stochastic discount factor is given by
\begin{equation}\label{Eqsdf}
\mathsf{M}_{t+1} =\delta \left( \frac{\mathsf{C}_{t+1}}{\mathsf{C}_{t}}\right) ^{-\frac{1}{\psi }}\left( \frac{\mathsf{V}_{t+1}}{\left[ \mathbb{E}_{t}\left( \mathsf{V}_{t+1}^{1-\gamma
}\right) \right] ^{\frac{1}{1-\gamma }}}\right) ^{\frac{1}{\psi }-\gamma }.
\end{equation}
\citet{Epstein_Zin_1989} showed that the wealth-consumption ratio $\mathsf{W}_t/\mathsf{C}_t$ can be expressed in terms of the value function $\mathsf{V}_t$,
\begin{equation}\label{wcratio}
\frac{\mathsf{W}_{t}}{\mathsf{C}_{t}}=\frac{1}{1-\delta}\left(\frac{\mathsf{V}_{t}}{\mathsf{C}_{t}}\right) ^{1-1/\psi},
\end{equation}
which allows us to reformulate the stochastic discount factor given in Equation \eqref{Eqsdf} using the return on the wealth portfolio as follows
\begin{equation}\label{Eqsdf2}
\mathsf{M}_{t+1}=\delta^{\theta_v} \left( \frac{\mathsf{C}_{t+1}}{\mathsf{C}_{t}}\right)^{-\frac{\theta_v }{\psi }}\mathsf{R}_{w,t+1}^{\theta_v -1}.
\end{equation}
Therefore the Euler equation \eqref{EqEuler} implies that the return on the wealth portfolio $\mathsf{R}_{w,t}$ satisfies
\begin{equation}\label{EqWealthEuler}
       \mathbb{E}_{t}\left[\delta^{\theta_v} \left( \frac{\mathsf{C}_{t+1}}{\mathsf{C}_{t}}\right)^{-\frac{\theta_v }{\psi }}\mathsf{R}_{w,t+1}^{\theta_v}\right] = 1.
\end{equation}
\bigskip

\noindent {\emph{Fundamentals.}} Following \citet{bansal2004risks} and \citet{bansal_kiku_yaron_2012}, we assume that the log-consumption growth $\Delta \mathsf{c}_{t} = \log\left(\mathsf{C}_{t}/\mathsf{C}_{t-1}\right)$ consists of a persistent component $x_t$ and a transitory component,
\begin{align}\label{eqn:lrr_deltac_x}
    \Delta \mathsf{c}_{t} &= \mu + x_{t-1} + \sigma_{t-1}u_{c,t},\\
               x_{t} &= \rho x_{t-1} + \phi_x \sigma_{t-1} \varepsilon_{x, t},\notag
\end{align}
and that dividends are imperfectly correlated with consumption and their log-growth rate $\Delta \mathsf{d}_{t} = \log\left(\mathsf{D}_{t}/\mathsf{D}_{t-1}\right)$ has the following dynamics
\begin{equation}\label{eqn:lrr_dd}
	\Delta \mathsf{d}_{t} = \mu_d + \Phi x_{t-1} + \phi_{dc}\sigma_{t-1}u_{c,t} + \phi_{d}\sigma_{t-1}u_{d,t},
\end{equation}
where $u_{c,t},\varepsilon_{x,t},u_{d,t}\sim\mathcal{N}(0,1)$ independently and $\sigma^2_{t}$ is the conditional variance of consumption growth.
In Equations \eqref{eqn:lrr_deltac_x} and \eqref{eqn:lrr_dd}, we consider the parameter values
$\mu,\mu_d,\Phi,\phi_{dc}\in\mathbb{R}$, $\rho\in(-1,1)$ and $\phi_{x},\phi_{d}\in(0,\infty)$.

In the standard long-run risk model \citep{bansal2004risks,bansal_kiku_yaron_2012}, the conditional variance $\sigma^2_{t}$ is modelled as an autoregressive (AR) process which can take negative values.
To resolve this issue, we adopt the approach in \citet{fulop2020bayesian} that is based on the (non-negative) autoregressive gamma (ARG) process of \citet{gourieroux2006autoregressive}. The ARG transition for the conditional variance $\sigma^2_{t}$ is given by
\begin{equation}\label{argvol}
    \sigma_t^2 \sim \mathcal{G}(\phi_s + \varepsilon_{s,t}, c), \quad \varepsilon_{s,t} \sim \mathcal{P}(\nu \sigma^2_{t-1}/c),
\end{equation}
where $\mathcal{G}(a,b)$ denotes the gamma distribution with shape $a\in(0,\infty)$ and scale $b\in(0,\infty)$,
and $\mathcal{P}(r)$ denotes the Poisson distribution with rate $r\in(0,\infty)$.
In Equation \eqref{argvol}, $\nu\in(0,1)$ controls the persistence, $c\in(0,\infty)$ determines the scale, and the Feller condition $\phi_s\in(1,\infty)$ should be imposed to ensure positivity of the conditional variances.
By marginalizing over $\varepsilon_{s,t}$, it can be shown that the Markov transition of $\sigma^2_{t}$ is given by a non-central gamma distribution with
conditional mean $m_{\theta}^{(f)}(\sigma_{t-1}^2)=\phi_s c + \nu\sigma_{t-1}^2$
and conditional variance $v_{\theta}^{(f)}(\sigma_{t-1}^2)=\phi_s c^2 + 2c\nu\sigma_{t-1}^2$ \citep{gourieroux2006autoregressive,creal2017class}.
Moreover, the stationary distribution of the ARG process is $\mathcal{G}(\phi_s, c/(1-\nu))$ with the long-run mean $\bar{\sigma}^2 = \phi_s c/(1-\nu)$. \bigskip

\noindent {\emph{Model solution.}} \citet{bansal2004risks} first applied the log-linear approximation method of \citet{campbell1988stock} to solve the long-run risk model. Log-linearization has been widely employed to solve models with long-run risks \citep*[see, e.g.,][]{bansal_kiku_yaron_2012,bansal2016risks,beeler_campbell_2012,Schorfheide_Song_Yaron_2018}.
However, in a recent work, \citet{pohl2018higher} argued that log-linearization of models with long-run risks may generate large numerical errors when state variables are persistent. Hence we will solve the model using the collocation projection method \citep{judd1992projection} that can account for higher-order effects.

\subsection{Implementation details}

\noindent \emph{State-space model.} The $d=2$ latent state variables $s_{t}=(x_{t},\sigma_{t}^{2})\in\mathbb{S}=\mathbb{R}\times(0,\infty)$ are the persistent component of log-consumption growth $x_t$ and its conditional variance $\sigma_{t}^2$.
The time evolution of $(x_{t})_{t=0}^T$ is given by the AR process in Equation \eqref{eqn:lrr_deltac_x} with initialization at $x_0=0$.
We will write the associated Markov transition kernel as $f_{\theta}(dx_t|s_{t-1})=\mathcal{N}(x_t;\rho x_{t-1}, \phi_{x}^2\sigma_{t-1}^2)dx_t$.
The consumption volatility $(\sigma_{t}^2)_{t=0}^T$ follows an ARG process, whose Markov transition kernel is given by a non-central gamma distribution
\begin{align}\label{eqn:noncentral_gamma}
	f_{\theta}(d\sigma_t^2|\sigma_{t-1}^2) = \left(\frac{\sigma_{t}^{2}}{\nu\sigma_{t-1}^{2}}\right)^{(\phi_{s}-1)/2}\frac{1}{c}\exp\left(-\frac{(\sigma_{t}^{2}+\nu\sigma_{t-1}^{2})}{c}\right)I_{\phi_{s}-1}\left(\frac{2\sqrt{\nu\sigma_{t-1}^{2}\sigma_{t}^{2}}}{c}\right)d\sigma_t^2,
\end{align}
where $I_{\zeta}(x)=(x/2)^{\zeta}\sum_{i=0}^{\infty}(x^{2}/4)^{i}/\{i!\Gamma(\zeta+i+1)\}$ denotes a modified Bessel function of the first kind and $\Gamma$ is the gamma function. 
We will initialize $\sigma_0^2$ at the long-run mean $\bar{\sigma}^{2}=\phi_{s}c/(1-\nu)$. Hence the initial distribution $\mu_{\theta}(ds_0)=\delta_{s_0}(ds_0)$ is given by a Dirac measure at $s_{0}=(x_{0},\sigma_{0}^{2})=(0,\bar{\sigma}^{2})$ and the Markov transition kernel has the form
$f_{\theta}(ds_t | s_{t-1})=f_{\theta}(dx_t|s_{t-1})f_{\theta}(d\sigma_t^2|\sigma_{t-1}^2)$ with support on $\mathbb{S}$.

The $d_y=4$ observed variables $y_{t}=(\Delta \mathsf{c}_{t},\Delta \mathsf{d}_{t},\mathsf{m}_{t},\mathsf{r}_{t})\in\mathbb{Y}=\mathbb{R}^{4}$
are log-consumption growth $\Delta \mathsf{c}_{t}$, log-dividend growth $\Delta \mathsf{d}_{t}$, market return $\mathsf{m}_{t}$, and risk-free rate $\mathsf{r}_{t}$. The observation model for each $t=1,\ldots,T$ is
\begin{equation}
\begin{aligned}
	g_{\theta}(y_t |s_{t-1},s_t) &=
	\mathcal{N}(\Delta \mathsf{c}_{t} ; \mu + x_{t-1}, \sigma_{t-1}^2)\\
	&\times\mathcal{N}(\Delta \mathsf{d}_{t} ; \mu_{d} + \Phi x_{t-1} + \phi_{dc}(\Delta \mathsf{c}_{t}-\mu-x_{t-1}), \phi_{d}^2\sigma_{t-1}^2)\\
	&\times\mathcal{N}(\mathsf{m}_{t} ; \mathsf{M}_{\theta}(x_{t-1},\sigma_{t-1}^{2},x_{t},\sigma_{t}^{2},\Delta \mathsf{d}_{t}),\phi_{m}^2)\\
	&\times\mathcal{N}(\mathsf{r}_{t} ; \mathsf{R}_{\theta}(x_{t},\sigma_{t}^{2}),\phi_{r}^2),
\end{aligned}
\end{equation}
where the market return and risk-free rate are assumed to be measured with normally distributed errors, and $\phi_{m},\phi_{r}\in(0,\infty)$ denote the respective standard deviations of the measurement errors. The functions $\mathsf{M}_{\theta}:\mathbb{S}\times\mathbb{S}\times\mathbb{R}\rightarrow\mathbb{R}$ and $\mathsf{R}_{\theta}:\mathbb{S}\rightarrow\mathbb{R}$, which represent the market return and risk-free rate determined by the long-run risk model,
are linear when employing the log-linearization method, and highly non-linear under the collocation projection method.
They also depend on the preference parameters $(\delta,\gamma,\psi)$ introduced earlier.
Hence the $d_{\theta}=15$ unknown model parameters $\theta$ to be inferred are $\theta=(\delta,\gamma,\psi,\mu,\rho,\phi_{x},\bar{\sigma},\nu,\phi_s,\mu_{d},\Phi,\phi_{dc},\phi_{d},\phi_{m},\phi_{r})\in\Theta$.  As parameters in $\Theta$ do not necessarily satisfy the conditions of \citet{borovivcka2020necessary} that characterize existence and uniqueness of solutions to the long-run risk model, we shall impose these conditions as restrictions on our parameter space. \bigskip

\noindent \emph{Annealed controlled SMC.} We now discuss how to implement AC-SMC for the long-run risk model. We first specify the proposals $(q_t)_{t=0}^T$ and weight functions $(w_t)_{t=0}^T$ of the uncontrolled SMC.
As the initialization is deterministically given by $s_{0}=(x_{0},\sigma_{0}^{2})=(0,\bar{\sigma}^{2})$,
we set $q_0(ds_0|\theta)=\delta_{s_0}(ds_0)$.
Subsequently for $t=1,\ldots,T$, we adopt the AR transition in Equation \eqref{eqn:lrr_deltac_x} as our proposal transition for $x_t$,
i.e. set $q_t(dx_t|s_{t-1},\theta)=\mathcal{N}(x_t;\rho x_{t-1}, \phi_{x}^2\sigma_{t-1}^2)dx_t$.
Since it is difficult to work with the non-central gamma transition in Equation \eqref{eqn:noncentral_gamma} associated
with the ARG volatility process, we will approximate it with a log-normal transition in our proposal construction.
By matching the conditional mean $m_{\theta}^{(f)}(\sigma_{t-1}^2)$
and conditional variance $v_{\theta}^{(f)}(\sigma_{t-1}^2)$ of the ARG transition in Equation \eqref{argvol},
we define our proposal transition for $\sigma_t^2$ as
$q_t(d\sigma_t^2|\sigma_{t-1}^2,\theta)=\mathcal{LN}(\sigma_t^2;m_{\theta}^{(q)}(\sigma_{t-1}^2),
v_{\theta}^{(q)}(\sigma_{t-1}^2))d\sigma_t^2$, which denotes a log-normal distribution with parameters
\begin{equation}
	m_{\theta}^{(q)}(\sigma_{t-1}^2) = \log m_{\theta}^{(f)}(\sigma_{t-1}^2) - v_{\theta}^{(q)}(\sigma_{t-1}^2)/2,\quad
	v_{\theta}^{(q)}(\sigma_{t-1}^2) = \log\left( \frac{v_{\theta}^{(f)}(\sigma_{t-1}^2)}
	{m_{\theta}^{(f)}(\sigma_{t-1}^2)^2}+1\right).
\end{equation}
The proposal transition kernel for the state $s_{t}=(x_{t},\sigma_{t}^{2})$ is given by
$q_t(ds_t|s_{t-1},\theta)= q_t(dx_t|s_{t-1},\theta) q_t(d\sigma_t^2|\sigma_{t-1}^2,\theta)$.
Note that our choice of proposal is not temperature dependent.
For any inverse temperature $\lambda\in[0,1]$, we specify the weight functions as
\begin{equation}
	w_0(s_0;\theta,\lambda) = 1, \quad w_t(s_{t-1}, s_t;\theta,\lambda) = \frac{f_{\theta}(s_t|s_{t-1})
	g_{\theta}(y_t|s_{t-1},s_t)^{\lambda}}{q_t(s_t|s_{t-1},\theta)}, \quad t=1,\ldots,T,
\end{equation}
which satisfy the requirement in Equation \eqref{eqn:weights_requirement}. 

To learn policies, we employ the following function class
\begin{equation}\label{eqn:lrr_quadratic_functionclass}
	\mathbb{F}_t = \left\lbrace \psi_t(s_{t-1},s_t)=\exp(-Q(\ell(s_{t-1}),\ell(s_t);\beta_t)) : |\beta_t|^2\leq \xi\right\rbrace,\quad t=1,\ldots,T,
\end{equation}
where $\ell:\mathbb{S}\rightarrow\mathbb{R}\times\mathbb{R}$ is defined by a log-transformation
of the volatility $\ell(x,\sigma^2)=(x,\log\sigma^2)$, and $Q$ is the quadratic function
introduced in Equation \eqref{eqn:quadratic_function} with coefficients $\beta_t=(A_t,b_t,C_t,D_t,e_t,f_t)\in\mathbb{R}^{d\times d}_{\mathrm{sym}}\times\mathbb{R}^{d}\times \mathbb{R}^{d\times d}\times \mathbb{R}_\mathrm{sym}^{d\times d}\times\mathbb{R}^d\times\mathbb{R}$. 
Note that we do not have to learn the policy at time $t=0$ due to the deterministic initialization.
Policy initialization in AC-SMC amounts to setting all coefficients $(\beta_t)_{t=1}^T$ as zero.
We will fit functions from Equation \eqref{eqn:lrr_quadratic_functionclass} using ridge regression
with a pre-specified and suitably large shrinkage $\xi\in(0,\infty)$.

Given a fitted policy $\psi=(\psi_t)_{t=1}^T$ with estimated coefficients $(\beta_t)_{t=1}^T$,
the induced proposal transitions $(q_t^{\psi})_{t=1}^T$ can be sampled exactly, and the
conditional expectations appearing in the weight functions $(w_t^{\psi})_{t=1}^T$ can be evaluated analytically;
see, Appendix \ref{appendix:lrr_csmc} for precise expressions.
If $\phi=(\phi_t)_{t=1}^T$ is an approximation of the optimal refinement of $\psi$ (at a desired inverse temperature)
of the form in Equation \eqref{eqn:lrr_quadratic_functionclass} with estimated coefficients $(\tilde{\beta}_t)_{t=1}^T$,
then the refined policy $\psi\cdot\phi$ also has the same form with coefficients given by the update $\beta_t+\tilde{\beta}_t$ for all $t=1,\ldots,T$.

\subsection{Proposal transitions and weight functions of controlled SMC}\label{appendix:lrr_csmc}
Let $\eta_{\theta}(s_{t-1})=(\rho x_{t-1}, m_{\theta}^{(q)}(\sigma_{t-1}^2))$ and
$\Sigma_{\theta}(s_{t-1})=\mathrm{diag}(\phi_{x}^2\sigma_{t-1}^2, v_{\theta}^{(q)}(\sigma_{t-1}^2))$ denote
the mean vector and covariance matrix of the proposal transition in the transformed variable $z_t=\ell(s_t)$.
Suppose we have a policy $\psi=(\psi_t)_{t=1}^T$ of the form in Equation \eqref{eqn:lrr_quadratic_functionclass}
with coefficients $(\beta_t)_{t=1}^T=(A_t,b_t,C_t,D_t,e_t,f_t)_{t=1}^T$.
For each $t=1,\ldots,T$, assuming that $\Sigma_{\theta}(s_{t-1})^{-1}+2A_t\succ 0$ for all $s_{t-1}\in\mathbb{S}$\footnote{From our implementation, we find that these constraints are usually satisfied.
Otherwise, we impose them by modifying the coefficients update to $\beta_t+\kappa_t\tilde{\beta}_t$,
for some suitably small learning rate $\kappa_t\in(0,1)$.},
the new proposal transition can be written as
\begin{equation}
	q_t^{\psi}(ds_t|s_{t-1},\theta) = \mathcal{N}\left(z_t; K_t(s_{t-1};\theta)m_t(s_{t-1};\theta),
	K_t(s_{t-1};\theta)\right)dz_t\, \delta_{\ell^{-1}(z_t)}(ds_t),
\end{equation}
where $m_t(s_{t-1};\theta)=\Sigma_{\theta}(s_{t-1})^{-1}\eta_{\theta}(s_{t-1})-b_t-C_t\ell(s_{t-1})$,
$K_t(s_{t-1};\theta)=(\Sigma_{\theta}(s_{t-1})^{-1}+2A_t)^{-1}$ and $\ell^{-1}:\mathbb{R}\times\mathbb{R}\rightarrow\mathbb{S}$
denotes the inverse transformation of $\ell$.
Conditional expectations can be evaluated using
\begin{align}
	&q_t(\psi_t|s_{t-1},\theta) =\det\left(K_t(s_{t-1};\theta)\right)^{1/2}\det\left(\Sigma_{\theta}(s_{t-1})\right)^{-1/2}\notag\\
	&\times \exp\left(\frac{1}{2}(m_t^{\top}K_tm_t)(s_{t-1};\theta)-
	\frac{1}{2}( \eta_{\theta}^{\top} \Sigma_{\theta}^{-1} \eta_{\theta}) (s_{t-1}) -\check{Q}(\ell(s_{t-1});\check{\beta}_t) \right),
\end{align}
where $\check{Q}(z;\check{\beta})=z^{\top} D z + z^{\top}e + f$ is a quadratic function that depends on the coefficients
$\check{\beta}=(D,e,f)\in\mathbb{R}_\mathrm{sym}^{d\times d}\times\mathbb{R}^d\times\mathbb{R}$.

\subsection{Real data application}

We consider the U.S. real quarterly data on consumption, market dividends, market returns, and risk-free rates, ranging from 1947:Q2 to 2019:Q3. We refer readers to \citet{fulop2020bayesian} for details on how the data are constructed; Figure \ref{fig:datalrr} displays these time series.
\begin{figure}[!t]
\begin{center}
\includegraphics[width=1\textwidth, height=12cm]{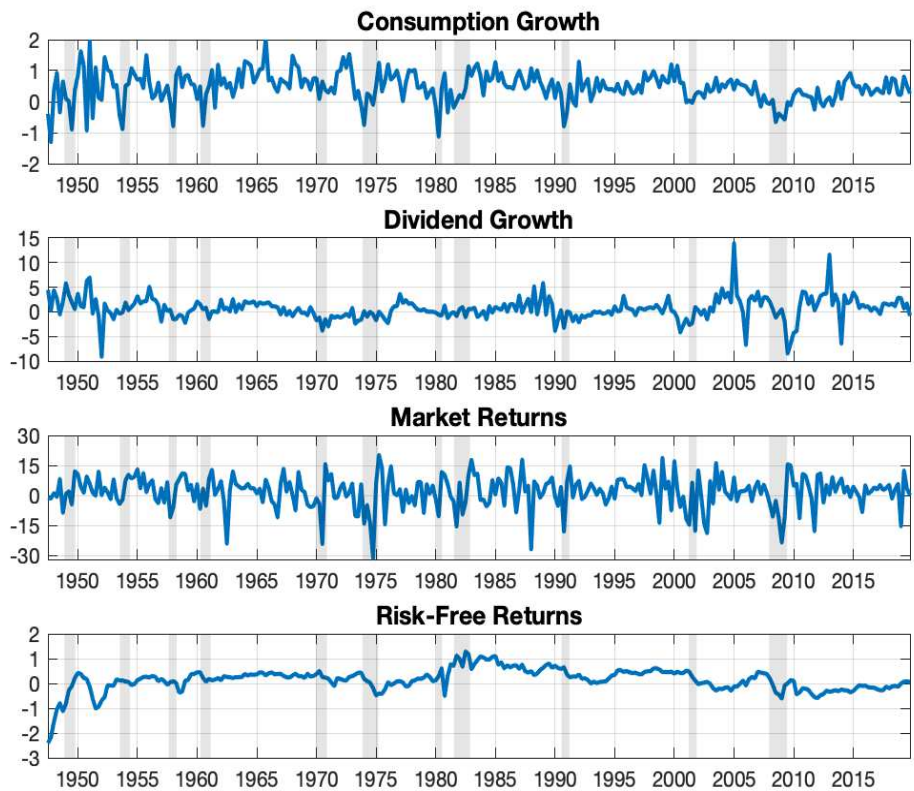}
\vspace{-1cm}
\caption{\textbf{\protect\small {Time Series Data}}}\label{fig:datalrr}
\end{center}
{\footnotesize\emph{Note:} Time series data for model estimation include consumption growth rates, dividend growth rates, market returns, and risk-free returns, all of which are in percentages. The data are in quarterly frequency ranging from 1947:Q2 to 2019:Q3, with a total of 290 quarters. The regions shaded in grey correspond to NBER recession periods.}
\end{figure}

We initialize our adaptive SMC$^2$ algorithm from a prior distribution that is component-wise independent, with marginal distributions that are similar to those used in \citet{fulop2020bayesian}. The left column of Table \ref{estimateslrr} details the exact distributional form, the support, and the hyper-parameters of the prior distribution for each model parameter.
The algorithmic settings of our adaptive SMC$^2$ are the same as those described in Section \ref{sec:dsge_realdata}. 
The right column of Table \ref{estimateslrr} reports the estimated posterior means, standard deviations, and (5, 95)\%-quantiles of model parameters.
We find that these parameter estimates are quite similar to those obtained by \citet{fulop2020bayesian} with a different SMC$^2$ algorithm,
based on the tempered likelihood approach of \citet{duan2015density} with likelihood estimators from a particle filter that employs unscented Kalman filter proposal distributions. 
\begin{table}[!t]
\begin{center}
\caption{\textbf{{\protect\footnotesize{Prior Distributions and Posterior Estimates of Parameters}}}}\label{estimateslrr}
{\onehalfspacing\footnotesize
\begin{tabular*}{\textwidth}{@{\extracolsep{\fill}}cllcccc}
\toprule
                    &  \multicolumn{2}{c}{Priors}  & \multicolumn{4}{c}{Posteriors}    \\
                       \cline{2-3}                                 \cline{4-7}

Parameter       &  Support  &  Distribution              & Mean  & Std     &  5\%  &  95\%    \\
\hline
$\delta$         & $(0, 1)$                   & $\mathcal{U}(0, 1)$                   &0.9988 & 0.0004 & 0.9980 & 0.9993  \\
$\gamma$     & $(0,\infty)$  & $\mathcal{TN}(8, 2)$              &7.4341  &  1.4268  &  5.1328  &  9.7995  \\
$\psi$             & $(0,\infty)$ & $\mathcal{TN}(2, 0.5)$            &1.3457  &  0.0892 &   1.2105  & 1.5009 \\
$\mu$            & $\mathbb{R}$      & $\mathcal{N}(\bar{\mu}, 10^{-5})$ &0.0047  &  0.0000  &  0.0047  & 0.0047 \\
$\rho$            & $(-1, 1)$                  &$\mathcal{U}(-1, 1)$                   & 0.9846 &   0.0035  &  0.9784 &   0.9895  \\
$\phi_x$         & $(0,\infty)$  & $\mathcal{TN}(0.10, 0.20)$     &0.2066  &  0.0300  &  0.1582  & 0.2564\\
$\bar{\sigma}$& $(0,\infty)$ & $\mathcal{TN}(0.004, 0.005)$ &0.0047 &   0.0001 &   0.0044  &  0.0049\\
$\nu$             & (-1, 1)                   & $\mathcal{U}(-1, 1)$                   &0.7161 &   0.0403  &  0.6447  &  0.7798 \\
$\phi_s$        & $(0,\infty)$   & $\mathcal{TN}(2, 4)$                 &1.8603 &   0.2763  &  1.3912  &  2.3104\\
$\mu_d$        &$\mathbb{R}$        & $\mathcal{N}(\bar{\mu}_d, 10^{-5})$&0.0064 &   0.0000  &  0.0064  &  0.0064 \\
$\Phi$            & $\mathbb{R}$       & $\mathcal{N}(3, 6)$                     &0.9495  &  0.0863  &  0.8228  &  1.0995 \\
$\phi_{dc}$    & $\mathbb{R}$       & $\mathcal{N}(3, 6)$                     &0.6478 &   0.2159  &  0.3491  &  1.0413 \\
$\phi_d$        & $(0,\infty)$    &$\mathcal{TN}(5, 6)$                 & 4.6717 &   0.2506 &   4.2618  &  5.0893  \\
$\phi_{m}$& $(0,\infty)$    & $\mathcal{TN}(0.03, 0.10)$     & 0.0811 &   0.0031  &  0.0762  &  0.0864 \\
$\phi_{r}$  & $(0,\infty)$    &$\mathcal{TN}(0.003, 0.01)$     & 0.0025 &   0.0001  &  0.0024 &  0.0027  \\
\hline
Model Evidence           &                                 &                               &  \multicolumn{4}{c}{$3.3895\times10^{3}$} \\
\bottomrule
\end{tabular*}}
\end{center}
{\footnotesize\emph{Note:} This table details the prior distributions and posterior estimates of model parameters.
The left column provides the exact distributional form, the support, and the hyper-parameters of the prior distribution for each model parameter.
$\mathcal{N}$ stands for the normal distribution, $\mathcal{TN}$ the truncated normal distribution, and $\mathcal{U}$ the uniform distribution. The model is estimated using our adaptive SMC$^2$ algorithm with $P=1,024$ parameter particles and $N=1,024$ state particles within AC-SMC.
The right column reports the posterior means, standard deviations, and (5, 95)\%-quantiles of model parameters with the standard deviations of  measurement errors treated as free parameters.}
\end{table}

However, we find that our adaptive SMC$^2$ algorithm has higher and more stable acceptance rates over the annealing iterations.
Figure \ref{fig:acceptratelrr} displays the acceptance rate of the final PMMH move at each iteration, which ranges around 0.48 to 0.58. 
While the acceptance rate in \citet{fulop2020bayesian} is comparable at the initial stages of annealing, it decreases over the annealing iterations to around 20\% as the algorithm terminates. 
Higher and more stable acceptance rates in Figure \ref{fig:acceptratelrr} indicate that AC-SMC is much more efficient than the particle filter employed in \citet{fulop2020bayesian} in terms of likelihood estimation. 
\begin{figure}[t]
\begin{center}
\includegraphics[scale=0.6]{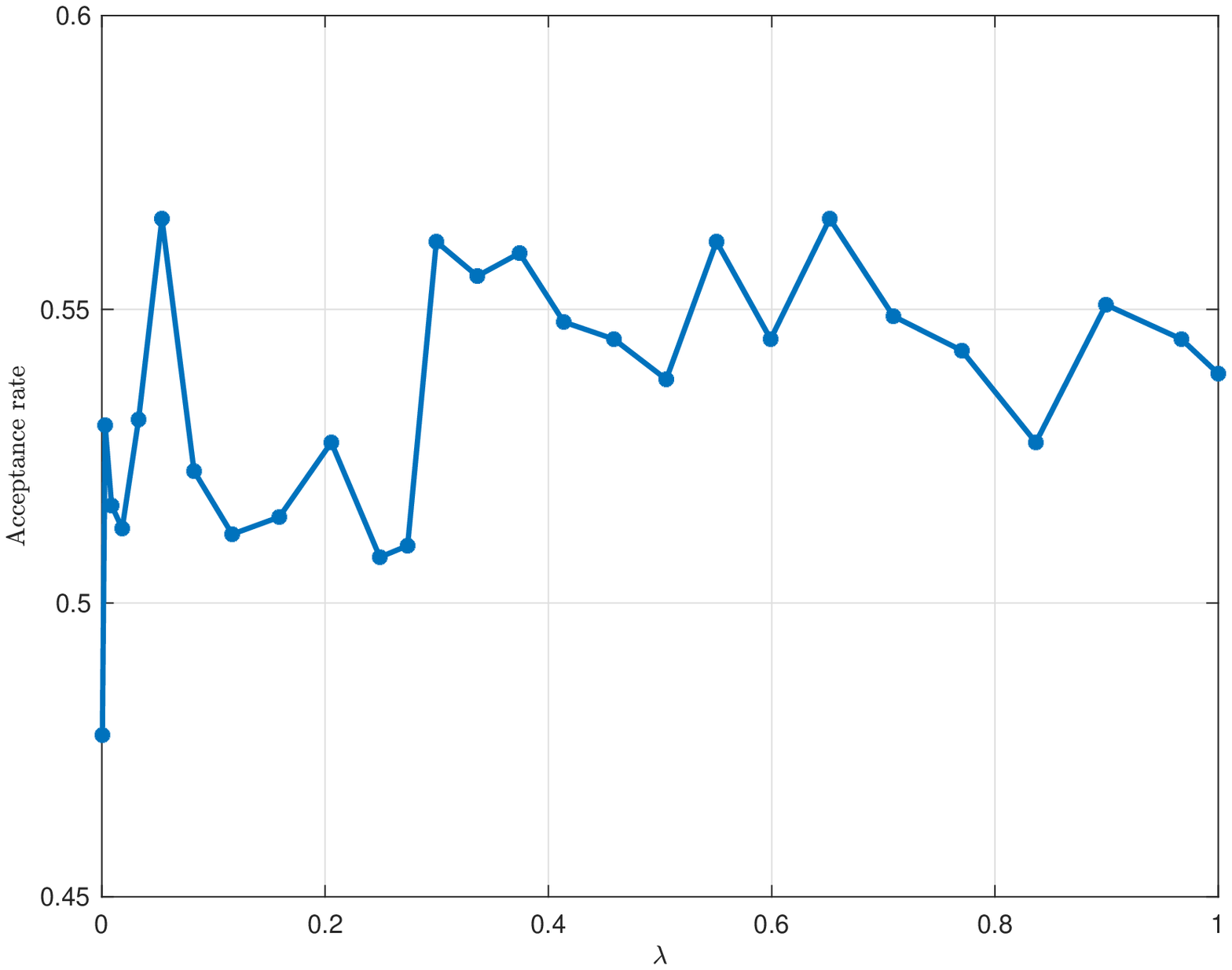}
\end{center}
\vspace{-0.5cm}
\caption{\textbf{\protect\small {Acceptance Rates}}}\label{fig:acceptratelrr}
\end{figure}

\end{document}